\documentclass[11pt]{article}
\usepackage{fullpage}
\usepackage{dsfont}
\usepackage{mathrsfs}
\usepackage{titling}

%\pretitle{\begin{center}\huge}
%\posttitle{\par\end{center}\vspace{\baselineskip}}
%\preauthor{\normalfont\normalsize\begin{center}\begin{tabular}[t]{c}}
%\postauthor{\end{tabular}\end{center}\vspace{\baselineskip}}

\def\conf{0}
\def\withcolors{0}
\def\stoc{0}

\usepackage{etex}

\usepackage[margin=2.54cm]{geometry}
\usepackage[dvipsnames,usenames]{color}
\usepackage{amsfonts,amsmath,amssymb,amsthm}

\usepackage[margin=10pt]{subfig}

\usepackage{paralist}
\usepackage{algorithmic,algorithm}
\usepackage{bm}
\usepackage{xspace}
\usepackage{xspace,prettyref}
\usepackage{color}
\usepackage{dsfont}

%\input{../pream2}

%\usepackage{mathbbol}
%\usepackage[firstinits]{biblatex}
%\usepackage{MnSymbol}
%% For hyperlinks. Should always be the last package.
\usepackage[colorlinks,urlcolor=blue,citecolor=blue,linkcolor=blue]{hyperref}

\newcommand{\ignore}[1]{}

\newtheorem{theorem}{Theorem}

\newtheorem{lemma}[theorem]{Lemma}
\newtheorem{claim}[theorem]{Claim}

\newtheorem{corollary}[theorem]{Corollary}

\newtheorem{definition}[theorem]{Definition}

\theoremstyle{definition}

\newcommand{\EX}{{\rm Exp}}

\renewcommand{\th}{^\textrm{th}}

\def\eqdef{~\triangleq~}
\def\eps{\epsilon}

\ifnum\stoc=0
\newenvironment{proofof}[1]{\smallskip\noindent{\bf Proof of #1:}}%
        {\hspace*{\fill}$\Box$\par}
\else
\newenvironment{proofof}[1]{\smallskip\noindent{\scshape Proof of #1:}}%
        {\hspace*{\fill}$\qed$\par}
\fi

%% Hypercube related macros

\newcommand{\om}{\overline{m}}

%\newcommand{\Tr}{\textrm{Tr}}

%% Calligraphic letters

%% Bold letters

%% Hyper-linked References
\newcommand{\Sec}[1]{\hyperref[sec:#1]{\S\ref*{sec:#1}}} %section
\newcommand{\Eqn}[1]{\hyperref[eq:#1]{(\ref*{eq:#1})}} %equation
\newcommand{\Fig}[1]{\hyperref[fig:#1]{Fig.\,\ref*{fig:#1}}} %figure
\newcommand{\Tab}[1]{\hyperref[tab:#1]{Tab.\,\ref*{tab:#1}}} %table
\newcommand{\Thm}[1]{\hyperref[thm:#1]{Theorem\,\ref*{thm:#1}}} %theorem
\newcommand{\Lem}[1]{\hyperref[lem:#1]{Lemma\,\ref*{lem:#1}}} %lemma
\newcommand{\Prop}[1]{\hyperref[prop:#1]{Prop.~\ref*{prop:#1}}} %property
\newcommand{\Cor}[1]{\hyperref[cor:#1]{Corollary~\ref*{cor:#1}}} %corollary
\newcommand{\Def}[1]{\hyperref[def:#1]{Definition~\ref*{def:#1}}} %definition
\newcommand{\Alg}[1]{\hyperref[alg:#1]{Alg.~\ref*{alg:#1}}} %algorithm
\newcommand{\Ex}[1]{\hyperref[ex:#1]{Ex.~\ref*{ex:#1}}} %example
\newcommand{\Clm}[1]{\hyperref[clm:#1]{Claim~\ref*{clm:#1}}} %claim
\newcommand{\Assum}[1]{\hyperref[assump:#1]{Assumption~\ref*{assump:#1}}} %assumption

\def\poly{{\rm poly}}

\newcommand{\mnote}[1]{\marginpar{\tiny\bf
		\begin{minipage}[t]{0.5in}
			\raggedright #1
		\end{minipage}}}

\newcommand\numberthis{\addtocounter{equation}{1}\tag{\theequation}}

%% pream for lower bound

%\newcommand{\hmSsig}{\widehat{\mS}_{sig}}

%\newcommand{\assumM}{m \leq \overline{m}\leq c_m \cdot m}

\renewcommand{\Pr}{\mathrm{Pr}}

\newcommand{\parmath}[1]{
\ifnum\stoc=0
\boldmath{#1}
\else
#1
\fi
}

\newcommand{\confeqn}[1]{
	\ifnum\stoc=0
	\[#1\]
	\else
	$#1$
	\fi
}

\newcommand{\confFigure}[2]
{
	\ifnum\stoc=1
	\begin{figure}[htb!]
		\fbox{
			\begin{minipage}{0.465\textwidth}
				\begin{raggedright}
					#1					
				\end{raggedright}
			\end{minipage}
		}
		\caption{#2}
	\end{figure}
	\else
	\begin{figure}[htb!]
		\fbox{
			\begin{minipage}{0.9\textwidth}
				\begin{raggedright}
					#1					
				\end{raggedright}
			\end{minipage}
		}
		\caption{#2}
	\end{figure}
	
	\fi
}

%%%%%%%%%%%

\def\nofigures{0}

\ifnum\nofigures=0
\usepackage{pstricks,tikz}
\usetikzlibrary{decorations.markings}
\usetikzlibrary{patterns}
\pgfdeclarelayer{background}
\pgfsetlayers{background,main}
\fi

\ifnum\withcolors=1

\newcommand{\tcom}[1]{{{#1}}}
\newcommand{\dcom}[1]{{{#1}}}
\newcommand{\scom}[1]{{{#1}}}
\newcommand{\dddchange}[1]{{\color{cyan}{#1}}}
\newcommand{\talya}[1]{{{#1}}}
\newcommand{\tchange}[1]{{\color{red}{#1}}}
\newcommand{\schange}[1]{{\color{blue}{#1}}}
\newcommand{\Sesh}[1]{{\color{red} Sesh says: #1}}

\newcommand{\Talya}[1]{{#1}}
\newcommand{\strikeout}[1]{{\color{gray}  #1}}

\else
\newcommand{\scom}[1]{{{#1}}}

\newcommand{\schange}[1]{{{#1}}}

\newcommand{\tchange}[1]{{{#1}}}
\newcommand{\dcom}[1]{{{#1}}}
\newcommand{\dddchange}[1]{{{#1}}}

\newcommand{\tcom}[1]{{{#1}}}
\renewcommand{\mnote}[1]{}
\newcommand{\talya}[1]{{{#1}}}
\newcommand{\Sesh}[1]{{#1}}

\newcommand{\Talya}[1]{{#1}}
\newcommand{\strikeout}[1]{{}}

\fi

\usepackage{graphicx,amsfonts,amsmath,amssymb,epsfig,color,multicol,pstricks,tikz, pgf}
\usetikzlibrary{decorations.markings}
\usetikzlibrary{patterns}
\usetikzlibrary{calc, intersections}
\usepackage{pgf,tikz}
%\usetikzlibrary{arrows}
%\usepackage{pst-eucl}
\usetikzlibrary{calc}
\usepackage{tkz-euclide}
\usepackage{pgfplots}
\usetkzobj{all}

\pgfdeclarelayer{background}
\pgfsetlayers{background,main}

%\begin{document}

%	\maketitle

%-------------------------------------------------------------------------
% dotted pattern for path
%-------------------------------------------------------------------------
\tikzset{dotted pattern/.style args={#1}{
		postaction=decorate,
		decoration={
			markings,
			mark=between positions 0.25 and 0.75 step 0.25 with {
				\fill[radius=#1] (0,0) circle;
			}
		}
	},
	dotted pattern/.default={1pt},
}
%-------------------------------------------------------------------------
%-------------------------------------------------------------------------

%-------------------------------------------------------------------------
% tikz styles
%-------------------------------------------------------------------------
\tikzstyle{client}=[draw, circle, minimum size=1.5ex, inner sep=0,fill=black]
%-------------------------------------------------------------------------
%-------------------------------------------------------------------------
	
\newcommand{\drawBasic}[2]{
		\foreach \i in {1,2,...,3} {
			\node[client] (t\i) at (\i,3) {} ;
			\node[client] (b\i) at (\i,-3) {} ;
		}
		\foreach \i in {1,2,...,3} {
			\node[client] (l\i) at (-4,\i-2) {} ;
			\node[client] (r\i) at (-2,\i-2) {} ;
		}
		
		\node[] (label) at (-3,-2.1) {\large{#1}} ;
		\node[text width=3cm,align=center] (label) at (-3,-2.9) {{#2}} ;
		\node[label={[xshift=-.3cm, yshift=0.1cm]$A_1$}] (label) at (1,3) {};
		\node[label={[xshift=-.3cm, yshift=-.3cm]$A_2$}] (label) at (1,-3.5) {};

		\draw [thick, decoration={ brace, raise=0.5cm,amplitude=10pt}, decorate] (3,-3) -- (1,-3) node [midway, below, yshift=-0.8cm, align=center] {$\frac{\sqrt m}{k-2}$} ;
				
		\foreach \i in {1,2,3} {
			\foreach \j in {1,2,3} {
				\foreach \x in {l,r,t,b} {
					\foreach \y in {l,r,t,b} {
						\ifthenelse {\equal{\x}{\y} }	{} {
							\draw[ultra thin, gray] (\x\i) -- (\y\j);								
						};
					}
				}
			}
		}
		\path[dotted pattern=0.5pt] (t2) -- (t3);
		\path[dotted pattern=0.5pt] (b2) -- (b3);
		\path[dotted pattern=0.5pt] (l1) -- (l2);
		\path[dotted pattern=0.5pt] (r1) -- (r2);
			
}

	\newcommand{\clk}{C_k}
\newcommand{\sclk}{c_k}
\newcommand{\onk}{\overline{C}_k}
\newcommand{\oa}{\overline{\va}}
\newcommand{\ta}{\widetilde{\va}}
\newcommand{\ha}{\widehat{\va}}

\renewcommand{\emph}[1]{\textup{\texttt{#1}}}

\newcommand{\va}{v}

\newcommand{\mA}{\mathcal{A}}
\newcommand{\mC}{\mathcal{C}}

\newcommand{\ap}{\alpha_{P}}
\newcommand{\dm}{d}

\newcommand{\hnk}{\widehat{C}_k}
\newcommand{\shnk}{\widehat{c}_k}

\newcommand{\oeps}{{\overline{\eps}}}
\newcommand{\odelta}{{\overline{\delta}}}

\newcommand{\thres}{2\sqrt{\om}}
\newcommand{\mrange}{[(1-\oeps)m,m]}
\newcommand{\krange}{[\clk/4,\clk]}

\newcommand{\kthres}{k\cdot(50\onk)^{1-1/k}/\oeps^{1/k}}
\newcommand{\sett}{\frac{10n\cdot \ln (2n/\gamma^{\hspace{1pt}2})}{(\oeps/k)^2 \cdot \thres}}
\newcommand{\sets}{\frac{700\cdot k\cdot n\cdot\ln(1/\odelta)}{\oeps^{\hspace{1pt}2+{1/k}}\cdot  \onk^{1/k}}}
\newcommand{\setq}{\frac{m(S)\cdot (\thres)^{k-2}}{(1-\oeps)^{3}\cdot (k-2)!\cdot \onk \cdot (s/n)}\cdot \frac{10\ln(1/\odelta)}{\oeps^{\hspace{1pt}2}}}
\newcommand{\mk}{4\om/(\oeps \onk)^{1/k}}
\newcommand{\setgamma}{\min\{1/(4\om^{k/2}),\odelta\} }

\newcommand{\qpopular}{\frac{d(u)\cdot (\thres)^{k-2}}{(k-2)!\cdot (\kthres)}\cdot \frac{15\ln(n/\odelta)}{\oeps^{\hspace{1pt}2}}}

\newcommand{\invokeA}{\mA\left(\oa,\eps,\delta, \overrightarrow{V}\right)}
\newcommand{\invokeAp}{\mA\left(\oa,\eps,\delta, \overrightarrow{V}\right)}
\newcommand{\ert}[1]{E_{rt}\left(#1\right)}
\newcommand{\Pip}{P_{is}}
\newcommand{\link}[1]{\hyperref[#1]{\color{black}\textsf{#1}}}

\newcommand{\popular}{sociable}
\newcommand{\Popular}{Sociable}
\newcommand{\unpopular}{shy}

% IF CHANGE TO WORDS STARTING WITH a/i/e/o/u, should search to change a to an.

\begin{document}

\begin{titlepage}

	\title{On Approximating the Number of $k$-cliques in Sublinear Time}

	\author{
		Talya Eden
		\thanks{Tel Aviv University, \textit{ talyaa01@gmail.com}. This research was partially supported by a grant from the Blavatnik fund. The author is grateful to the Azrieli Foundation for the award of an Azrieli Fellowship.}
		\and  Dana Ron
		\thanks{Tel Aviv University, \textit {danaron@tau.ac.il}.
		This research was partially supported by the Israel Science Foundation grant No. 671/13 and by a grant from the Blavatnik fund.}
	\and  C. Seshadhri
		\thanks{University of California, Santa Cruz, \textit{ sesh@ucsc.edu}}
	}

	\date{}
	\maketitle
	\begin{abstract}
		We study the problem of approximating the number of $k$-cliques in a graph when given query access to the graph.
 We consider the  standard query model for general graphs via
(1) degree queries, (2) neighbor queries and (3) pair queries.
Let $n$ denote the number of vertices in the graph, $m$ the number of edges, and $\clk$ the number of $k$-cliques.
We design an algorithm that outputs a $(1+\eps)$-approximation (with high probability) for $\clk$, whose expected query complexity and running time are
$O\left(\frac{n}{\clk^{1/k}}+\frac{m^{k/2}}{\clk} \right)\poly(\log n, 1/\eps,k)$.
%  $O\left(\frac{n}{\clk^{1/k}}+\min\left\{m,\frac{m^{k/2}}{\clk}\right\} \right)\poly(\log n, 1/\eps,k)$.
%%
Hence, the complexity of the algorithm is sublinear in the size of the graph for $\clk = \omega(m^{k/2-1})$.
Furthermore, %we prove 
%\tchange{there is a lower bound showing that }
the query complexity of our algorithm is essentially optimal
(up to the dependence on $\log n$, $1/\eps$ and $k$).
%
\iffalse
\sloppy
We present a sublinear-time algorithm for approximating the number of $k$-cliques in an input graph.
  We consider the %standard general graphs access model
 standard query model for general graphs via
(1) degree queries, (2) neighbor queries and (3) pair queries.
Consider a graph with $n$ vertices, $m$ edges, and $\clk$
$k$-cliques. We design an algorithm that outputs a $(1+\eps)$-approximation (with high probability) for $\clk$ with expected query complexity and running time
  $O\left(\frac{n}{\clk^{1/k}}+\frac{m^{k/2}}{\clk} \right)\poly(\log n, 1/\eps,k)$.
  Furthermore, we prove a lower bound showing that our algorithm is essentially optimal
(up to the dependence on $\log n$, $1/\eps$ and $k$).
\fi
%

The previous results in this vein are by Feige (SICOMP 06) and by Goldreich \dcom{and} Ron (RSA 08)
for edge counting ($k=2$) and by \dcom{Eden et al.} %-Levi-Ron-Seshadhri
(FOCS 2015)
for triangle counting ($k=3$). Our result matches the complexities of
these results.

\scom{
	\Talya {The previous result by Eden et al. hinges on a certain amortization technique that works 
% only 
for triangle counting,
	\iffalse
This trick features in the triangle enumeration algorithm of Chiba and Nishizeki (SICOMP 85), which
\fi
and does not generalize %for larger cliques.
\dddchange{to all $k$.}
%We avoid this roadblock by designing a new sublinear algorithm to sample each $k$-clique
We %achieve
\dcom{obtain} a general algorithm that works for any $k\geq 3$ by designing a procedure that} samples each $k$-clique
incident to a given set $S$ of vertices with approximately equal probability.
The primary difficulty is in finding cliques incident to purely high-degree vertices,
since random sampling within neighbors has a low success probability.
This is achieved by an algorithm that samples uniform random
high degree vertices and
a careful tradeoff between estimating cliques incident purely to high-degree vertices and those
that include a low-degree vertex.}

	\end{abstract}

\thispagestyle{empty}

\end{titlepage}
	
%	\tableofcontents
\setcounter{page}{1}
	\newpage

\section{Introduction}\label{sec:intro}

Counting the number of $k$-cliques in a graph is a classic problem in theoretical
computer science, %. The problem is significant for constant values of $k$,
and the special case of $k=3$ (triangle counting)
is itself an important problem. In practice, clique counting has received
much attention due to its significance for analyzing real-world graphs~\cite{HoLe70,Co88,portes2000social,EcMo02,milo2002network,Burt04,becchetti2008efficient,foucault2010friend,BerryHLP11,SeKoPi11,JRT12,ELS13,Ts15,FFF15}.
From a theoretical standpoint, the best exact
algorithms use matrix multiplication~\cite{NP85,EG04}. Better bounds for sparse graphs can
be obtained by combinatorial methods~\cite{ChNi85,V09}.
% Nesetril and Poljak show a $O(n^{\omega k/3})$ algorithm
% for $k$ divisible $3$, where $\omega$ is the matrix multiplication
% exponent. (For general $k$, Eisenbrand and Grandoni
% give slightly worse bounds~\cite{EG04}.) Chiba and Nishizeki
% give better combinatorial algorithms for sparse graphs
% with a running time of $O(m^{k/2})$~\cite{ChNi85}.

There is a long line of  algorithms for approximately counting the number of cliques (especially
triangles)  in various computational models, including distributed and streaming settings ~\cite{ChNi85,ScWa05,ScWa05-2,jowhari2005new,tsourakakis2008fast,tsourakakis2009doulion,avron2010counting,kolountzakis2012efficient,chu2011triangle,SuVa11,tsourakakis2011triangle,AhGuMc12,KaMeSaSu12,SePiKo13,TaPaTi13}.
All these algorithms begin by reading their entire input graph, \talya{and hence must run in at least linear time}.
Recently, Eden et al.~\cite{ELRS} gave the first sublinear-time algorithm for
triangle counting. The query model
used is the standard model for sublinear algorithms
on general graphs \schange{(refer to Chapter 10 of Goldreich's book~\cite{G17-book})}. We assume
that the vertex set is $V = [n]$. Algorithms can make the following
queries.
(1) Degree queries: given $v \in V$, get the degree $d(v)$.
(2) Neighbor queries, given $v \in V$ and $i \leq d(v)$
get the $i\th$ neighbor of $v$.
(3) Pair queries: given vertices $u,v$, determine if $(u,v)$
is an edge.

We show that  there is a sublinear-time algorithm for approximating
the number of $k$-cliques in this model, for any given $k$, subsuming the
result of~\cite{ELRS} for $k=3$.

% though there has been focus on non-trivial combinatorial algorithms~\cite{V09}.

\subsection{Results}
Let $G=(V,E)$ be a graph over $n$ vertices and $m$ edges, where we view edges as ordered pairs, so that $m$ equals the sum of the degrees of all vertices.
Let $\clk$ denote that number of cliques of size $k$ in $G$.

% The algorithm performs degree queries, (random) neighbor queries and vertex-pair queries.
%We refer to such an algorithm as a ``$(1+\eps)$-approximation
%for $\clk$".
 Our main theorem follows.

\begin{theorem} \label{thm:main}
There exists an algorithm
that, given $n,\;k$, an approximation parameter $0<\eps<1$, and query access to a graph $G$,
outputs an estimate $\hnk$, such that with high constant probability (over the randomness of the algorithm), $$(1-\eps)\cdot \clk\leq \hnk\leq (1+\eps)\cdot \clk.$$
The expected
query complexity of the algorithm is
	\[O\left(\frac{n}{\clk^{1/k}}+\min\left\{\frac{m^{k/2}}{ \clk}\dcom{,m}\right\}\right)\cdot\poly(\log n, 1/\eps, k)  \;,\]
and the expected running
time is
$O\Big(\frac{n}{\clk^{1/k}}+\frac{m^{k/2}}{ \clk}\Big)\cdot\poly(\log n, 1/\eps, k)$.
\end{theorem}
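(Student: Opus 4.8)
\medskip
\noindent\textbf{Proof strategy (sketch).}
The plan is to reduce to a setting with known parameters and then estimate $\clk$ as a sum of two pieces obtained by splitting the clique population according to a degree threshold. As preprocessing I would compute a constant-factor estimate $\oM$ of $m$ using the edge-estimation algorithm of Goldreich--Ron/Seshadhri (cost $O(n/\sqrt m)\cdot\poly(\log n)$, within the claimed budget), and then run a geometric search for $\clk$ over guesses $g=n^k,\,n^k/2,\,n^k/4,\dots$: for each $g$, run the core estimator below with all sample sizes calibrated to $g$ (so that its running time is $O(n/g^{1/k}+m^{k/2}/g)\cdot\poly(\log n,1/\eps,k)$), and output the first $g$ whose returned value is within a constant factor of $g$. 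A standard argument shows this stabilizes at some $g=\Theta(\clk)$ and that the total cost is dominated, up to $\poly(k)$, by the final (smallest) guess. In addition, as soon as the current guess makes $m^{k/2}/g\ge m$, I would instead invoke a variant that spends $O(m)\cdot\poly(\log n)$ queries to read (essentially) the whole graph and then computes $\clk$ exactly; since this exact computation need not run in $O(m)$ time, it is precisely what produces the $\min\{m^{k/2}/\clk,\,m\}$ in the query bound without affecting the time bound.

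Fix the current guess $g$ and a degree threshold $\theta$ (morally of order $\sqrt{\oM}$, but fine-tuned in terms of $g$ as explained below). Let $L=\{v:d(v)\le\theta\}$, $H=V\setminus L$, and orient every edge from its lower-degree to its higher-degree endpoint, breaking ties by vertex id. Then every vertex has out-degree $O(\sqrt m)$, $|H|=O(m/\theta)$, and every $k$-clique has a unique \textit{source} (its minimum vertex in this order), which lies in $L$ exactly when the clique contains an $L$-vertex and in $H$ exactly when the clique lies entirely inside $H$. Hence $\clk=\clk^L+\clk^H$ with $\clk^L=\sum_{v\in L}h(v)$ and $\clk^H=\sum_{v\in H}h(v)$, where $h(v)$ denotes the number of $(k-1)$-cliques inside $N^+(v)$. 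I would estimate the two terms separately, boost each by median-of-means, and add them, union bounding over the two estimators and over the $O(\log n)$ guesses.

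The $L$-part is the easy one: every potential source $v\in L$ has $|N^+(v)|\le\theta$, so $N^+(v)$ can be read cheaply and probed for $(k-1)$-cliques with $\binom k2$-size batches of pair queries (using the Eden--Rosenbaum higher-degree-neighbor primitive to keep the extension cheap if a vertex reached along the way is itself high-degree). Concretely, one repeatedly draws a uniform $v\in[n]$ and, when $v\in L$, samples a $(k-1)$-clique inside $N^+(v)$ approximately uniformly and returns the corresponding importance-weighted indicator, which is unbiased for $\clk^L$. Since the set of vertices lying in a $k$-clique has size $\Omega(\clk^{1/k})$, a uniform vertex is a clique vertex with probability $\Omega(\clk^{1/k}/n)$ --- this is where the $n/g^{1/k}$ term comes from --- and a variance analysis that uses structural bounds on the graph (rather than the trivial worst case) together with the calibration of $\theta$ keeps the number of samples within $O(n/g^{1/k}+m^{k/2}/g)\cdot\poly(\log n,1/\eps,k)$.

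The $H$-part is the crux, and it is the reason the $k=3$ amortization argument does not extend to general $k$. A source $v\in H$ has all its out-neighbors in $H$, so $\clk^H$ lives entirely on $H$; but a high-degree vertex has too many neighbors for a uniformly sampled neighbor to continue a clique with nonnegligible probability, so the neighbor-exploration strategy breaks. Instead I would use that $H$ is small, $|H|=O(m/\theta)$: sample a small pool of uniform random high-degree vertices (by rejection from $[n]$), count the $k$-cliques spanned by the pool by brute force over its $k$-subsets (the pool being kept small enough that this is affordable), and rescale by the appropriate inverse-inclusion factor; this realizes an ``approximately-uniform'' sampler of the $k$-cliques that lie purely inside $H$, with each sampled high-degree vertex amortized over many cheap clique-detection attempts, optionally interleaved with the near-uniform neighbor-extension sampler. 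The delicate part is the choice of $\theta$: a smaller $\theta$ makes the $L$-part cheaper but enlarges $H$ and makes the $H$-part more expensive, while a larger $\theta$ makes $H$ (hence $\clk^H$) small but makes uniform-$H$ sampling expensive, so, since the split of $\clk$ between $\clk^L$ and $\clk^H$ is unknown a priori, $\theta$ must be set (as a function of $g$) to balance the two contributions against $n/g^{1/k}+m^{k/2}/g$, and whenever $\clk^H$ is provably too small to affect a $(1\pm\eps)$-estimate of $\clk$, its estimate can simply be suppressed. I expect this --- getting the $H$-sampler simultaneously near-uniform, cheap per attempt, and analyzable enough to bound the number of attempts by roughly $m^{k/2}/\clk^H$ up to $\poly(\log n,1/\eps,k)$ factors, with a threshold that also respects the $L$-part --- to be the main obstacle. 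Combining the two estimates and union bounding over the $O(\log n)$ guesses then yields the claimed $(1\pm\eps)$-approximation within the stated expected query and time bounds.
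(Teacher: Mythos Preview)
Your preprocessing (estimating $m$ via Goldreich--Ron, the geometric search for $\clk$, and switching to reading the whole graph once $m^{k/2}/g \ge m$) matches the paper, as does the idea of orienting edges by degree. But the core estimator has a genuine gap.

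The problem is variance control in your $L$-part. You sample a uniform $v \in [n]$ and, when $v \in L$, sample a clique sourced at $v$, then appeal to ``a variance analysis that uses structural bounds on the graph'' to keep the sample count within budget. No such bound follows from a degree-only threshold. With $\theta \approx \sqrt m$, a low-degree source $v$ can have $h(v)$ as large as $\binom{\theta}{k-1} \approx m^{(k-1)/2}$, so your importance-weighted indicator has second moment of order $n \cdot m^{(k-1)/2} \cdot \clk^L$, forcing roughly $n\,m^{(k-1)/2}/\clk$ samples---which can exceed $n/\clk^{1/k} + m^{k/2}/\clk$ by a polynomial factor (take, e.g., $\clk = m^{(k-1)/2}$ and $n=m$). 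The observation that clique vertices number $\Omega(\clk^{1/k})$ only bounds the expected time to \emph{hit} one; it says nothing about concentration of the estimate. Shrinking $\theta$ until $\theta^{k-1} \le \clk^{1-1/k}$ would fix the $L$-variance but then $|H| \ge m/\theta \ge m/\clk^{1/k}$ is far too large for your $H$-estimator, so the ``balance $\theta$'' plan cannot close the gap.

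The paper's remedy is precisely the missing ingredient: it assigns each clique not to its minimum-degree vertex but to its minimum \emph{\unpopular} vertex, where ``\unpopular'' imposes a cap on $c_k(u)$ (the number of $k$-cliques through $u$) of order $\clk^{1-1/k}$, \emph{in addition to} the degree cap. This clique-count cap is what makes $\alpha_P(S)$ concentrate for $|S| = O(n/\clk^{1/k})$; the two thresholds play different roles and neither can be dropped. The paper then estimates $\alpha_P(S)$ by a single unified clique sampler that handles both the low- and high-degree cases for the second endpoint of the starting edge: in the high case it draws high-degree vertices with probability $\approx 1/\sqrt m$ each in $O(1)$ time via a precomputed ``degrees-typical'' auxiliary multiset $T$ of size $O(n/\sqrt m)$ and random edges in $E(T)$, rather than by rejection from $[n]$ (which would cost $\Theta(n/\sqrt m)$ \emph{per draw} and cannot be amortized over the $\Theta(m^{k/2}/\clk)$ draws needed). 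Your pool-and-brute-force proposal for the $H$-part is a genuinely different idea and, as you acknowledge, is where the real difficulty lies; the correlated $k$-subsets of a shared pool make its variance analysis nontrivial, and the paper sidesteps this entirely with the $T$-based sampler.
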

%
% We show that this result is almost optimal by presenting a nearly matching lower bound.
%\scom{We complement this result by a nearly matching lower bound.}
\tchange{Below we state a nearly matching lower bound. The lower bound was first given in an earlier version of this paper~\cite{ERS_cliques}, and a simplified proof is given in \cite{ER_CC}}.
\begin{theorem}[\cite{ERS_cliques, ER_CC}] \label{thm:main-lb}
The query complexity of any multiplicative-approximation
algorithm for $\clk$ is
\[\Omega\left(\frac{n}{\clk^{1/k}}+\min\left\{\frac{m^{k/2}}{\clk \cdot (c \cdot k)^k}\dcom{,m}\right\}\right) \;,\]
for a (small) constant $c$.
\end{theorem}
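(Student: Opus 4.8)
The plan is to prove each of the two summands by a distributional (Yao) argument. For a target triple $(n,m,\clk)$ I construct two distributions $\mathcal{D}_{\mathrm{yes}},\mathcal{D}_{\mathrm{no}}$ over graphs on vertex set $[n]$ with $\Theta(m)$ edges, such that $G\sim\mathcal{D}_{\mathrm{yes}}$ has $\Theta(\clk)$ copies of $K_k$ while $G\sim\mathcal{D}_{\mathrm{no}}$ has none; since any multiplicative-approximation algorithm must in particular tell these two cases apart, it suffices to show that no deterministic algorithm issuing $q$ queries distinguishes the induced transcript distributions with constant advantage unless $q$ is large. Each such claim is proved by coupling the two processes so that they return identical answers to every query until the algorithm ``touches'' a small, uniformly placed set of special edges, and then union-bounding the per-query touch probability.

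For the first summand I hide a clique. Let $q=\Theta(\clk^{1/k})$ satisfy $\binom{q}{k}=\Theta(\clk)$. In $\mathcal{D}_{\mathrm{yes}}$ plant a $K_q$ on a uniformly random $q$-subset of $[n]$ and, on the remaining vertices, a fixed $K_k$-free graph with $\Theta(m)$ edges, attached so that the degree sequence agrees with that of $\mathcal{D}_{\mathrm{no}}$ — which is the same graph with the planted clique replaced by a $K_k$-free graph of the same degrees. Since neighbor and pair queries cannot reveal a clique vertex before one has already been hit, an algorithm that never queries any of the $q$ random clique vertices sees identical transcripts, and by a union bound this requires $\Omega(n/q)=\Omega(n/\clk^{1/k})$ queries (up to a factor depending only on $k$).

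The main case is the second summand; assume we are in the regime where it dominates, so in particular $\clk\lesssim m^{k/2}/(ck)^k$ (in the remaining regime, where this quantity would exceed $m$, a separate more elementary construction gives the $\Omega(m)$ bound). Fix a partition of $(k-1)s$ of the vertices into parts $P_1,\dots,P_{k-1}$ of size $s:=\Theta(\sqrt{m}/k)$, take the complete $(k-1)$-partite graph on them (it has $\Theta(m)$ edges, is full of $(k-1)$-cliques, and contains no $K_k$), and pad the remaining vertices so the total edge count is $m$. Superimpose a sparse random perturbation of size $M:=\Theta(\clk/s^{k-2})$: in $\mathcal{D}_{\mathrm{yes}}$ add a uniformly random partial matching $\{v_i,v_i'\}_{i\le M}$ inside $P_1$ and delete an appropriate set of $2M$ cross-edges so that the degree multiset is unchanged, while $\mathcal{D}_{\mathrm{no}}$ performs only a $K_k$-free perturbation realizing the same degree multiset. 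Each matched pair together with one vertex from each of $P_2,\dots,P_{k-1}$ spans a $K_k$, so $\mathcal{D}_{\mathrm{yes}}$ has $\Theta(Ms^{k-2})=\Theta(\clk)$ copies of $K_k$ and $\mathcal{D}_{\mathrm{no}}$ has none. Coupling the two worlds on the partition, on the identity of the $O(M)$ affected vertices, and on each vertex's neighbor ordering outside its $O(1)$ perturbed slots, a degree query never distinguishes, and a neighbor or pair query returns a distinguishing answer only if it hits one of the $O(M)$ perturbation sites, which lie in a $\Theta(s)\times\Theta(s)$ region and are placed uniformly, so for each fixed query this probability is $O(M/s^2)$ (and the neighbor channel is no better). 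Hence $q$ queries distinguish with advantage $O(qM/s^2)+o(1)$, where the $o(1)$ covers the cost of locating the region ($O(n/\sqrt m)$, dominated by the first term), giving $q=\Omega(s^2/M)=\Omega(s^k/\clk)=\Omega\big(m^{k/2}/((ck)^k\clk)\big)$.

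The crux — and the step I expect to be hardest — is the design of $\mathcal{D}_{\mathrm{no}}$ and of the perturbation so that the two worlds really are locally indistinguishable up to scale $k$: one must ensure that the $\pm1$ degree shifts induced by the perturbation cannot leak which vertices (or which part) carry the planted cliques, which forces the degree-multiset-preserving bookkeeping, and that no cleverer probing strategy — in particular the natural ``sample a random edge and try to extend it to a $K_k$'' strategy — beats the $s^2/M$ bound; verifying the worst-case $O(M/s^2)$ per-query touch probability across all three query types, together with the coupling of neighbor orderings, is where the real work lies. A naive construction such as a single planted $K_q$ or a full blow-up of $K_k$ fails precisely here, since a constant number of local queries around any discovered dense vertex already decides the instance; spreading the planted cliques over a sparse perturbation is what converts the task into the $\Omega(s^2/M)$ search. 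Finally one checks that the three parameter ranges (first-summand regime, second-summand regime, and the $\Omega(m)$ regime) cover all $(n,m,\clk)$, yielding the bound in the stated $\min\{\cdot,m\}$ form.
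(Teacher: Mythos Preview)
The paper does not actually contain a proof of this theorem: it is stated with citations to an earlier arXiv version of the paper and to \cite{ER_CC}, and the surrounding text says explicitly that ``the lower bound was first given in an earlier version of this paper, and a simplified proof is given in \cite{ER_CC}.'' So there is no proof here to compare against line by line. The only hint the paper gives about the intended argument is a (commented-out) paragraph stating that the lower bound ``is obtained by building on the lower-bound construction of ELRS for triangles ($k=3$). Roughly speaking, we reduce the problem of approximating the number of $k$-cliques for $k>3$ to the problem of approximating the number of triangles.''

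Your proposal takes a genuinely different route: rather than reducing to the $k=3$ lower bound of ELRS as a black box, you build a direct hard instance for every $k$, namely a complete $(k-1)$-partite host on parts of size $s=\Theta(\sqrt{m}/k)$ together with a sparse intra-part matching of size $M=\Theta(\clk/s^{k-2})$ that turns $(k-1)$-cliques into $k$-cliques. This is in spirit the natural generalization of the ELRS triangle construction (bipartite host $+$ intra-part matching) to arbitrary $k$, and it yields the right arithmetic $s^2/M=\Theta(m^{k/2}/((ck)^k\clk))$. The reduction approach sketched by the paper is cleaner bookkeeping-wise (all the delicate coupling work is inherited from the $k=3$ case), whereas your direct construction is more transparent about where the $(ck)^k$ loss comes from.

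One point to be careful about in your plan: the step ``delete an appropriate set of $2M$ cross-edges so that the degree multiset is unchanged'' is doing real work and cannot be left at that. Adding a matching edge inside $P_1$ raises the degree of its two endpoints by one; compensating by deleting a cross-edge from each endpoint then \emph{lowers} the degree of two vertices in $P_2,\dots,P_{k-1}$, and you must argue either that this cascade terminates harmlessly or (more cleanly) design $\mathcal{D}_{\mathrm{no}}$ to carry a matching of the same size placed so as to create no $K_k$ (e.g.\ inside a part of an auxiliary bipartite padding gadget) with an identical degree multiset. Without this, a single degree query on a vertex of $P_1$ already distinguishes the two worlds. You flag this as ``the crux,'' which is accurate; just be aware that in the published arguments this is exactly where most of the effort goes, and the naive ``delete $2M$ cross-edges'' fix does not by itself close the loop.
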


\subsection{Main ideas and techniques}\label{subsec:intro-alg}
In all that follows, a random vertex refers to a vertex selected uniformly at random.
%The starting point of our algorithm for
\dcom{Our starting point for} approximating the number of $k$-cliques
% starts similarly
\dcom{is similar} to that of Eden et al.~\cite{ELRS} (henceforth ELRS)
for %approximately counting
\dcom{approximating} the number of triangles (i.e., $k=3$). % but it
\dcom{However, we diverge} rather quickly,
as the heart of our algorithm, a process for sampling $k$-cliques, is  conceptually and technically
different.
%\Talya{ We begin with a high-level description of our algorithm, starting at the ELRS similarities to ELRS.
%	We then turn to describe our approach, which is based on (almost-)uniform sampling of $k$-cliques that are incident to a subset of vertices.
%	We also compare our technique to the triangle counting ideas of ELRS and explain why they do not work for $k > 3$.
%}
%
%
% ideas of the algorithm for approximating the number of triangles ($k=3$) in~\cite{ELRS}, but also differs from it  in several important aspects. In fact, the special case of our algorithm for $k=3$ is actually simpler conceptually than
% the~\cite{ELRS} algorithm. In what follows we give a high-level description of the algorithm and then explain both the similarities and the differences.
% We view edges as ordered pairs, so that the number of (ordered) edges, denoted $m$, equals the sum of the degrees of all vertices.
For the sake of simplicity, assume $\eps$ is a constant. Also, it is convenient
to assume that constant factor estimates of $\clk$ and  $m$ are known.\footnote{The assumption on $m$ can be removed by \dddchange{applying~\cite{feige2006sums},} %{GR08}, 
and the assumption on $\clk$ can be removed by a geometric search (for details see
\ifnum\conf = 0
Subsection~\ref{subsec:search}).
\else
% Subsection~\ref{subsec:search} in the accompanying full version).
  Subsection 3.7 in the full version).
\fi
}

\paragraph{The starting point that is shared with ELRS: vertex sampling and clique assignment.}
Our algorithm starts by uniformly and independently selecting a (multi-)set $S$ of vertices of size roughly $n/\clk^{1/k}$.
% Observe that if we take a smaller sample of vertices, then we might not hit any vertex that belongs to a $k$-clique, since it is possible that all $k$-cliques reside in a clique over $\clk^{1/k}$ vertices.
Denoting the number of $k$-cliques incident to a vertex  $u$ by $c_k(u)$, a natural
estimate for $C_k$ is $\frac{n}{k|S|}\sum_{u \in S} c_k(u)$. Unfortunately, for a  random $u$,
$c_k(u)$ can have extremely large variance. ELRS
reduce the variance by considering triangles only from endpoints $u$
%assigning triangles to the incident vertex
where $c_3(u)$ is not too large. In our setting, we formalize this  by defining ``\popular" vertices.
A vertex is  \popular\ if it participates in a number of $k$-cliques that is above a certain threshold $\tau$ or if its degree is above another threshold $\tau'$. % = \tau'(\clk,m,\eps)$.
A $k$-clique with vertices $\{u_1,\dots,u_k\}$ is {\em assigned} to the smallest degree vertex $u_j$
that is not \popular. We set the parameters $\tau, \tau'$ to ensure that the number of $k$-cliques that are not assigned to any vertex is at most $\eps \cdot C_k$.
Let $\alpha(u)$ be the number of $k$-cliques assigned to $u$.
We can prove that if $|S|$ is roughly $n/C_k^{1/k}$, then
$\frac{n}{|S|} \sum_{u \in S} \alpha(u)$ is a $(1+\eps)$-approximation
of $C_k$ with high probability. The problem of approximating $C_k$ is now reduced to
approximating $\alpha(S) = \sum_{u \in S} \alpha(u)$. \Talya{From this point on the ELRS approach does not generalize (as we explain towards the end of this section). % and we diverge.
\dcom{We continue by describing our approach.}
}

%  (breaking ties according to ids). To be precise, it is assigned to the smallest degree vertex that is not {\em \popular\/}. % (If a vertex is not heavy, then we say that it is light.)
% or if its degree is above another threshold $\tau'$. % = \tau'(\clk,m,\eps)$.
% %
%
%
% , inspired by
% the average degree estimator of~\cite{GR08}, is to ``ignore"
%
%
% On the positive side we show that the sample size, $s$, is sufficient to ensure that with high  probability over the choice of a sample $S$ of size $s$, the number of cliques that are assigned to vertices in  $S$, denoted $\alpha(S)$ ( $= \sum_{u\in S} \alpha(u)$) is close to its expected value, which is roughly $\clk \cdot \frac{s}{n}$.
% We are thus interesting in obtaining a good estimate of $\alpha(S)$.
%
%
% \paragraph{Assigning $k$-cliques  to vertices.}
% The first idea is to assign each $k$-clique  to a unique vertex in the clique.
% % and define the {\em weight\/} of a vertex as the number of $k$-cliques that are assigned to it.
%  (If a vertex is not heavy, then we say that it is light.)
% This means that if a vertex is \popular, then it not assigned any $k$-clique, and if a $k$-clique consists only of \popular vertices, then it is not assigned to any vertex.
% We denote the number of $k$-cliques that are assigned to a vertex $u$ by $\alpha(u)$.
%

\paragraph{
Approximating \parmath{$\alpha(S)$}: Sampling \parmath{$k$}-cliques incident to \parmath{$S$} almost uniformly.
}
% A central building block is a procedure to estimate
% $\alpha(S) = \sum_{u \in S} \alpha(u)$. In order to describe this procedure, we first introduce some notation.
% Let $E(S)$ denote the multi-set\footnote{Observe that $E(S)$ may be a multi-set, since the sample $S$ itself may be a multi-set.} of (ordered) edges incident to vertices in $S$, and let $m(S) = |E(S)|$.
%We now describe our approach for approximating $\alpha(S)$.
Let $\mA(S)$ denote the (multi-)set of $k$-cliques
assigned to vertices in $S$.
We estimate $\alpha(S) = |\mA(S)|$ by sampling
$k$-cliques incident to $S$, and checking whether they
belong to $\mA(S)$. The key is to sample
each $k$-clique  incident to $S$ (and in particular in $\mA(S)$) with (approximately) {\em the same probability\/}.
Specifically, as explained next, we will do so with probability proportional to $\frac{1}{m(S)\cdot m^{k/2\dcom{-1}}}$, where $m(S) = |E(S)|$ \dcom{and $E(S)$ denotes the (multi-)set of (ordered) edges incident to $S$}.
% \strikeout{Furthermore, \mnote{T:I'm not sure I understand what you mean in this sentence and why it is important.}
% it is immaterial with what probability we sample
% $k$-cliques incident to $S$ that are {\em not} in $\mA(S)$.
% (The latter point will be quite important.)}

Each $k$-clique containing a vertex $u\in S$ is associated with one of the edges incident to $u$ in the clique. For reasons that will become clear shortly, the other endpoint of this edge is selected to be the lowest degree vertex among the vertices in the clique (excluding $u$, and breaking ties arbitrarily but consistently).
%   The first ingredient is to sample cliques  incident to a vertex $u\in S$ starting from the edge $(u,v)$ such that $v$ has the smallest degree among the other cliques' vertices. (This condition can be checked once a clique is obtained.)
% Therefore, In order to sample a $k$-clique,
The procedure for sampling $k$-cliques starts by sampling an edge $(u,v)$ uniformly in $E(S)$.
% This can be done by selecting a vertex $u \in S$ with probability $d(u)/m(S)$, and then selecting a random edge incident to $u$.
It then attempts to extend this edge to a $k$-clique, where the other $k-2$ vertices have degree higher  than $v$.
% assuming that the rest of the clique's vertices are of higher degree than $v$.
This is done in one of two different ways, depending on the degree of $v$,
where in either way, the algorithm may fail to output any $k$-clique.
\talya{For the sake of simplicity, } in what follows \Talya{we refer to a vertex as} a {\em low-degree vertex} if its degree is at most $\sqrt{m}$, and as a {\em high-degree vertex} otherwise.
\talya{(The technical part of the paper uses a slightly different definition.)}
% We note that if all vertices had low degree, then we could extend the ELRS algorithm

\medskip{\sf The \Talya{(easy)} ``low case'':}~ If $v$ is a low-degree vertex, then we
sample $k-2$ random neighbors of $v$ and check whether we obtained a $k$-clique in which $v$ is the lowest degree vertex other than $u$. % (by performing $O(k^2)$ queries).
In order to ensure that all $k$-cliques incident to $S$ %in $\mA(S)$
are output with the same probability, we apply
rejection sampling and keep each sampled neighbor $w$ of $v$ with probability $d(v)/\sqrt m$ (and conditioned on
$d(v) \leq d(w)$).
% Indeed, the process may  fail to return any output.
% This  ensures that each neighbor of $v$ is sampled with equal probability $1/\sqrt m$.
%Hence,
% However,\mnote{T: why however?}
Hence, each such clique (i.e., in which the lowest degree vertex other than $u\in S$ is a low-degree vertex)
%in $\mA(S)$
%(in the low case)
is output
with equal probability  $\frac{(k-2)!}{m(S)\cdot \sqrt{m}^{k-2}}$.  (The $(k-2)!$ factor is due to the fact that we may obtain the $k-2$ vertices in the clique other than $u$ and $v$ in any order.)

 \medskip{\sf The \Talya{(hard)} ``high case'':}~The challenging case is
when $v$ is a high-degree vertex.
%(which is the roadblock in generalizing ELRS).
\scom{Rejection sampling, as in the low case, is too expensive now.}
%\Sesh{Remove: In this case we  can no longer perform rejection sampling as described in the low case, unless we increase the rejection probability (which in turn will require to increase the query complexity).}
% If we were to simply try and sample a uniform neighbor of $u$, then, since $d(v) > \sqrt m$, we would not be able to achieve the above success probability.
However, observe that we are interested only in sampling neighbors of $v$ with degree higher than $v$,
and that the number of vertices with high degree is at most $\sqrt{m}$.  Therefore, if we
had a way to efficiently sample each high-degree vertex with probability (approximately) $1/\sqrt{m}$,
we would obtain the same probability over cliques %in $\mA(S)$
incident to $S$ as in the low case.
We next \dcom{explain} how this can be done.
%induce the same probability over $k$-cliques in $\mA(S)$
% Instead, we show that
% Consider any clique $\mA(u)$ that is assigned to $v$.  Crucially, all other vertices in this clique must have degree $\geq 2\sqrt{m}$. This is an important  insight:
% in order to sample $k-2$  neighbors of $v$ with a higher degree, it is more efficient to try and sample $k-2$
%  uniform random high-degree (greater than $\sqrt{m}$)
% vertices and check whether they form a clique with $u$ and $v$ (as well as have higher degree than $n$).

Consider selecting a  random multi-set $T$ of roughly $t= \frac{n}{\log n/\sqrt{m}}$ vertices.
The setting of $t$ is such that with high probability, for every high-degree vertex $w$, the number of neighbors that $w$ has in $T$ is close to its expected value, that is, $d(w)\cdot \frac{t}{n}$. This implies that if we select an edge $(x,y)$ uniformly at random in $E(T)$ (the (ordered) edges incident to $T$, whose number is $m(T)$), then the probability that $y = w$ for a fixed high-degree vertex $w$, is approximately
$\frac{d(w)\cdot (t/n)}{m(T)}$.

Assume that $m(T)$ is not much larger than its expected value, $m\cdot\frac{t}{n}$ (which can be ensured with high probability). Let $p(w) = \frac{m(T)}{d(w)\cdot (t/n)\cdot \sqrt{m}}$, so that under this assumption on $m(T)$, and the fact that $w$ is a high-degree vertex, $p\in (0,1]$.
If we now keep $w$ with probability $p(w)$, then we have a subroutine that samples each high-degree vertex with approximately equal probability $1/\sqrt{m}$.
This in turn implies that we can select any fixed subset of $k-2$ high-degree vertices with probability very close to $\frac{(k-2)!}{\sqrt{m}^{k-2}}$. We then check whether whether the chosen vertices form a clique together with $u$ and $v$ (and that the clique is associated with $(u,v)$).
%$\{u,v,w_1,\dots,w_{k-2}\}$ is a $k$-clique (as well as that $v$ is the lowest degree vertex in
%$\{v,w_1,\dots,w_{k-2}\}$).
%  We thus get that for each fixed $k$-clique that consists only of high-degree vertices, the probability that the procedure outputs this clique is approximately $\frac{(k-2)!}{m(S)\sqrt{m}^{k-2}}$.

\medskip
 We now have a procedure that outputs each clique incident to $S$ % in $\mA(S)$
 with (roughly) the same probability, $\frac{(k-2)!}{m(S)\cdot \sqrt{m}^{k-2}}$.
%  This procedure may also output other cliques (that are incident to vertices in $S$ but are not assigned to them).
 In the next paragraph, we discuss a procedure for deciding whether a $k$-clique
 belongs to $\mA(S)$. Given this decision procedure,
 we can estimate $|\mA(S)|$ by performing
 $\frac{m(S)\cdot (2\sqrt{m})^{k-2}}{(k-2)!\cdot |\mA(S)| }$
 $= O\Big(\frac{m^{k/2}}{\clk}\Big)$ calls
 to the  $k$-clique sampling procedure.  %, which runs in time $O(k^2)$.\mnote{D: or $O(k\log n + k^2)$?}
 (\Talya{We assume} that $|\mA(S)|$ is close to its expected value, $\clk\cdot \frac{s}{n}$
 and that $m(S)$ is not much larger than its expected value, $m\cdot \frac{s}{n}$).

\paragraph{Deciding whether a \parmath{$k$}-clique is in \parmath{$\mA(S)$}.}
%In order to decide
Deciding
whether a $k$-clique %$\{u,v,w_1,\dots,w_{k-2}\}$ (where $u\in S$)
incident to a vertex $u\in S$ should be assigned to $u$, requires to determine which of the clique vertices are \popular.
Recall that a vertex is considered \popular\ if the number of $k$-cliques it  participates in is more than $\tau$ or if its degree is above $\tau'$ (for appropriate settings of $\tau$ and $\tau'$). The second condition can be easily verified by a single degree query.
As for the first condition, given a vertex $u$, we verify whether the number of $k$-cliques that it participates in is more than $\tau$ by running the $k$-clique sampling procedure with $S=\{u\}$ \Talya{for} a sufficient number of times
(roughly $\frac{d(u)\cdot (2\sqrt{m})^{k-2}}{(k-2)!\cdot\tau} \leq \frac{\tau'\cdot (2\sqrt{m})^{k-2}}{(k-2)!\cdot\tau}$ where $\frac{\tau'}{\tau} =O\left(\frac{m}{\clk}\right)$).
% \mnote{D: elaborate more?}
The procedure may err on ``almost \popular" vertices, but the
analysis can be modified to deal with this.
While the procedure \tchange{might have} high query complexity, it is only invoked when the clique-sampling procedure returns a clique.
The frequency of the latter can be bounded appropriately
to get the final query complexity.

\scom{\paragraph{Why the ELRS algorithm does not generalize.} 
%{: the special nature of $k=3$.} %DR 3.3.18 - one may say: maybe does generalize to k=4
	The success of ELRS hinges on the following bound: $\sum_{(u,v) \in E} \min(d(u), d(v)) = O(m^{3/2})$,
	discovered in~\cite{ChNi85} in the context of triangle enumeration.
	The short answer to why the ELRS algorithm does not generalize is that the above bound does not have
	analogues for $k > 3$. Indeed, this is why the simple algorithm of~\cite{ChNi85} for triangle enumeration does not work
	for cliques of larger size.\Talya{\footnote{We note that~\cite{ChNi85} also present a general algorithm for listing $k$-cliques for any $k$, but this algorithm and its analysis are more involved.}}
	
	Let us revisit the triangle estimator of ELRS. \dcom{Recall that in this context, $\alpha(u)$ denotes the
number of triangles assigned to a vertex $u$, and that the}
	%Our
aim is to estimate the average value of $\alpha(u)$ for $u \in S$.
	ELRS first ``transfer'' the assignment of triangles from vertices to edges.
	% define an assignment of triangles to (ordered) edges
	Letting $\alpha(u,v)$ denote the number of triangles assigned to the edge $(u,v)$
	% . Letting $E(S)$ denote the (multi-)set of edges incident to $S$, %and  $m(S) = |E(S)| = \sum_{u\in S} d(u)$,
	we have that $\alpha(S) = \sum_{(u,v) \in E(S)} \alpha(u,v)$.
 %where $E(S)$ is the (multi-)set of edges incident to \talya{the vertices of} $S$.
	% ELRS sample a random $(u,v) \in E(S)$ and estimate $\alpha(u,v)$.
	For a random edge $(u,v)$,
% ELRS estimate $\alpha(u,v)$
a triangle can be detected by sampling a random neighbor of \Talya{the} lower degree
	vertex among $u$ and $v$ and performing  a pair query with the other endpoint.
Since the probability of finding a triangle decreases as $\min(d(u),d(v))$ \Talya{increases},
ELRS select $\left\lceil\frac{\min(d(u),d(v))}{\sqrt{m}}\right\rceil$ random neighbors.
Hence, the expected number of queries performed per edge is
$\frac{1}{m}\sum_{(u,v)\in E}\left\lceil\frac{\min(d(u),d(v))}{\sqrt{m}}\right\rceil$, which by
the aforementioned bound from~\cite{ChNi85}, is $O(1)$. ELRS prove that in order to estimate $\alpha(S)$,
it suffices to sample $O(m^{3/2}/C_3)$ random edges from $E(S)$, and therefore they get the desired bound.

\iffalse
Observe how the probability of finding a triangle decreases as $\min(d(u),d(v))$ \Talya{increases}.
	ELRS performs the following tradeoff: if $\min(d(u), d(v)) \leq \sqrt{m}$, then $\alpha(u,v)$
	is explicitly determined in $O(\sqrt{m})$ time. Otherwise, the sampling is performed $\Theta(\min(d(u),d(v))/\sqrt{m})$
	times to boost success. \Talya{If for all edges $(u,v)\in E$,} $\alpha(u,v) = \Theta(1)$ (so $C_3 = \Theta(m)$), one can estimate $\alpha(S)$
	in $O(\sqrt{m} + (1/m)\sum_{(u,v) \in E} (\min(d(u),d(v))/\sqrt{m})$ time, which is $O(\sqrt{m})$ by the bound above.
\fi
	
	\sloppy
	A generalization of ELRS to $k$-cliques would require the following bound:
	$\sum_{(u,v) \in E} \min(d(u), d(v))^{k-2} = O(m^{k/2})$. \textit{This bound is false for $k>4$.}
    \schange{\dddchange{To exemplify this,} consider a graph \dddchange{over the vertex set $\{1,\dots,n\}$,} with the following edges. There is an edge $(1,2)$, and both vertices $1$ and $2$ have an edge
    to all other vertices. The left-hand-side of the bound is $\Theta(n^{k-2})$, while the right-hand-side is $O(n^{k/2})$.
	Thus, the ELRS analysis depends on a seeming singularity for $k=3, 4$, and %cannot
\dcom{does not} generalize %for general $k$.
\dddchange{to all $k$.}
}
	
%	We now explain our approach, that avoids these roadblocks.
}

\subsection{Related work} \label{sec:related}

A significant portion of the work on clique counting focuses on triangle counting.
Because our focus is on general $k$, we avoid a detailed discussion of results for triangle
counting. We point the interested reader to~\cite{ELRS}.

Ne\v{s}et\v{r}il and Poljak give the first non-trivial algorithm for $k$-clique counting
by reducing to matrix multiplication~\cite{NP85}. Specifically, their algorithm runs in
\dcom{time} $O(n^{\omega\lfloor k/3\rfloor + k(\mathrm{mod} 3)})$, where $\omega$ is the \Talya{matrix multiplication exponent}.
Eisenbrand and Grandoni refine this bound for certain values of $k$ by careful reductions
to rectangular matrix multiplication~\cite{EG04}. They also give better dependencies on $m$
for sparse graphs. The general dependence of the form $n^{\omega k/3}$ is believed to be
optimal. Recent work by Abboud et al. %actually use
builds on this conjecture to prove hardness for various parsing algorithms~\cite{ABW15}.
More relevant to our work, Chiba and Nishizeki give an algorithm for $k$-clique enumeration,
based on the arboricity of the graph, from which the $O(n+m^{k/2})$ bound for general graphs follows immediately.
\schange{The use of degree/degeneracy orientations have appeared
in recent practical works on clique counting~\cite{FFF15,JS17}.
In %this result, 
\dddchange{the current work}
we design various primitives to sample random $k$-cliques, by either
extending smaller cliques or by sampling high degree vertices.
The idea of extending smaller cliques to large ones using degree orientations
is an important feature of previous practical approaches~\cite{FFF15,JS17}. It would
be of interest to see if the new techniques given by our result could be used
for practical algorithms.}
%
%
% The primitives used by our algorithm
% on extending a smaller clique to a large one using degree orientations share
% a cursory similarity with methods in Jain and Seshadhri for clique counting~\cite{JS17}.}

In the context of sublinear algorithms, our work follows a line of results on sublinear
estimation of subgraph counts. Our analysis builds on several techniques developed in these results.
The starting point is the average degree estimation results of
Feige~\cite{feige2006sums} and Goldreich and Ron~\cite{GR08}.
Gonen et al. %greatly
generalize these techniques to estimate
the count of $k$-stars in sublinear time~\cite{GRS11}.
Eden et al. \cite{ERS16} further extended and simplified
all these results, and show connections between this
problem and the graph degeneracy. They also build on the basic ELRS framework.
Eden and Rosenbaum \cite{EdenRosenbaum} provide an algorithm for sampling edges almost uniformly, and our clique sampler uses
some of their ideas to sample high-degree vertices. \schange{Dasgupta et al. \cite{DKS}
and Chierichetti et al.~\cite{ChDa+16} consider sublinear algorithms (for average degree and related problems)
in a weaker model where uniform random vertices are not allowed. In practical settings, we can only
``crawl" a graph, which translates to performing random walks. Their results typically require
some assumption about the mixing time of the input graph $G$. Again, we believe this is an interesting direction
for future work, to consider weaker query models but stronger assumptions on graph structure.}

%  . Their work relies on sampling high-degree vertices with probability proportional to their degree. By doing so they can sample edges originating in high-degree vertices almost uniformly. We use the idea of sampling high-degree vertices with probability proportional to their degree, in a different manner, so as to get access to every high-degree vertex with probability $1/2\sqrt m.$} \mnote{This doesn't really fit here. Maybe just keep the first sentence and move the rest to the paragraph discussing the "high case"?}
%

Other work on sublinear algorithms for estimating graph parameters \dddchange{(}\tchange{in the standard query model}\dddchange{)} include results on the minimum weight spanning tree \cite{DBLP:journals/siamcomp/ChazelleRT05, DBLP:journals/siamcomp/CzumajS09, DBLP:journals/siamcomp/CzumajEFMNRS05}, maximum matching \cite{nguyen2008constant, yoshida2009improved} and minimum vertex cover \cite{DBLP:journals/tcs/ParnasR07,nguyen2008constant,DBLP:journals/talg/MarkoR09, yoshida2009improved, hassidim2009local, onak2012near}.

\Talya{	
\ifnum\conf=1
	\subsection{Organization \dcom{of this extended abstract}}
	We start with some preliminaries in Section~\ref{sec:prel}. In Section~\ref{sec:ub} we provide a description of the main algorithm, a proof sketch of the main theorem and a full description of one of the main procedures that the algorithm uses. The rest of the procedures as well as %the search algorithm and
the lower bound are provided in the
accompanying full version of the paper.
\fi
}

	\section{Preliminaries}\label{sec:prel}

	We consider simple undirected graphs over a set $V$ of $n$ vertices. It is convenient to think of the graph edges as ordered pairs, so that every edge is considered from both endpoints. We say that the ordered edge $(u,v)$  \textsf{originates} from the vertex $u$. We denote the set of all ordered edges by $E$ and
let $m \eqdef |E|$. We use the following notations.
% 	For a multi-set $S$ we denote the multi-set of edges originating from the vertices of $S$,
% by $E(S) \eqdef \{(u,v)\mid u \in S\}$  and let $m(S)\eqdef|E(S)| = \sum_{u \in S}d(u)$.
% For a vertex $u$ (not necessarily in $S$), let $d_S(u)$ denote the number of neighbors that $u$ has in $S$.
% We let 	
\smallskip

\begin{asparaitem}
    \item $d(u)$: the degree of a vertex $u$ (the number of edges originating from $u$).
    Note that $\sum_{u \in V} d(u) = m$.
    \item $E(S)$, $m(S)$: $E(S) \eqdef \{(u,v)\mid u \in S\}$  and $m(S)\eqdef|E(S)| = \sum_{u \in S}d(u)$.
    \item $d_S(u)$: for any vertex $u$ and set of vertices $S$, $d_S(u)$ is the number
    of neighbors of $u$ in $S$.
    \item $\clk$, $\sclk(u)$: $\clk$ is the number of $k$-cliques in the given graph.
    For $u\in V$, $\sclk(u)$ is the number of $k$-cliques that $u$ participates in.
    Note that $\clk = \frac{1}{k}\cdot \sum_{u\in V}\sclk(u)$.
\end{asparaitem}

\smallskip
	
	We use $\prec$ to denote a total order over the graph vertices such that for every two vertices $u$ and $v$,
% it holds that $u \prec v$ if $d(u) \leq d(v)$ and ties are broken arbitrarily and consistently
if $d(u) < d(v)$, then $u \prec v$, and if $d(u)=d(v)$, then the order between $u$ and $v$ is determined in an arbitrary but fixed manner (e.g., by vertex id).
	
Let $[r]\eqdef \{1,\ldots,r\}$ and let $(1\pm \alpha)^t \cdot x$ denote the interval $\left[ (1-\alpha)^t \cdot x, (1+\alpha)^t \cdot x\right]$.
	
	We make use of the following version of Chernoff's inequality \cite{chernoff}. Let $\chi_i$ for $i=1, \ldots, m$ be  random variables taking values in $[0,B]$, such that for every $i$, $\EX[\chi_i]=p$. Then
\ifnum\stoc=0
	\[ \Pr\left[\frac{1}{m}\sum\limits_{i=1}^m \chi_i > (1+ \gamma)\mu \right] < \exp\left( -\frac{\gamma^2 \mu m}{3B}\right) \;,\]
	and 
	\[ \Pr\left[\frac{1}{m}\sum\limits_{i=1}^m \chi_i < (1- \gamma)\mu \right] < \exp\left( -\frac{\gamma^2 \mu m}{2B}\right)\;.\]
\else
%	\[ \Pr\left[\left|\frac{1}{m}\sum\limits_{i=1}^m \chi_i -p\right| > \gamma\cdot p \right] < 2\exp\left( -\frac{\gamma^2 p m}{3B}\right) \;.\]
	\small
	\begin{align*}
	 & \Pr\left[\frac{1}{m}\sum\limits_{i=1}^m \chi_i > (1+ \gamma)p \right] < \exp\left( -\frac{\gamma^2 p m}{3B}\right) \; \mbox{ and }\;
	\\
& \Pr\left[\frac{1}{m}\sum\limits_{i=1}^m \chi_i < (1- \gamma)p \right] < \exp\left( -\frac{\gamma^2 p m}{2B}\right)\;.
\end{align*}
\normalsize
\fi

The proof of the following claim is similar to the proof of~\cite{ChNi85} for their exact clique enumeration
\ifnum\conf=1
algorithm.
\else
algorithm (and we include it here for the sake of completeness).
\fi

	\begin{claim}\label{clm:u.b. nk} For every graph $G$ with $m$ (ordered) edges and $\clk$ $k$-cliques,
% it holds that
\confeqn{\clk \leq m \cdot \binom{\sqrt m}{k-2} \;.}
	\end{claim}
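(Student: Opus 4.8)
The plan is to bound $\clk$ by distributing the count over edges via a degeneracy-style orientation argument, mimicking the Chiba--Nishizeki counting scheme. First I would fix the total order $\prec$ on vertices (by degree, ties broken by id) and orient every edge from the $\prec$-smaller endpoint to the larger. For a $k$-clique $K = \{u_1 \prec u_2 \prec \cdots \prec u_k\}$, charge $K$ to the ordered edge $(u_1, u_2)$ — the edge between its two $\prec$-smallest vertices, oriented from the overall minimum $u_1$. Then $\clk = \sum_{(u,v) \in E,\ u \prec v} (\text{number of $k$-cliques charged to }(u,v))$. The core observation is that a $k$-clique charged to $(u,v)$ is determined by choosing the remaining $k-2$ vertices, and every such vertex $w$ must satisfy $v \prec w$ and be a common neighbor of both $u$ and $v$; in particular $w$ is a neighbor of $u$ with $w \succ u$, so the number of available choices is at most the outdegree of $u$ under the orientation, i.e. the number of neighbors $w$ of $u$ with $u \prec w$.

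The second ingredient is the standard fact that in this degree-based orientation every vertex has outdegree at most $\sqrt{m}$: if $u$ has $d^+(u)$ out-neighbors, each of them has degree $\geq d(u) \geq d^+(u)$ (since $u \prec w$ forces $d(u) \leq d(w)$, and the out-neighbors are among $u$'s $d(u)$ neighbors), so $\sum_{w} d(w) \geq d^+(u)^2$; but that sum is at most $m$, giving $d^+(u) \leq \sqrt{m}$. Hence the number of $k$-cliques charged to any fixed ordered edge $(u,v)$ with $u \prec v$ is at most $\binom{d^+(u)}{k-2} \leq \binom{\sqrt m}{k-2}$ (choosing $k-2$ out-neighbors of $u$; note the choice automatically lies in $\{w : w \succ u\}$, and a fortiori we may restrict to those also exceeding $v$, but the crude bound $\binom{\sqrt m}{k-2}$ suffices).

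Putting it together: each $k$-clique is charged to exactly one ordered edge (the one from its minimum vertex to its second-minimum vertex), and there are $m$ ordered edges total, but actually we should be slightly careful — the charged edge $(u_1,u_2)$ has $u_1 \prec u_2$, so it ranges over the $m/2$ ``forward'' ordered edges, or equivalently we just bound by all $m$ ordered edges. Summing the per-edge bound $\binom{\sqrt m}{k-2}$ over at most $m$ edges yields $\clk \leq m \cdot \binom{\sqrt m}{k-2}$, as claimed.

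The main obstacle, such as it is, is making the outdegree bound fully rigorous with the tie-breaking: when $d(u) = d(w)$ the orientation is decided by id, so I need the out-neighbors of $u$ to all have degree \emph{at least} $d(u)$, which holds because $u \prec w$ implies $d(u) \leq d(w)$ by definition of $\prec$. With that in hand the inequality $\sum_{w \in N^+(u)} d(w) \geq |N^+(u)| \cdot d(u) \geq |N^+(u)|^2$ follows (using $|N^+(u)| \leq d(u)$), and since these degrees are a sub-sum of $\sum_{x} d(x) = m$ we get $|N^+(u)| \leq \sqrt m$. Everything else is bookkeeping: confirming each clique is counted once and that $k-2$ vertices beyond the charged edge must all be out-neighbors of the minimum vertex.
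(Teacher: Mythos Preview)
Your argument is correct and follows essentially the same approach as the paper: both orient edges according to $\prec$, use the outdegree bound $d^+(u)\leq\sqrt{m}$, and count cliques via the minimum vertex. The only cosmetic difference is that the paper charges each $k$-clique to its $\prec$-minimum vertex $v$ (bounding by $\binom{d^+(v)}{k-1}$ and then using $\binom{d^+(v)}{k-1}\leq d^+(v)\binom{\sqrt m}{k-2}$), whereas you charge to the edge $(u_1,u_2)$ and bound by $\binom{\sqrt m}{k-2}$ per edge; summing gives the same bound.
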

\ifnum\conf=0
	\begin{proof}
Let $D$ be the DAG obtained by orienting edges in $G$ according to $\prec$. Let $d^+(v)$
    be the out-degree of vertex $v$ in $D$. Observe that $\max_v \{d^+(v)\} \leq \sqrt{m}$.
    (All $d^+(v)$ out-neighbors of $v$ have degree at least $d(v) \geq d^+(v)$. Thus, $d^+(v) \leq \sqrt{m}$.)
    The number of $k$-cliques where $v$ is the lowest vertex
    according to $\prec$ is at most ${d^+(v) \choose k-1}$. Thus,
    $\clk \leq \sum_v {d^+(v) \choose k-1} \leq {\sqrt{m} \choose k-2} \sum_v d^+(v) = m \binom{\sqrt{m}}{k-2}$.
	\end{proof}

\fi	
	
\section{The algorithm}\label{sec:ub}

The main algorithm for approximating the number of $k$-cliques is presented in Subsection~\ref{subsec:main-alg}
and named \textsf{Approximate-cliques}. It takes the following parameters.

\medskip
\begin{asparaitem}
    \item $\om$: This is assumed to be a fairly precise estimate of the number of (ordered) edges $m$, and can be obtained using~\cite{GR08} (in expected time $O\left(\frac{n}{\sqrt{m}}\right)\cdot\poly(\log n,1/\eps)$).
    \item $\onk$: This is assumed to be a constant-factor estimate of $\clk$, which is obtained by geometric search (as shown     
    \ifnum\conf=0
    in Subsection~\ref{subsec:search}).
    \else
%	    in Subsection~\ref{subsec:search} of the  full version).
    in Subsection 3.7 of the  full version).
    \fi
    \item $\eps$: The main approximation parameter. We  set $\oeps = \eps/5$.
    \item $\delta$: The failure parameter. We set $\odelta = \delta/4$.
\end{asparaitem}

% is given parameters $\om$ and $\onk$. The parameter $\om$ is assumed to be a fairly precise estimate of the number of (ordered) edges $m$, and the parameter $\onk$ is assumed to be a constant-factor estimate of $\clk$.
% The former can be obtained using~\cite{GR08}, and the latter can be obtained using a geometric search, as
% we show in Subsection~\ref{subsec:search}.
% The algorithm is also given an approximation parameter $\eps$ and a failure parameter $\delta$ (in addition to $n$ and $k$).
% %  and a failure parameter $\delta$.
% In what follows, $\oeps = \eps/5$ and $\odelta = \delta/4$.
% (as defined in the algorithm).
%	We first present (in Subsection~\ref{subsec:main-alg}) an algorithm for estimating the number of $k$-cliques when given parameters $\om$ and $\onk$ for which the following assumption holds. In Subsection~\ref{subsec:search} we explain how this assumption can be removed so as to obtain
%% describe
%an algorithm that requires no such prior knowledge.
%%
%
%	\begin{assumption}\label{assum:m and nk}
%		Let $\om$ and $\onk$ be parameters given to the algorithm. We assume that the following holds:
%		\begin{itemize}
%			\item $\om\in[(1- \oeps)m, m]$.
%			%		\item $\om^{\ell} \in[(1- \oeps)m^{\ell}, m^{\ell}]$.
%			\item $\onk \in \krange$.
%		\end{itemize}
%	\end{assumption}
\subsection{\Popular\ vertices and the assignment of cliques to vertices}\label{subsec:alg-defs}	

%An important notion is that of {\em \popular} vertices. This
The notion of {\em \popular} vertices, defined next,
is critical in reducing the variance of the output of our algorithm.

	\begin{definition}[\Popular\ and \unpopular\ vertices]\label{def:popular-v}
		\sloppy
%		For a vertex $u$, let $\sclk(u)$ denote the number of $k$-cliques that $u$ participates in.
% Moved above to preliminaries
		We say that a vertex $u$ is \textsf{\popular} if $\sclk(u) > \kthres$ or if
		$d(u) > \mk$. If $\sclk(u) \leq \frac{1}{4}\kthres$ and $d(u) \leq \mk$, then we say that $u$ is \textsf{\unpopular}.
	\end{definition}

Note that a vertex $u$ may be neither \popular\ nor \unpopular. This is the case  if
$d(u) \leq \mk$ and $\frac{1}{4}\kthres < \sclk(u) \leq \kthres$.

\smallskip
The following claim, whose proof follows directly from Definition~\ref{def:popular-v}, shows that we can ignore cliques that do not contain \unpopular\ vertices.

\begin{claim}\label{clm:u.b. pop}
If $\om \in\mrange$ and $\onk > \clk/4$, then at most $\oeps\clk$ $k$-cliques consist solely of vertices
that are not \unpopular.
\end{claim}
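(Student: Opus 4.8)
The statement to prove is \Clm{u.b. pop}: under the hypotheses $\om \in \mrange$ and $\onk > \clk/4$, at most $\oeps \clk$ $k$-cliques fail to contain any \unpopular\ vertex. The plan is to bound separately the number of $k$-cliques all of whose vertices have degree exceeding $\mk$, and the number of $k$-cliques that contain a vertex $u$ with $\sclk(u) > \frac14\kthres$; since every clique with no \unpopular\ vertex must, by \Def{popular-v}, fall into (at least) one of these two categories, summing the two bounds finishes the job.

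\textbf{First category (high-degree cliques).} Let $H = \{u : d(u) > \mk\}$ be the set of high-degree vertices. Since $\sum_u d(u) = m$, we have $|H| < m/\mk$. Recall $\mk = 4\om/(\oeps\onk)^{1/k}$, and using $\om \geq (1-\oeps)m \geq m/2$ and $\onk > \clk/4$ this gives a bound of the form $|H| = O\big((\oeps\clk)^{1/k}\big)$ up to small constants. The number of $k$-cliques contained in $H$ is at most $\binom{|H|}{k} \leq |H|^k/k!$, which is then $O(\oeps\clk/k!) \le \frac{\oeps}{2}\clk$ once the constants in $\mk$ are traced through (the constant $4$ in $\mk$ is chosen precisely to absorb the $k!$ and the factor-$2$ slack, possibly together with the $(1-\oeps)\ge 1/2$ bound). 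I would write $\binom{|H|}{k}\le |H|^k$ and check that $(m/\mk)^k \le \tfrac{\oeps}{2}\clk$ reduces to the chosen value of $\mk$.

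\textbf{Second category (cliques through a clique-heavy vertex).} Let $P = \{u : \sclk(u) > \frac14\kthres\}$. Each $k$-clique that contains a vertex of $P$ contains at least one such vertex, so the number of such cliques is at most $\sum_{u \in P}\sclk(u)$ — but this double counts; more carefully, assign each such clique to one of its vertices in $P$, so the count is at most $\sum_{u\in P}\sclk(u)$, which is itself at most $k\clk$ (since $\sum_{u}\sclk(u) = k\clk$). That bound is too weak on its own; instead I would bound $|P|$ first: since $\sum_u \sclk(u) = k\clk$, we get $|P| < k\clk/(\tfrac14\kthres) = 4k\clk/\kthres$. Then the number of cliques containing a vertex of $P$ is at most $\sum_{u\in P}\sclk(u)$, but to get a clean bound I would instead note that the number of $k$-cliques meeting $P$ is at most $|P|\cdot \max_{u}\sclk(u)$ is again too lossy, so the right move is: the number of $k$-cliques containing at least one vertex of $P$ is at most $\sum_{u \in P} \sclk(u) \le k \clk \cdot \tfrac{|P|\cdot(\tfrac14\kthres)}{k\clk}$... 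Rather, cleanly: since each clique counted here uses a vertex of $P$, and $\sum_{u\in P}\sclk(u)$ overcounts each such clique by at most its number of $P$-vertices ($\le k$), we need a direct argument. Recall $\kthres = k\cdot(50\onk)^{1-1/k}/\oeps^{1/k}$. I would bound the number of $k$-cliques with a vertex in $P$ by $\sum_{u\in P}\sclk(u)$, and bound $|P| \le 4k\clk/\kthres$, but since each vertex in $P$ contributes $\le$ (well, no uniform upper bound)... The actual clean route: the number of $k$-cliques containing \emph{some} vertex of $P$ is at most $\sum_{u \in P}\sclk(u)$, which is at most $(k-1)! \cdot |P| \cdot \binom{\sqrt m}{k-2}$ is overkill. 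The intended argument is simply $|P| \cdot \frac14\kthres < \sum_{u\in P}\sclk(u) \le k\clk$, hence the number of $k$-cliques through $P$ is at most $\sum_{u\in P}\sclk(u)$ and we just need $\sum_{u \in P}\sclk(u) \le \frac{\oeps}{2}\clk\cdot k$? That cannot hold. The genuinely correct step: a clique with no \unpopular\ vertex and \emph{not} entirely in $H$ contains a vertex $u\notin H$ with $\sclk(u)>\frac14\kthres$; count these cliques by this vertex, giving at most $|P\setminus H|\cdot$(cliques at $u$ that are counted), but a cleaner bound is that the total is at most $\sum_{u \in P}\sclk(u)$ and we instead bound the \emph{number of bad cliques} directly as $\le |P|\cdot \frac14\kthres \cdot$(no). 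I will present it as: number of bad cliques through $P$ $\le \sum_{u\in P}\sclk(u)$; and separately $\sum_{u\in P}\sclk(u)\le k\clk$ is useless, so the real claim must be that $|P|$ is so small that $|P|\cdot\frac14\kthres$... but $\sclk(u)$ can be as large as $\clk$.

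\textbf{The main obstacle.} The genuinely delicate point — and where I would spend the most care — is the second category. The right formulation: every bad clique not contained in $H$ is assigned to its lowest-$\prec$ vertex $u\notin H$ with $\sclk(u) > \tfrac14\kthres$; the number of such cliques assigned to a \emph{fixed} such $u$ is at most $\binom{d^+(u)}{k-1}\le \binom{\mk}{k-1}$ (using $d^+(u)\le d(u)\le \mk$ since $u\notin H$, and the argument from \Clm{u.b. nk}). Then the total over such $u$ is at most $|P|\cdot\binom{\mk}{k-1} < \frac{4k\clk}{\kthres}\cdot\mk^{k-1}$, and I would verify that plugging in $\kthres = k(50\onk)^{1-1/k}/\oeps^{1/k}$ and $\mk = 4\om/(\oeps\onk)^{1/k}$, together with $\onk \le (4/3)\clk$ roughly and $\om\le m$... actually $\om$ cancels only if $\mk$ is expressed via $\onk$ — here $\mk$ uses $\om$, so I'd use $\om \le m$ is not enough; I suspect the intended definitions make $\mk^{k-1}/\kthres = O(\onk/( k\clk))\cdot\tfrac{\oeps}{4}$ work out with the constant $50$ chosen to kill the $4^{k-1}$ and $4k$ factors. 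So the plan's final step is: substitute the closed forms of $\mk$, $\kthres$, use $\clk/4 < \onk$ and $m/2 \le \om \le m$, and check the resulting inequality $\frac{4k}{\kthres}\cdot\mk^{k-1} \le \frac{\oeps}{2}$ — the constant $50$ being exactly what is needed for this — after which the two categories combine to give at most $\oeps\clk$ bad cliques. I expect the arithmetic tracing of the constant $50$ through the $4^{k-1}$ term to be the one place where a careless estimate would break the proof.
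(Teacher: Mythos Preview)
Your first category is fine, but the second category has a genuine gap, and no amount of constant-chasing will fix it. Your final proposed bound for bad cliques not contained in $H$ is $|P|\cdot\binom{\mk}{k-1}$, and this quantity is simply too large: substituting the definitions and using $\onk\approx\clk$, $\om\approx m$ gives (up to constants)
\[
|P|\cdot\mk^{k-1}\;\approx\;\frac{\clk\,\oeps^{1/k}}{\onk^{1-1/k}}\cdot\frac{\om^{k-1}}{(\oeps\onk)^{(k-1)/k}}\;\approx\;\frac{m^{k-1}}{\oeps^{(k-2)/k}\,\clk^{(k-2)/k}}\,.
\]
Take $\clk\approx m^{k/2}$: the expression becomes $\Theta(m^{k/2}/\oeps^{(k-2)/k})$, which exceeds $\oeps\clk$ by a factor of roughly $\oeps^{-(2-2/k)}$. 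The exponent of $\oeps$ is wrong, so the constant $50$ cannot help. (Separately, your intermediate claim that the count at $u$ is at most $\binom{d^+(u)}{k-1}$ is false --- $u$ is only the minimum among the non-$H$ vertices of the clique, not the overall minimum --- though replacing $d^+(u)$ by $d(u)$ would make that step valid; the real failure is in the final inequality.)

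The paper's argument is much simpler, and you were one step away from it in your first category. Every non-\unpopular\ vertex lies in $H\cup P$. You already computed $|H|\le m/\mk$ and $|P|\le 4k\clk/\kthres$; plugging in the thresholds shows each is $O\big((\oeps\clk)^{1/k}\big)$, so the total number of non-\unpopular\ vertices is at most (roughly) $(\oeps\clk)^{1/k}$. Any $k$-clique consisting solely of non-\unpopular\ vertices has all $k$ of its vertices in this set, and there are at most $\big((\oeps\clk)^{1/k}\big)^k=\oeps\clk$ such cliques. No case split is needed: the point is that $|P|$, like $|H|$, is already tiny --- of order $(\oeps\clk)^{1/k}$, not merely $o(\clk)$ --- and that is what your second-category analysis failed to exploit.
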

\ifnum\conf=0	
\begin{proof}

	For every  vertex $u$ that is not \unpopular, either \[\sclk(u) > \frac{1}{4}\kthres \text{ \; or \;\;} d(u)>  \mk.\]
	Recall that $\sum_{u \in V} c_k(u) = k \cdot \clk.$
	Therefore, if $\onk \geq \clk/4$,
	then there are at most
	% $k\cdot \clk/(\frac{1}{4}\kthres)\leq (\oeps \clk)^{1/k}/2$
	\[\frac{k\cdot \clk}{\frac{1}{4}\kthres} \leq (\oeps \clk)^{1/k}/2\]
	vertices of the former type, and if  $\om \geq (1-\eps)m$, then there are
	at most
	\[\frac{m}{\mk} \leq (\oeps \clk)^{1/k}/2\]
	of the latter type. This implies that there are at most $(\oeps \clk)^{1/k}$ vertices that are not \unpopular, and it follows that the number of $k$-cliques for which  all of their vertices are not \unpopular\ is at most  $\oeps \clk$.
\end{proof}
\fi

	\begin{definition}[An appropriate partition] \label{def:appropriate}
		We say that a partition $P=(V_0,V_1)$ of $V$ is \textsf{appropriate} if every \unpopular\ vertex (as defined in Definition~\ref{def:popular-v}) is in $V_0$ and every \popular\ vertex is in $V_1$ (and any other vertex can be either in $V_0$ or $V_1$).
	\end{definition}
	
	We next specify the assignment of cliques to vertices.
 % (note that this definition is different from the one used in the proof of Claim~\ref{clm:u.b. nk}).
	\begin{definition}[Assigning cliques] \label{def:assign}
     Fix a partition $P = (V_0, V_1)$.
     \begin{asparaitem}
        \item Assignment of cliques: We \textsf{assign} each $k$-clique $K=\{u_1, \ldots, u_k\}$ to the vertex $u_i$ that is the first (according to $\prec$) vertex of $K$ in $V_0$. If all of $K$'s vertices are in $V_1$, then $K$ is \textup{not assigned} to any vertex.
	    \item $\ap(u), \ap(S)$: We denote the number of $k$-cliques assigned to $u$ (for this $P$) by $\ap(u)$.
         For a set $S$ of vertices, $\ap(S)=\sum_{u \in S}\ap(u)$.
     \end{asparaitem}
	\end{definition}
	
The following is a corollary of Claim~\ref{clm:u.b. pop}, Definition~\ref{def:appropriate} and Definition~\ref{def:assign}.	
\begin{corollary}\label{clm:wt(V)}
		For every partition $P=(V_0,V_1)$ of $V$ it holds that $\ap(V) \leq \clk$. Furthermore,
		if $P=(V_0,V_1)$ is appropriate,
 % and $\onk \leq \clk$, then
 $\om \geq (1-\eps)m$ and $\onk \geq \clk/4$, then
\confeqn{\ap(V) \in [(1-\oeps)\clk,\clk]\;.}
\end{corollary}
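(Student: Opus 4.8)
The statement to prove, Corollary~\ref{clm:wt(V)}, has two parts: an unconditional upper bound $\ap(V) \leq \clk$ for every partition, and a two-sided bound when the partition is appropriate and the estimates $\om, \onk$ are accurate.

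\medskip
\noindent\textbf{Plan for the proof.} The upper bound is essentially immediate from Definition~\ref{def:assign}: each $k$-clique is assigned to at most one vertex (the $\prec$-first vertex lying in $V_0$), and a clique with all vertices in $V_1$ is assigned to no vertex. Hence the assignment map is a partial function from $k$-cliques to vertices, so $\ap(V) = \sum_{u \in V} \ap(u)$ counts each $k$-clique at most once, giving $\ap(V) \leq \clk$. I would state this in one or two sentences.

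\medskip
\noindent For the lower bound under the ``appropriate'' hypothesis, the key observation is that the only $k$-cliques \emph{not} counted by $\ap(V)$ are exactly those with all $k$ vertices in $V_1$. By Definition~\ref{def:appropriate}, every \unpopular{} vertex lies in $V_0$, so a clique all of whose vertices are in $V_1$ contains no \unpopular{} vertex, i.e., it ``consists solely of vertices that are not \unpopular.'' Therefore the number of unassigned $k$-cliques is at most the number of such cliques. Now apply Claim~\ref{clm:u.b. pop}, whose hypotheses $\om \in \mrange$ (equivalently $\om \geq (1-\oeps)m$, which follows from $\om \geq (1-\eps)m$ since $\oeps = \eps/5 < \eps$) and $\onk > \clk/4$ are exactly what we are given: the number of unassigned cliques is at most $\oeps \clk$. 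Hence
\[
\ap(V) \;\geq\; \clk - \oeps\clk \;=\; (1-\oeps)\clk,
\]
and combined with the already-established $\ap(V) \leq \clk$ we get $\ap(V) \in [(1-\oeps)\clk, \clk]$.

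\medskip
\noindent\textbf{Main obstacle (and why it is minor here).} There is no real obstacle: the corollary is a bookkeeping consequence of the definitions plus Claim~\ref{clm:u.b. pop}, which has already done the combinatorial work. The only point requiring a moment's care is matching the hypotheses—checking that ``$\om \geq (1-\eps)m$'' implies ``$\om \in \mrange = [(1-\oeps)m, m]$'' as needed by Claim~\ref{clm:u.b. pop}; this uses $\oeps = \eps/5$ and the (implicit, standard) upper bound $\om \leq m$ from the estimation routine of~\cite{GR08}. I would flag this explicitly so the reader sees the hypothesis translation is legitimate, and otherwise the proof is two or three short sentences.
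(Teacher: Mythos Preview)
Your approach is correct and matches the paper's: the paper simply declares this a corollary of Claim~\ref{clm:u.b. pop}, Definition~\ref{def:appropriate}, and Definition~\ref{def:assign}, and the (omitted) argument is exactly the one you give --- each clique is assigned to at most one vertex, and the only unassigned cliques are those contained in $V_1$, hence with no \unpopular\ vertex, so Claim~\ref{clm:u.b. pop} bounds their number by $\oeps\clk$.

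One small slip in your hypothesis-matching paragraph: since $\oeps = \eps/5 < \eps$, we have $(1-\eps)m < (1-\oeps)m$, so ``$\om \geq (1-\eps)m$'' does \emph{not} formally imply ``$\om \geq (1-\oeps)m$'' (the implication goes the other way). This is a harmless inconsistency in the paper's own statements: if you inspect the proof of Claim~\ref{clm:u.b. pop}, the only use of $\om$ is the bound $m/(\mk) \leq (\oeps\clk)^{1/k}/2$, for which $\om \geq (1-\eps)m$ (indeed $\om \geq 3m/4$) already suffices. So your argument goes through; just don't claim the stronger interval membership as a consequence of the weaker lower bound.
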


\iffalse
	\begin{proof}
		By Definition~\ref{def:assign}, each $k$-clique
        is assigned to at most one vertex.
%         that has at least one vertex in $V_0$ is assigned to a single vertex, and $k$-cliques for which  all of their vertices are in $V_1$ are not assigned to any vertex.
        Hence $\ap(V) \leq \clk.$
				
		In order to prove the second part of the claim, we show that if $P$ is appropriate,
$\om \geq (1-\eps)m$ and $\onk \geq \clk/4$,
% and $\onk \clk$,
then the number of $k$-cliques in the graph for which  all of their vertices are in $V_1$ is at most $\oeps\clk$. If $P$ is appropriate, then every \unpopular\ vertex is in $V_0$, and  for every vertex $u\in V_1$,
either $\sclk(u) > \frac{1}{4}\kthres$ or $d(u)>  \mk$. Recall that $\sum_{u \in V} c_k(u) = k \cdot \clk.$
		Therefore, if %$\onk\leq \clk$,
$\onk \geq \clk/4$
then there are at most
% $k\cdot \clk/(\frac{1}{4}\kthres)\leq (\oeps \clk)^{1/k}/2$
$\frac{k\cdot \clk}{\frac{1}{4}\kthres} \leq (\oeps \clk)^{1/k}/2$
vertices of the former type, and if  $\om \geq 3m/4$, then there are
at most
% $m/(\mk) \leq (\oeps \clk)^{1/k}/2$
$\frac{m}{\mk} \leq (\oeps \clk)^{1/k}/2$
of the latter type. This implies that there are at most $(\oeps \clk)^{1/k}$ vertices in $V_1$, implying that the number of $k$-cliques for which  all of their vertices are in $V_1$ is at most  $\oeps \clk$.
	\end{proof}
	\fi

Another distinction between types of vertices that will play a central role in our analysis is the following.	
		\begin{definition}[High-degree and low-degree vertices]\label{def:high-low}
			We say that a vertex $u$ is a \textsf{high-degree} vertex if $d(u) > \thres$ and otherwise we say it is a \textsf{low-degree} vertex.
		\end{definition}

% \newpage	
	\subsection{The main algorithm and the procedures it uses}\label{subsec:main-alg}

    % has numerous moving parts,
    % each presented as a separate subroutine.
    In this subsection we present our main algorithm and the corresponding main theorem.
\ifnum\conf=0
    Our algorithm invokes several procedures, which are provided in the following subsections.
\else
    Our algorithm invokes several procedures.
\fi
   Here we shortly describe all procedures and state the main claim regarding each of them.
 %   We then give a high level exposition of the main procedure and the proof of the main theorem.
 % We close this subsection with a proof sketch of the main theorem, which builds on these claims.
 Building on these claims we give a proof sketch of the main theorem (the complete proof appears in
 \ifnum\conf=0
 Subsection~\ref{subsec:thm-correct}).
 \else
% Subsection~\ref{subsec:thm-correct} of the accompanying full version).
  Subsection 3.6 of the full version).
 \fi

    \paragraph{\textsf{Approximate-cliques}.} This is the main algorithm, and it is provided in Figure~\ref{fig:main-alg}.
The algorithm begins by constructing two random multisets, $S$ and $T$. The multiset $S$ is obtained by simply
selecting vertices uniformly (independently) at random. The multiset $T$ is constructed by a procedure  \link{Sample-degrees-typical}.
% D: I don't think the following is needed here
% The procedure also builds data structures
% to random sample in $E(S)$ and $E(T)$
We show that with high probability, $S$ and $T$ have certain desired properties
(where the correctness of subsequent steps of the algorithm relies on these properties).

In Step~\ref{step:q_loop}, the algorithm calls two procedures: \link{Sample-a-clique}
 and \link{Is-\popular}.
The heart of the algorithm is the procedure \link{Sample-a-clique} that
either returns a
% $k$-clique containing a vertex $u\in S$,
$k$-clique $\{u,v,w_1,\dots,w_{k-2}\}$ where $u\in S$ or  returns \emph{fail}.
The  procedure \textsf{Is-\popular} distinguishes between \popular\ and \unpopular\ vertices
(as defined in Definition~\ref{def:popular-v}). It is used in order to decide for each
 $k$-clique that is output in the previous step, whether it  is assigned to $u$
 (as defined in Definition~\ref{def:assign}).
% Since the procedure \textsf{Is-\popular}
% In order to bound the running time of the algorithm,
% if the number of $k$-cliques found by \link{Sample-a-clique} is above a certain threshold, then the algorithm aborts and returns \emph{fail}.

% Finally, we
Note that if the sample size $s$ (defined in Step~\ref{step:S}) is larger than $n$, then the algorithm can simply set $S=V$.
% or $t$ (for the vertex
% designated samples sizes for the multisets $S$ and $T$
% are larger than $n$, then the algorithm can simply use $S = V$ (respectively, $T=V$).
%%%%%%%%%%%%
Similarly, if  $q$ (the number of iterations in
Step~\ref{step:q_loop}) is larger than $\om$, then the algorithm can query upfront all edges incident to $S$ and
% those incident to the other end-points reached so as to obtain the subgraph induced by $S$ and the neighbors of % vertices in $S$. It can then
their neighbors so that it never performs more than $\min\{m,\om\}$ queries. (If it views more than $\om$ edges, then it can abort.) Finally, we may assume that $\eps>1/{\om^{k/2}}$, since otherwise we are required to output the exact number of $k$-cliques in the graph (recall that by Claim~\ref{clm:u.b. nk}, $\clk<m^{k/2}$), and thus can simply invoke the exact enumeration algorithm of \cite{ChNi85}.

\confFigure{{\bf Approximate-cliques$\;(n,k,\om,\onk, \eps,\delta)$} \label{Approximate-cliques} 
				\smallskip
				\begin{compactenum}
					\item Let $\oeps=\eps/5$  and  $\odelta=\delta/4.$
                           \label{step:set-eps}

					\item Let $S$ be a multiset of $s=\sets$ vertices chosen uniformly %, independently,
                       at random.  \label{step:S}
					\item Query the degree of each vertex in $S$ and
					set up a data structure $D(S)$ that supports sampling a uniform edge in $E(S)$ in constant time.
%					each vertex  $u \in S$ with probability proportional to $d(u)/m(S)$
%					\item Let $counter=0$. \label{step:counter}
                   \item  Invoke \link{Sample-degrees-typical}$(n,k,\om,\onk,\oeps,\odelta)$. If the procedure
                     returned \emph{fail}, then \textbf{return} \emph{fail}. Otherwise, let $(T,m(T), D(T))$ be its output. \label{step:sample-sets}
                    % Let $(T,m(T), D(T))\leftarrow$ \link{Sample-degrees-typical}$(n,k,\om,\onk,\oeps,\odelta)$ and if the procedure returned \emph{fail}, then \textbf{return} \emph{fail} \label{step:sample-sets}
%                    . If it returns \emph{fail}, then   \textbf{return} $\hnk=0$.
%                    Else (it returns sets $S,T$, together with $m(S), m(T)$ and data structures $D(S),D(T)$),
%                     if $m(S) > 10 \cdot \frac{s}{n}\cdot \om$, then
%					\textbf{return} \emph{run-time-exceeded}. \label{step:fail-S}

					\item For $i=1$ to $q=\setq$ do: \label{step:q_loop}
					\begin{compactenum}
						\item Invoke \link{Sample-a-clique}$(S,T,m(S),m(T),D(S),D(T), k,\om).$ \label{step:sample-clique}
						\item If the procedure returned \emph{fail} then set $\chi_i=0.$ Otherwise, let
						 $K_i=(u_i,v_i, w_{i,1}, \ldots, w_{i,{k-2}})$ be the $k$-tuple returned and do the following.
						    \begin{compactenum}			
%							    \item $counter=counter+1$.	If $counter>\frac{350k }{\oeps^{\hspace{1pt}3}}$ then \textbf{return}                \emph{fail}. \label{step:fail-nk}
							   % \item Query the degrees of all the vertices in $K$.
							   \item Query the degree and invoke \link{Is-\popular}$(x,T,m(T),D(T),\om,\onk,n,k,\oeps,\odelta)$ on each vertex $x \in K_i$.\label{step:is-pop}
							   \item If  $u_i$ is the first vertex (according to $\prec$) in $K_i$
 for which \textsf{Is-\popular} returned \emph{\unpopular}, then set $\chi_i=1$.
							 Otherwise, set $\chi_i=0$. \label{step:chi_i}
						   \end{compactenum}
					\end{compactenum}
					\item
%					Let $\chi=\frac{1}{q}\sum_{i=1}^q \chi_i$, and
\textbf{Return}
                      $\hnk=\frac{m(S) (\thres)^{k-2}}{(k-2)!\cdot (s/n)}\cdot \frac{1}{q}\sum_{i=1}^q\chi_i\;$. \label{step:hnk}
 \label{step:chi}
				\end{compactenum}}
	{The main algorithm for computing a $(1\pm \eps)$-estimate of the number of $k$-cliques in a graph (given a constant factor estimate of this number). \label{fig:main-alg}}

The main theorem of our paper is the following (where the second item in the theorem is used by the geometric search algorithm for $\clk$).

	\begin{theorem}\label{thm:correct}
\sloppy
%		Assume that $\om\in[(1-\oeps)m,m]$, and
		Consider an invocation of
%		$\onk \in \krange$, then every invocation of
Algorithm \link{Approximate-cliques}$(n,k,\om,\onk,\eps,\delta)$.
%the following holds.
\begin{enumerate}
\item If $\om\in\mrange$ and $\onk \in \krange$, then with probability at least $1-\delta$, \link{Approximate-cliques}
returns a value $\hnk$ such that
		 \confeqn{\hnk \in (1\pm \eps) \cdot \clk\;.}
	\item If $\om \in \mrange$ and $\onk > \clk$, then with probability at least $\eps/4$, \link{Approximate-cliques}
	returns a value $\hnk$ such
	\confeqn{\hnk  \leq (1+ \eps) \cdot \clk\;.}
	
		\item If $\onk\leq \om^{k/2}$, then the expected query complexity and running time of \link{Approximate-cliques} are
% $O\left({\frac{n\cdot k^2}{\onk^{1/k}}+\frac{m \cdot \om^{(k-2)/2\cdot k^2}}{\onk\cdot (k-2)!}\cdot
% \frac{\clk}{\onk} }\right)\cdot \poly(\log n,1/\eps,\log(1/\delta))$.
$O\left(\frac{n}{\onk^{1/k}}+ \frac{\max\{m,\om\}\cdot \om^{(k-2)/2}}{\onk} \cdot \frac{\clk}{\onk}\right)
\cdot\poly(\log n,1/\eps,\log(1/\delta),k)$.\footnote{In the second additive term there is actually a dependence on $2^k/(k-2)!$, which we ignored for the sake of simplicity.}
The number of queries is always upper bounded by
$O\left(\frac{n}{\onk^{1/k}}\right)\cdot\poly(\log n,1/\eps,\log(1/\delta),k) + \min\{m,\om\}$.
\label{item:thm-running}
\end{enumerate}
\end{theorem}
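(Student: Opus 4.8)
The plan is to assemble the guarantees of the three subroutines (\link{Sample-degrees-typical}, \link{Sample-a-clique}, and \link{Is-\popular}) around a single ``good event'' on the random multisets $S$ and $T$, under which the scaled average $\frac1q\sum_i\chi_i$ estimates $\clk$ up to a $(1\pm O(\oeps))$ factor, and then to union-bound the few ways the good event can fail.

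\textbf{Step 1 (good events for $S$ and $T$).} First I would record what the two multisets must satisfy. For $T$: by the guarantee of \link{Sample-degrees-typical}, with probability $\ge 1-\odelta$ the procedure succeeds and returns $(T,m(T),D(T))$ with $m(T)=O(m\cdot t/n)$ and $d_T(w)\in(1\pm\oeps/k)\cdot d(w)\cdot(t/n)$ for every high-degree vertex $w$; this is exactly what makes the ``high case'' of \link{Sample-a-clique} hit each high-degree vertex with probability $(1\pm O(\oeps))/\thres$. For $S$: by Markov, $m(S)\le c\cdot m\cdot(s/n)$ with constant probability, and --- conditioned on the $T$-event and on the high-probability event that \link{Is-\popular} never misclassifies a vertex that is genuinely \popular\ or genuinely \unpopular\ (Definition~\ref{def:popular-v}) --- a Chernoff bound (Section~\ref{sec:prel}) with range parameter $B=\kthres$ gives concentration of the ``assigned-to-$S$'' quantity around its mean, provided $s\ge\sets$; this, together with $\onk\le\clk$ (so that $q=\setq$ ends up large enough), is how the values $\sets$ and $\setq$ are reverse-engineered.

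\textbf{Step 2 (the conditional mean of $\chi_i$).} Fix $S$, $T$ and the coins of \link{Is-\popular}, and let $s_x$ denote the probability that one call of \link{Is-\popular}$(x)$ answers \unpopular. For a single iteration of the loop, \link{Sample-a-clique} returns each $k$-clique $K$ incident to a vertex $u\in S$ via its canonical edge (together with that $u$) with probability $(1\pm O(\oeps))\frac{(k-2)!}{m(S)\cdot(\thres)^{k-2}}$, and then $\chi_i=1$ iff the sampled $u$ is the first vertex of $K$ in $\prec$-order declared \unpopular. A telescoping identity then gives $\EX[\chi_i\mid S,T]=(1\pm O(\oeps))\frac{(k-2)!}{m(S)(\thres)^{k-2}}\cdot\beta(S)$, where $\beta(S)=\sum_{u\in S}\beta(u)$ and $\beta(u)$ is a ``fractional'' version of $\ap(u)$ (Definition~\ref{def:assign}) built from the $s_x$'s. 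The crucial observation --- which replaces the doomed attempt to bound how many cliques are ``affected'' by borderline misclassifications --- is that $\beta(V)=\sum_K\bigl(1-\prod_{x\in K}(1-s_x)\bigr)$ is automatically squeezed into $[(1-O(\oeps))\clk,\;\clk]$: it is at most $\clk$ since every summand is at most $1$, and at least $(1-O(\oeps))\clk$ because on the good event any clique containing a genuinely \unpopular\ vertex contributes at least $1-O(\oeps)$, while the number of cliques with no \unpopular\ vertex is at most $\oeps\clk$ by Claim~\ref{clm:u.b. pop} --- and this holds regardless of how \link{Is-\popular} behaves on vertices that are neither \popular\ nor \unpopular. (This is the analogue, for the randomized test, of Corollary~\ref{clm:wt(V)}.) Together with Step~1 and the definition of $\hnk$ in Step~\ref{step:hnk}, this yields $\EX[\hnk\mid S,T]=(1\pm O(\oeps))\cdot\frac{\beta(S)}{s/n}=(1\pm O(\oeps))\clk$.

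\textbf{Step 3 (concentration, and the overestimate case).} Conditioned on $S$ and $T$, the $\chi_i$ are i.i.d.\ indicators, so $\frac1q\sum_i\chi_i$ concentrates to within $(1\pm\oeps)$ of its mean by the Chernoff bound once $q\cdot\EX[\chi_i\mid S,T]=\Omega(\poly(1/\oeps)\cdot\ln(1/\odelta))$; since $\EX[\chi_i\mid S,T]$ is of order $\frac{(k-2)!\,\clk\,(s/n)}{m(S)(\thres)^{k-2}}$ and $m(S)=O(m\cdot s/n)$ on the good event, this is exactly what the value $q=\setq$ provides. A union bound over the $O(1)$ failure modes (each of probability $\le\odelta=\delta/4$) proves the first claim. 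For the second claim ($\onk>\clk$), the sample size $s$ is only smaller, so $\beta(S)$ --- or even $S$ itself --- may miss all cliques and $\hnk=0$; but in every case $\EX_S[\beta(S)]=\beta(V)\cdot(s/n)\le\clk\cdot(s/n)$, so conditioning on the ($\ge1-\odelta$) $T$-event and on $m(S)$ not being too large, a one-sided Markov bound yields $\hnk\le(1+\eps)\clk$ with probability at least $\eps/4$.

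\textbf{Step 4 (complexity).} I would add up: $O(s)=O(n/\onk^{1/k})\cdot\poly$ degree queries to build $S$; $\widetilde{O}(n/\sqrt{\om})=O(n/\onk^{1/k})\cdot\poly$ for \link{Sample-degrees-typical} (using $\onk\le\om^{k/2}$); and the loop, which runs $q$ times at small \textit{expected} per-iteration cost, except that the (possibly large) cost of running \link{Is-\popular} on the $k$ vertices of a returned clique is paid \textit{only when} \link{Sample-a-clique} succeeds --- which happens with probability of order $\frac{(k-2)!\sum_{u\in S}\sclk(u)}{m(S)(\thres)^{k-2}}$ (hence, in expectation over $S$, of order $\frac{(k-2)!\,k\clk}{m(\thres)^{k-2}}$) --- and each such call costs at most $\widetilde{O}\!\bigl(\frac{\mk\cdot(\thres)^{k-2}}{(k-2)!\cdot\kthres}\bigr)$, since one degree query rejects for free any vertex of degree above $\mk$. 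Multiplying the expected number of successful iterations ($q$ times the success probability, of order $\clk/\onk$ up to $\poly$ factors) by $k$ and by the per-call cost, and using $\mk/\kthres=\Theta(\om/(k\onk))\cdot\poly(1/\eps)$, collapses to the second term of the stated bound; the deterministic $O(n/\onk^{1/k})\cdot\poly+\min\{m,\om\}$ bound follows from the early-abort rules stated right after the algorithm (cap the loop, pre-read the edges of $S$ once more than $\om$ of them are seen) together with Claim~\ref{clm:u.b. nk} (which justifies assuming $\eps>1/\om^{k/2}$). Carrying the $\onk$-versus-$\clk$ regimes through this bookkeeping is what produces the extra $\clk/\onk$ factor.

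\textbf{Main obstacle.} I expect Step~2 to be the delicate part. Since \link{Is-\popular} is re-run with fresh randomness every iteration there is no single ``induced partition,'' and it may answer either way on vertices that are neither \popular\ nor \unpopular; the fix is to pass to the fractional quantity $\beta$ and exploit the telescoping identity, so that the behaviour on borderline vertices is irrelevant and Claim~\ref{clm:u.b. pop} alone pins $\beta(V)$ down between $(1-O(\oeps))\clk$ and $\clk$. A secondary nuisance is that the single multiset $T$ feeds both \link{Sample-a-clique} and every \link{Is-\popular} call, so all of these must be analyzed conditioned on the one $T$-event rather than as independent events, and the tiny but numerous failure probabilities of the \link{Is-\popular} invocations --- governed by the parameter $\gamma=\setgamma$ --- must be folded into the same union bound.
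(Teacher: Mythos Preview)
Your outline is correct; the substantive divergence from the paper is confined to your Step~2.

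The paper handles the \link{Is-\popular} calls by a \textit{thought experiment}: it imagines running \link{Is-\popular} once on every vertex of $V$ (one set of coins per vertex), obtaining a single partition $\Pip=(V_0,V_1)$; a union bound over the $n$ vertices shows $\Pip$ is appropriate with probability $\ge 1-\odelta$. All subsequent analysis then treats every call the algorithm makes to \link{Is-\popular} as if it returned the answer dictated by this fixed $\Pip$. This lets the paper work with the \textit{integer} assignment $\alpha_{\Pip}(u)$ of Definition~\ref{def:assign}, and the whole argument is shorter: Corollary~\ref{clm:wt(V)} immediately gives $\alpha_{\Pip}(V)\in[(1-\oeps)\clk,\clk]$, and the Chernoff range bound $\alpha_{\Pip}(u)\le\kthres$ is automatic since cliques are only assigned to non-\popular\ vertices.

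Your route never commits to a partition and instead averages over the \link{Is-\popular} coins to obtain the fractional $\beta(u)$ together with the telescoping identity $\beta(V)=\sum_K\bigl(1-\prod_{x\in K}(1-s_x)\bigr)$. This is more work --- in particular you must re-derive the range bound $\beta(u)\le\kthres$ (it does hold: for \unpopular\ and borderline $u$ use $\beta(u)\le\sclk(u)\le\kthres$; for \popular\ $u$ with $d(u)\le\mk$ use $\beta(u)\le s_u\,\sclk(u)\le(\odelta/n)\clk\le\kthres$ since $\clk^{1/k}\le n$; and $s_u=0$ when $d(u)>\mk$) --- but it has the virtue of applying verbatim to the algorithm as written, where \link{Is-\popular} is re-invoked with fresh coins in every iteration and may classify a borderline vertex inconsistently across iterations. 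The paper's fixed-$\Pip$ argument is tacitly analyzing the variant in which \link{Is-\popular} answers are cached per vertex; your $\beta$-argument is the cleaner way to close that gap.

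Two minor remarks. In Steps~1 and~3 you condition on $m(S)\le c\cdot m\cdot(s/n)$; this is unnecessary, since the $m(S)$ in $q$ cancels exactly against the $m(S)$ in $\EX[\chi_i\mid S,T]$ (both for the Chernoff step in Item~1 and for the Markov step in Item~2), and the paper indeed never bounds $m(S)$ in the correctness analysis. In your ``secondary nuisance'' you attribute the \link{Is-\popular} failure probabilities to $\gamma=\setgamma$; in the paper $\gamma$ controls only the event that $T$ is degrees-typical, while the per-vertex \link{Is-\popular} error is the separate $\odelta/n$ of Lemma~\ref{lem:is-popular}.
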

	
\paragraph{\link{Sample-degrees-typical}.} This procedure is described in 
\ifnum\conf = 0
Figure~\ref{fig:samp-sets}.
\else
Figure 4 in the full version.
%Figure~\ref{fig:samp-sets} in the full version.
\fi
Its goal is to output a degrees-typical multiset, which is defined below. The procedure itself
is quite simple; it repeats the process of sampling a uniform random multiset for a sufficient number of times in order to achieve this condition with high probability.

		\begin{definition}\label{def:T-typical}
		We say that a multiset $T$ of size $t$ is \textsf{degrees-typical} if
$m(T)\leq \frac{t}{n}\cdot 4\om$ and
for every high-degree vertex $w \in V$,
\ifnum\conf=0
\[\dm_T(w) \in \left(1\pm \frac{\oeps}{k}\right)\cdot \frac{t}{n}\cdot d(w)\;.\]
\else
it holds that 	
$\dm_T(w) \in \left(1\pm ({\oeps}/{k})\right)\cdot \frac{t}{n}\cdot d(w)$.
\fi
		\end{definition}

In 
\ifnum\conf=0
Subsection~\ref{subsec:T}
\else
%Subsection~\ref{subsec:T} of the full version 
Subsection 3.5 of the full version 
\fi
we prove the following lemma regarding the correctness and running time of the procedure \link{Sample-degrees-typical}.
 %(recall that the notion of an appropriate partition was defined in Definition~\ref{def:appropriate}).
\begin{lemma}\label{lem:T-typical}
\sloppy
Consider an invocation of \link{Sample-degrees-typical}$(n,k,\om,\onk,\oeps,\odelta)$.
The procedure either returns \emph{fail}, or returns a multiset $T$ together with $m(T)$
and a data structure $D(T)$ that supports selecting a uniform edge in $E(T)$ in time $O(1)$.

\sloppy
Let $\gamma=\setgamma$.
 If $\om\in \mrange$,
 % Then the procedure either returns \emph{fail} with probability at most $\oeps$,
%or it
then with probability at least $1-\gamma$,
the procedure returns a triple $(T,m(T), D(T))$ such that the multiset $T$ is degrees-typical.

The running time of \link{Sample-degrees-typical}  %\textsf{Sample-sets$(n,k,\om,\onk,\oeps)$}
is
$O\left(
\frac{n}{\sqrt{m}}\cdot \frac{k^2\cdot \log(n/\gamma)\cdot\log(1/\gamma)}{\oeps^2 }\right)
%=O\left(\frac{n}{\sqrt{\om}}\right)\cdot \poly(\log n,k,1/\oeps,\log(1/\min\{\oeps,\odelta\}))$.
\;.$
\end{lemma}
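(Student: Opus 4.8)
The plan is to analyze a single draw of a uniform random multiset $T$ of size $t$, show that it is degrees-typical with probability at least $2/3$ (say), and then repeat the draw $\Theta(\log(1/\gamma))$ times, returning the first $T$ that passes an internal test, so that the failure probability drops below $\gamma$. The value of $t$ is dictated by the setting $t = \sett$ (roughly $\frac{n}{\sqrt m}\cdot\frac{k^2\log(n/\gamma)}{\oeps^2}$, up to the $\thres = 2\sqrt{\om}$ in the denominator); I would carry the bound symbolically and only plug in at the end. First I would bound $m(T)$: since $\EX[m(T)] = \frac{t}{n}\cdot m \le \frac{t}{n}\cdot\frac{\om}{1-\oeps}$ when $\om\in\mrange$, and $m(T)$ is a sum of $t$ independent contributions $d(v_i)\in[0,n]$, a Chernoff bound (the version stated in Section~\ref{sec:prel}, with $B = n$) gives $m(T)\le \frac{t}{n}\cdot 4\om$ except with probability $\exp(-\Omega(t\om/n^2))$; the choice of $t$ makes this tiny (in fact $\le \gamma/4$ suffices). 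Note this event can be checked by the procedure directly once $T$ is drawn and its degrees are queried, which is what makes the repeat-until-success strategy work.

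Next I would handle the concentration of $\dm_T(w)$ for each high-degree vertex $w$, i.e.\ $d(w) > \thres = 2\sqrt{\om}$. Fix such a $w$. Then $\dm_T(w) = \sum_{i=1}^t \bone[v_i \sim w]$ is a sum of independent $\{0,1\}$ variables with mean $\frac{t}{n}\cdot d(w)$. Since $d(w) > 2\sqrt{\om} \ge \sqrt m$ (using $\om \le m/(1-\oeps)$, or more carefully $\om\ge (1-\oeps)m$ gives $d(w)>2\sqrt{(1-\oeps)m}\ge\sqrt m$), the expectation $\mu_w \eqdef \frac{t}{n}d(w) \ge \frac{t\sqrt m}{n} = \Omega\!\big(\frac{k^2\log(n/\gamma)}{\oeps^2}\big)$. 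A two-sided Chernoff bound with deviation parameter $\oeps/k$ then yields
\[
\Pr\!\left[\dm_T(w)\notin \left(1\pm\tfrac{\oeps}{k}\right)\mu_w\right] \le 2\exp\!\left(-\tfrac{(\oeps/k)^2\,\mu_w}{3}\right) \le \frac{\gamma}{2n}\,,
\]
where the last inequality is exactly what the constant in the definition of $t$ is chosen to guarantee. A union bound over the at most $n$ high-degree vertices shows that all of them concentrate simultaneously except with probability $\gamma/2$. Combining with the $m(T)$ bound, a single random multiset is degrees-typical except with probability at most $\gamma/2 + \gamma/4 < 3/4$; repeating the draw $O(\log(1/\gamma))$ times and returning the first one that satisfies $m(T)\le\frac{t}{n}4\om$ (the only part the procedure can directly verify) drives the overall failure probability below $\gamma$, using the fact that conditioned on passing the $m(T)$ test, the degree-concentration event still holds with probability $\ge 1-\gamma/2$.

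For the data structure $D(T)$ and the running time: after drawing $T$, the procedure queries $d(v)$ for every $v\in T$ (that is $t$ degree queries), which both lets it compute $m(T)=\sum_{v\in T}d(v)$ and lets it build in $O(t)$ time a structure (e.g.\ prefix sums over the $d(v)$) supporting a uniform sample from $E(T)$ in $O(1)$ time: pick $v\in T$ with probability $d(v)/m(T)$ by binary search on the prefix sums, then pick a uniform $i\le d(v)$. Each of the $O(\log(1/\gamma))$ rounds costs $O(t)$ work, so the total running time is $O\big(t\cdot\log(1/\gamma)\big) = O\!\left(\frac{n}{\sqrt m}\cdot\frac{k^2\log(n/\gamma)\log(1/\gamma)}{\oeps^2}\right)$, matching the claimed bound. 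The main obstacle—really the only delicate point—is calibrating the constant in $t$ so that $(\oeps/k)^2\mu_w/3 \ge \ln(2n/\gamma)$ holds for every high-degree $w$ \emph{simultaneously with} the $m(T)$ tail being at most $\gamma/4$; this is why the definition of $t$ carries both a $\log(n/\gamma)$ factor and the $\oeps^2/k^2$ denominator, and one should double-check that the worst case is $w$ with degree exactly at the threshold $\thres$, since larger $\mu_w$ only helps.
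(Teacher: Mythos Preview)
Your concentration argument for $m(T)$ via Chernoff does not work, and this is precisely why the paper uses Markov's inequality instead. With $\chi_i = d(v_i)\in[0,n]$ and $\EX[\chi_i]=m/n$, the Chernoff bound from Section~\ref{sec:prel} gives a tail of $\exp\bigl(-\Omega(mt/n^2)\bigr)$. But $t$ is set to $\Theta\bigl(\tfrac{n}{\sqrt{\om}}\cdot\tfrac{k^2\log(n/\gamma)}{\oeps^2}\bigr)$, so $mt/n^2 = \Theta\bigl(\tfrac{\sqrt{\om}}{n}\cdot\tfrac{k^2\log(n/\gamma)}{\oeps^2}\bigr)$, and $\sqrt{\om}/n$ can be arbitrarily small (e.g.\ $\Theta(n^{-1/2})$ when $m=\Theta(n)$). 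In this regime the Chernoff tail is $1-o(1)$, not ``tiny,'' and the choice of $t$ is calibrated only for the per-vertex degree concentration, not for $m(T)$. This is why the repetition in \textsf{Sample-degrees-typical} is genuinely needed: each trial passes the (verifiable) test $m(T_i)\le\tfrac{t}{n}\cdot4\om$ only with constant probability (via Markov), and the $\log(2/\gamma)$ repetitions are what drive that failure mode below $\gamma/2$.

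Once you switch to Markov for $m(T)$ and keep the $\log(2/\gamma)$ repetitions, your handling of the degree-concentration part also needs tightening. Your per-vertex failure bound of $\gamma/(2n)$ gives $\gamma/2$ per trial, but then you cannot simply say ``conditioned on passing the $m(T)$ test, concentration still holds with probability $\ge 1-\gamma/2$'': the returned $T$ is a random one of the $T_i$'s, and you need all of them (or at least the one returned) to satisfy concentration. The paper handles this cleanly by exploiting the extra slack in $t$ (note the $\ln(2n/\gamma^{\,2})$ rather than $\ln(2n/\gamma)$) to get a per-vertex bound of $\gamma^2/(2n)$, hence $\gamma^2/2$ per trial, and then union-bounds over all $\log(2/\gamma)$ trials to get $\tfrac12\gamma^2\log(2/\gamma)<\gamma/2$. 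The rest of your outline (data structure, running time) matches the paper.
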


\paragraph{\link{Sample-a-clique}.}
As mentioned earlier, this is the most important and novel
aspect of our algorithm. Given any multiset of vertices $S$, the procedure \link{Sample-a-clique} produces
cliques
incident to $S$ with roughly uniform probability.
The procedure is given in
Figure~\ref{fig:sample-a-clique}.

\smallskip\noindent
%We introduce the following notation, to precisely state the properties of the output of \link{Sample-a-clique}.				 
\begin{definition}\label{def:mC}
\sloppy
Let $\mC(S)$ denote the set of $k$-tuples $(u,v,w_1,\dots,w_{k-2})$ that have the following properties:
(1) the subgraph induced by $\{u,v,w_1,\dots,w_{k-2}\}$ is a $k$-clique;  (2) $u \in S$; (3) $v\prec w_j$
for every $j \in [k-2]$.
% is the first vertex according to the order $\prec$ in $\{v,w_1,\dots,w_{k-2}\}$.
\end{definition}
By the above definition, each clique containing a vertex $u\in S$ is associated with the edge $(u,v)$ in the
clique such that $v\prec w$ for every other vertex $w\neq u$ in the clique, and
% Observe that for each $u \in S$ and for each $k$-clique that $u$ belongs to, since $v$ was ``singled-out'' among
% the other $k-1$ vertices in the cliques,
the clique has exactly $(k-2)!$ corresponding tuples in $\mC(S)$.
In 
\ifnum\conf=0
Subsection~\ref{subsec:sample-clique}
\else
%Subsection~\ref{subsec:sample-clique} of the  full version,
Subsection 3.3 of the  full version,
\fi
we prove the following lemma regarding the correctness and running time of the procedure \link{Sample-a-clique}.
\begin{lemma}\label{lem:sample-clique}
Let $T$ be a degrees-typical multiset and let $S$ be any multiset.
For any fixed $k$-tuple $K \in \mC(S)$, the probability that
%the procedure \link{Sample-a-clique}
an invocation of \link{Sample-a-clique}$(S,T,m(S),m(T),D(S),D(T),k, \om)$
returns $K$ is
% returns a $k$-clique incident to $S$
in  $(1\pm \oeps) \cdot \frac{ 1}{m(S)\cdot (\thres)^{k-2}}$.
%
% $(1\pm \oeps) \cdot \frac{|\mC(S)| \cdot (k-2)!}{m(S)\cdot (\thres)^{k-2}}$.
%
%Furthermore,
\; The running time of \link{Sample-a-clique} is $O(k^2)$.
		\end{lemma}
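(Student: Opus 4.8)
The plan is to fix an arbitrary tuple $K=(u,v,w_1,\dots,w_{k-2})\in\mC(S)$ --- so $u\in S$, the $k$ vertices are distinct and induce a clique, and $v\prec w_j$ for all $j\in[k-2]$ (Definition~\ref{def:mC}) --- and to walk through \textsf{Sample-a-clique} step by step, tracking the probability that it outputs exactly this ordered tuple. The procedure first draws an ordered edge uniformly from $E(S)$; the ordered edge $(u,v)$ originating from $u\in S$ is one of the $m(S)$ elements of $E(S)$, so it is drawn with probability exactly $1/m(S)$. I would condition on this and then bound the conditional probability that the remaining $k-2$ vertices are produced, in order, as $w_1,\dots,w_{k-2}$. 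The procedure branches on whether $v$ is low- or high-degree (Definition~\ref{def:high-low}), and I would handle the two branches separately, using that $v\prec w_j$ forces $d(v)\le d(w_j)$, and in particular forces every $w_j$ to be high-degree whenever $v$ is.

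First, the easy low-degree branch. Here the procedure samples $k-2$ i.i.d.\ uniform neighbors of $v$, returns \emph{fail} if any sampled neighbor $w$ satisfies $w\prec v$, and otherwise keeps each such $w$ with probability $d(v)/\thres$ (a legal probability because $d(v)\le\thres$ in this branch). For the $i$-th sampled neighbor to equal $w_i$ and to be kept contributes exactly $\frac{1}{d(v)}\cdot\frac{d(v)}{\thres}=\frac{1}{\thres}$, and the test $v\prec w_i$ passes since $K\in\mC(S)$. A closing round of $\binom{k}{2}$ pair queries confirms that $\{u,v,w_1,\dots,w_{k-2}\}$ is a clique and that $v$ precedes all the $w_i$ --- both hold for $K$. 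So in this branch $K$ is returned with probability exactly $\frac{1}{m(S)\cdot\thres^{\,k-2}}$, with no loss.

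The hard part will be the high-degree branch, and that is where degrees-typicality of $T$ enters. Now every $w_j$ is high-degree, and the procedure obtains each of the $k-2$ missing vertices by an independent attempt: sample an ordered edge uniformly from $E(T)$, let $w$ be its far endpoint, abort unless $w$ is a high-degree vertex with $v\prec w$, and otherwise keep $w$ with probability $p(w)=\frac{m(T)\cdot n}{d(w)\cdot t\cdot\thres}$, where $t=|T|$. First I would check $p(w)\in(0,1]$: since $T$ is degrees-typical, $m(T)\le\frac{t}{n}\cdot 4\om$, so $p(w)\le\frac{4\om}{d(w)\cdot\thres}=\frac{2\sqrt{\om}}{d(w)}\le 1$ because $d(w)>\thres=2\sqrt{\om}$. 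Next, the probability that one attempt selects a fixed high-degree $w$ is $\frac{d_T(w)}{m(T)}\cdot p(w)=\frac{d_T(w)\cdot n}{d(w)\cdot t\cdot\thres}$, and the second half of degrees-typicality, $d_T(w)\in(1\pm\oeps/k)\cdot\frac{t}{n}\cdot d(w)$ (Definition~\ref{def:T-typical}), collapses this to $(1\pm\oeps/k)\cdot\frac{1}{\thres}$. Since the $k-2$ attempts are independent and the $w_j$ are distinct, attempt $i$ yields $w_i$ for every $i$ with probability $\big((1\pm\oeps/k)\cdot\frac{1}{\thres}\big)^{k-2}$, which by standard estimates on $(1\pm\oeps/k)^{k-2}$ lies in $(1\pm\oeps)\cdot\frac{1}{\thres^{\,k-2}}$; the final $\binom{k}{2}$ pair queries again only reconfirm facts that hold for $K$. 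Multiplying by the $1/m(S)$ from the initial edge draw shows $K$ is output with probability in $(1\pm\oeps)\cdot\frac{1}{m(S)\cdot\thres^{\,k-2}}$, which also subsumes the exact value found in the low-degree branch.

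For the running time, drawing from $E(S)$ and from $E(T)$ is $O(1)$ using $D(S)$ and $D(T)$; each of the $k-2$ completion steps costs $O(1)$ (one neighbor or $E(T)$-edge sample, one degree query, one coin flip); and the concluding clique verification is $\binom{k}{2}=O(k^2)$ pair queries, which dominates, giving the $O(k^2)$ bound. The only genuinely delicate point is the high-degree branch, where one must simultaneously use $m(T)\le\frac{t}{n}\cdot 4\om$ to keep the rejection probability $p(w)$ in $[0,1]$, use the near-proportionality $d_T(w)\approx\frac{t}{n}\cdot d(w)$ to make the per-vertex selection probability $\approx 1/\thres$, and then control the compounded $(1\pm\oeps/k)$ errors over the $k-2$ sampled vertices.
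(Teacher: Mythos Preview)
Your proof is correct and follows essentially the same approach as the paper: fix a tuple in $\mC(S)$, factor the probability as $\frac{1}{m(S)}$ times the conditional probability of producing $w_1,\dots,w_{k-2}$, handle the low- and high-degree branches separately, and in the high-degree branch use both parts of degrees-typicality to bound the per-vertex selection probability by $(1\pm\oeps/k)\cdot\frac{1}{\thres}$ before compounding over $k-2$ draws. One small inaccuracy: you describe the $v\prec w_j$ test as happening inside the loop, whereas the procedure actually defers it to the final step; this does not affect the probability computation for a fixed $K\in\mC(S)$, so the argument goes through unchanged.
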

			
\paragraph{\link{Is-\popular}.} This procedure decides if a vertex $u$ is \popular\ or not.
This is done by multiple independent invocations of \link{Sample-a-clique} for the set $S = \{u\}$.
The procedure is provided in 
	\ifnum\conf=1
%Figure~\ref{fig:is-popular} in	Subsection~\ref{subsec:Is-popular} of the  full version,
Figure 3 in 	Subsection 3.4 of the  full version,
	\else
Figure~\ref{fig:is-popular} in 	Subsection~\ref{subsec:Is-popular},
	\fi
where we also prove the following lemma regarding the correctness and running time of the procedure \link{Is-\popular}.
\begin{lemma}\label{lem:is-popular}
	\sloppy
%	\begin{claim}[\textsf{Is-\popular} is correct with high probability] \label{clm:is-pop-correct}
Let $T$ be a degrees-typical multiset.
If  \link{Is-\popular}$(u,T,m(T),D(T),\om,\onk,n,k,\oeps,\odelta)$ is invoked
 with a \popular\ vertex $u$, then with probability at least $1-\odelta/n$, the procedure returns \emph{\popular}, and if $u$ is \unpopular,  then with probability at least $1-\odelta/n$, the procedure returns \emph{\unpopular}.
				\; The running time of \textsf{Is-\popular} is
%$\ostar{\frac{\om^{k/2}}{ \tcom{(k-2)!}\cdot \onk}}$.
$O\left(\frac{\om^{k/2}}{\onk}\cdot\frac{k\cdot 2^k\cdot \log(n/\odelta)}{(k-2)!\cdot\oeps^{2-1/k}}\right)$.
%	\end{claim}
\end{lemma}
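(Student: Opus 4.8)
The plan is to analyze \textsf{Is-\popular}$(u,\dots)$ as a straightforward Chernoff-style estimator for $\sclk(u)$ built on top of \link{Sample-a-clique} applied to the singleton set $S=\{u\}$. By \Lem{sample-clique}, a single invocation of \link{Sample-a-clique}$(\{u\},T,\dots)$ returns any fixed tuple $K\in\mC(\{u\})$ with probability in $(1\pm\oeps)\cdot\frac{1}{d(u)\cdot(\thres)^{k-2}}$, since $m(\{u\})=d(u)$. Because each $k$-clique through $u$ contributes exactly $(k-2)!$ tuples to $\mC(\{u\})$, the probability that a single call returns \emph{some} valid clique through $u$ is $p(u)\in(1\pm\oeps)\cdot\frac{(k-2)!\cdot\sclk(u)}{d(u)\cdot(\thres)^{k-2}}$. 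So the indicator random variables recording ``the $i$-th call returned a clique'' are i.i.d.\ Bernoulli$(p(u))$ and their empirical mean, suitably rescaled by $\frac{d(u)\cdot(\thres)^{k-2}}{(k-2)!}$, estimates $\sclk(u)$ up to the $(1\pm\oeps)$ bias and sampling error. The procedure sets a threshold partway between $\frac14\kthres$ and $\kthres$ (say $\frac12\kthres$, absorbing the $(1\pm\oeps)$ slack), declares \emph{\popular} if the scaled count exceeds it and \emph{\unpopular} otherwise.

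The main work is a two-sided concentration argument. First I would fix the number of repetitions $r$ used inside \textsf{Is-\popular}; matching the stated running time, $r = \Theta\!\big(\frac{\om^{k/2}}{\onk}\cdot\frac{k\cdot2^k\cdot\log(n/\odelta)}{(k-2)!\cdot\oeps^{2-1/k}}\big)$, up to the $O(k^2)$ per-call cost of \link{Sample-a-clique}. The two error cases to rule out are: (a) $u$ is \popular\ via $\sclk(u)>\kthres$ but the estimate falls below the threshold, and (b) $u$ is \unpopular\ (so $\sclk(u)\le\frac14\kthres$ and $d(u)\le\mk$) but the estimate exceeds the threshold. (If $u$ is \popular\ only via the degree condition $d(u)>\mk$, the procedure checks this directly with a degree query and the clique-counting branch is irrelevant — I'd note this up front.) For case (b), the expected scaled count is at most $(1+\oeps)\cdot\frac14\kthres$, a constant factor below the threshold, so a multiplicative Chernoff bound (the upper-tail version quoted in the Preliminaries, with $B=1$ and $p=p(u)$) gives failure probability $\exp(-\Omega(r\cdot p(u)\cdot \text{(const)}))$. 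The crucial quantitative point is that $r\cdot p(u)$ is large enough: using $p(u)\gtrsim\frac{(k-2)!\sclk(u)}{d(u)(\thres)^{k-2}}$ together with $d(u)\le\mk=4\om/(\oeps\onk)^{1/k}$, one checks that $r\cdot p(u)=\Omega(\log(n/\odelta))$, so the failure probability is $\le\odelta/n$ as required. Case (a) is symmetric, using the lower-tail Chernoff bound and $\sclk(u)>\kthres$.

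The step I expect to be the main obstacle is the bookkeeping that makes $r\cdot p(u)=\Omega(\log(n/\odelta))$ hold \emph{uniformly}, i.e.\ in the worst case over all vertices that are \popular\ or \unpopular. The difficulty is that $p(u)$ depends on both $\sclk(u)$ and $d(u)$, and the adversarial case is a vertex with small clique-count (just above $\kthres$, for the \popular\ direction, or near $\frac14\kthres$ for the \unpopular\ direction) but degree as large as $\mk$, which shrinks $p(u)$. Here one must plug in the precise definitions $\kthres=k\cdot(50\onk)^{1-1/k}/\oeps^{1/k}$ and $\mk=4\om/(\oeps\onk)^{1/k}$ and verify that the ratio $\frac{\kthres}{\mk\cdot(\thres)^{k-2}}$ — which governs $p(u)$ in the hard case, with $\thres=2\sqrt{\om}$ — times $r$ still dominates $\log(n/\odelta)$; this is exactly where the $\frac{\om^{k/2}}{\onk}$ and $\frac{1}{\oeps^{2-1/k}}$ factors in $r$ come from, and where the $2^k/(k-2)!$ and $k$ factors enter to cancel the combinatorial terms. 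A secondary subtlety is handling vertices that are neither \popular\ nor \unpopular: the lemma says nothing about them, so the concentration claims only need to be established for the two extreme regimes, and I would make explicit that the intermediate band $(\frac14\kthres,\kthres]$ is precisely the buffer that lets a single threshold work for both directions. Finally, a union bound over the (at most $k$) vertices of a sampled clique and the choice $\odelta=\delta/4$ feed into the overall correctness proof of \Thm{correct}, but that is outside the scope of this lemma.
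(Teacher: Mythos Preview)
Your approach is essentially the paper's: a degree check followed by a Chernoff argument on i.i.d.\ indicators from \link{Sample-a-clique}$(\{u\},\dots)$, with the threshold placed at $\frac12\kthres$. Two points are worth noting.

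First, a small but real gap in your \unpopular\ case. You write the failure probability as $\exp(-\Omega(r\cdot p(u)))$ and then argue $r\cdot p(u)=\Omega(\log(n/\odelta))$ via the lower bound $p(u)\gtrsim (k-2)!\,\sclk(u)/\big(d(u)(\thres)^{k-2}\big)$. But an \unpopular\ vertex may have $\sclk(u)=0$, in which case $p(u)=0$ and your exponent is zero. The fix is the standard one the paper uses: for the upper tail, apply Chernoff not with the true mean $p(u)$ but with the \emph{upper bound} $\mu_0:=(1+\oeps)\cdot\frac{(k-2)!\cdot\frac14\kthres}{d(u)(\thres)^{k-2}}\geq p(u)$, so that the exponent becomes $\Theta(r\cdot\mu_0)$, which is $\Omega(\log(n/\odelta))$ independently of $\sclk(u)$. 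Your ``adversarial case'' discussion (clique count near $\frac14\kthres$) is then exactly the calculation that bounds $r\cdot\mu_0$, not $r\cdot p(u)$.

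Second, a difference in how $r$ is set. You take $r$ to be the fixed worst-case value read off from the running-time bound. The paper instead sets $r=\qpopular$, i.e.\ \emph{proportional to $d(u)$}. This makes $r\cdot\EX[\chi_i]$ cancel the $d(u)$ factor exactly, so the Chernoff exponent becomes $\Theta\big(\frac{\sclk(u)}{\kthres}\cdot\frac{\ln(n/\odelta)}{\oeps^2}\big)$ (or $\Theta\big(\frac{\ln(n/\odelta)}{\oeps^2}\big)$ at the thresholds) with no case analysis on $d(u)$. The worst-case running time is the same as yours because the procedure only reaches the sampling loop when $d(u)\leq\mk$; your fixed-$r$ variant is correct too, it just requires the extra step of invoking $d(u)\leq\mk$ inside the concentration argument.
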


\paragraph{Proof sketch of the first item in Theorem~\ref{thm:correct}.}
The full proof of Theorem~\ref{thm:correct} appears in 
	\ifnum\conf=1
%	Subsection~\ref{subsec:thm-correct} of the  full version.
	Subsection 3.6 of the  full version.
	\else
	Subsection~\ref{subsec:thm-correct}.
	\fi
Here we provide a proof sketch for the case that $\onk \in \krange$, relying on
Lemmas~\ref{lem:T-typical}--\ref{lem:is-popular}.

By the first premise of this item, $\om \in \mrange$. Hence, by Lemma~\ref{lem:T-typical}, with high probability
the multiset $T$ is degrees-typical.
From this point on we condition on this event.
% , which has two implications.
% One implication, following from Lemma~\ref{lem:sample-clique},  is that (for any choice of $S$) whenever we invoke \link{Sample-a-clique} it returns each $k$-tuple $K \in \mC(S)$ with probability approximately $\frac{1}{m(S)\cdot (\thres)^{k-2}}$. The other implication follows  from Lemma~\ref{lem:is-popular}.

Consider (as a thought experiment) invoking \link{Is-\popular} on {\em all\/} vertices with the degrees-typical multiset $T$. Based on these invocations, we can define a partition $\Pip = (V_0,V_1)$, where $V_0$ contains all vertices for which \link{Is-\popular}
returns \emph{\unpopular} and $V_1$ contains all vertices for which \link{Is-\popular}
returns \emph{\popular}. By Lemma~\ref{lem:is-popular} and a union bound over all vertices, we get that
with probability at least $1-\odelta$,
$\Pip$ is an appropriate partition (as defined in Definition~\ref{def:appropriate}).
% Suppose we fix
% the random coins used by \link{Is-\popular} on all vertices to a setting that indeed induces
% an appropriate partition $\Pip$, and assume that all calls made by the algorithm to \link{Is-\popular}
% are answered consistently with $\Pip$. (The probability that $\Pip$ is not appropriate is taken into
% account in the total failure probability of the algorithm.)
Conditioned on $\Pip$ being appropriate (and using our assumptions on $\om$ and $\onk$), by Corollary~\ref{clm:wt(V)} we have that $\alpha_{\Pip}(V) \in [(1-\oeps)\clk,\clk]$.

Now consider the selection of the multiset $S$. We show that the size $s$ of this sample ensures that with high probability $\alpha_{\Pip}(S)$ is close to its expected value, $\frac{s}{n}\cdot \alpha_{\Pip}(V)$,
so that $\alpha_{\Pip}(S) \in (1\pm\oeps)^2 \cdot \frac{s}{n}\cdot \clk$. We condition on this event as well.
Since $T$ is degree-typical, by Lemma~\ref{lem:sample-clique}, whenever we invoke \link{Sample-a-clique} (in Step~\ref{step:sample-clique}) it returns each $k$-tuple in $\mC(S)$ with probability approximately $\frac{1}{m(S)\cdot (\thres)^{k-2}}$.
Observe that if \link{Sample-a-clique} returns a $k$-tuple
$K_i = (u_i,v_i,w_{i,1},\dots,w_{i,k-2})$, then in Steps~\ref{step:is-pop} and~\ref{step:chi_i} the
algorithm determines whether the corresponding $k$-clique is assigned to $u_i$ according to $\Pip$
and sets $\chi_i$ to $1$. Therefore,
$\Pr[\chi_i=1]$ is approximately $\frac{(k-2)!\cdot \alpha_{\Pip}(S)}{m(S)\cdot (\thres)^{k-2}}$
(recall that for each $k$-clique that a vertex $u \in S$ participates in, there are $(k-2)!$ corresponding
$k$-tuples in $\mC(S)$).
By the setting of the number of invocations, $q$, of \link{Sample-a-clique}, with high probability, the sum of the $\chi_i$'s is close to its expected value, and the output of the algorithm is as claimed.

\subsection{Sampling a clique} \label{subsec:sample-clique}

%The following is a procedure that samples a $k$-clique incident to a given set $S$ with probability roughly $\frac{(k-2)!\cdot |\mC(S)|}{m(S)\cdot (\thres)^{k-2}}.$
In this subsection we provide the procedure \link{Sample-a-clique}
and prove Lemma~\ref{lem:sample-clique}. The procedure first samples a uniform edge $(u,v)$ in $E(S)$. It then tries to construct a $k$-clique that this edge participates in  by selecting $k-2$ additional vertices.
 % (and applying rejection sampling).
More precisely, it tries to construct a $k$-tuple $(u,v,w_1,\dots,w_{k-2})$ in $\mC(S)$.
Recall that such a $k$-tuple
%$(u,v,w_1,\dots,w_{k-2})$
satisfies: $u \in S$, $\{u,v,w_1,\dots,w_{k-2}\}$ induces a $k$-clique, and $v\prec w_j$
for each $j \in [k-2]$.
% to sample $k-2$ higher-degree neighbors of $u'=\min_{\prec}\{u,v\}$. As discussed in the introduction,
To this end the procedure repeats the following $k-2$ times. If $v$ is a low-degree vertex, then it selects
a uniform neighbor $w$ of $v$, and maintains it with probability $d(v)/\thres$.
% then each neighbor of $v$ is sampled by drawing a uniform index $i \in [\thres]$ and querying for the $i\th$ neighbor of $u'$.
If $v$ is a high-degree vertex, then the procedure tries to sample a random high-degree vertex.
% neighbor of $v$
It does so by first sampling a uniform edge  $(x,y) \in E(T)$ and if $y$ is a high-degree neighbor of $v$, performing rejection sampling according to the degree of $y$.
We prove that conditioned on $T$ being degrees-typical, we obtain each $k$-tuple in $\mC(S)$ with
almost equal probability.
% We prove that this allows for every higher-degree neighbor of $u'$ to be sampled with almost uniform probability (regardless of the degree of $u'$).

\vspace{2mm}
\confFigure{
			{\bf Sample-a-clique$\;(S,T,m(S),m(T),D(S),D(T),k,\om)$} \label{Sample-a-clique}
			\smallskip
			\begin{compactenum}
				\item Sample a uniform edge $e=(u,v)$ in $E(S)$ (using the data structure $D(S)$).
                 %  If $u \prec v$, then let $u' = u$, and otherwise $u' = v$.

				\item For $j=1$ to $k-2$ do:
				\begin{compactenum}
						
				\item If $d(v)\leq \thres$ ($v$ is a low-degree vertex), then:\label{step:sample-low}
				\begin{compactenum}
                    \item Uniformly select a neighbor $w_j$ of $v$.
                    \item Keep $w_j$ with probability $\frac{d(v)}{\thres}$ and     with probability $1- \frac{d(v)}{\thres}$ \textbf{return} \emph{fail}.
				\end{compactenum}
				\item Else ($v$ is a high-degree vertex):
				\begin{compactenum}
					\item Sample a random edge $(x,y)$ in $E(T)$ (using the data structure $D(T)$). \label{step:sample_xy}
                    \item If $d(y)  \leq \thres$, \textbf{return} \emph{fail}.
 %                   \item Set $w_j = y$.\label{step:return y}
                    \item With probability $ \frac{m(T)}{d(y)\cdot \frac{t}{n}\cdot \thres}$ set $w_j = y$, otherwise, \textbf{return} \emph{fail}.\label{step:return y} \label{step:sample-high}

% 					\item Query for the degree of $y_j$, and if $y_j$ is a high-degree vertex, then let $w_{j}=y_j$ with probability $\frac{m(T)}{d(y_j)\cdot \frac{t}{n}\cdot \thres}$. Otherwise \textbf{return} \emph{fail}.
				\end{compactenum}	
			\end{compactenum}
				\item For every pair of vertices in $\{u,v, w_{1}, \ldots, w_{{k-2}}\}$ query if there is an edge between the two vertices.\label{step:sample-in-E(T)}
				\item If the subgraph induced by $\{ u,v,w_{1}, \ldots, w_{{k-2}} \}$
 is a $k$-clique and  $v \prec  w_{j}$ for every $j \in [k-2]$ then
				\textbf{return} $K=(u,v,w_{1}, \ldots, w_{{k-2}} )$. Otherwise \textbf{return} \emph{fail}. \label{step:w_i_j}
			\end{compactenum}	
	}
{A procedure for sampling a $k$-clique incident to $S$ with almost uniform probability.\label{fig:sample-a-clique}} 

\smallskip
\begin{proofof}{Lemma~\ref{lem:sample-clique}}
Let $(a,b,z_1,\dots,z_{k-2})$ be a $k$-tuple in $\mC(S)$. Recall that by the definition of $\mC(S)$ we have
that $a\in S$, the subgraph induced by $\{a,b,z_1,\dots,z_{k-2}\}$ is a $k$-clique and $b\prec z_j$ for every $j \in [k-2]$.
If \link{Sample-a-clique} does not return \emph{fail}, then its output is a tuple
$(u,v,w_1,\ldots,w_{k-2})$ in $\mC(S)$.
The probability that the procedure returns the tuple $(a,b,z_1, \ldots, z_k)$ is
% $$\Pr\left[(u,v)=(a,b) \text{ and } \{w_{1}, \ldots, w_{{k-2}}\} = \{w_{1}, \ldots, w_{k-2}\}\right].$$
\ifnum\conf=1
\small
\begin{align*}
& \Pr\Big[(u,v)=(a,b) \text{ and } \forall j \in [k-2]\;, w_j=z_j \; \Big] 
\\ &= \Pr[(u,v)=(a,b)] \cdot \Pr\Big[\forall j \in [k-2] \;, w_j=z_j  \;|\; (u,v)=(a,b) \Big]\;.
\end{align*}
\normalsize	
\else
\begin{eqnarray*}
\lefteqn{\Pr\Big[(u,v)=(a,b) \text{ and } \forall j \in [k-2]\;, w_j=z_j \; \Big]}\\
  &=&\Pr[(u,v)=(a,b)] \cdot \Pr\Big[\forall j \in [k-2] \;, w_j=z_j  \;|\; (u,v)=(a,b) \Big]\;.
\end{eqnarray*}
\fi
Clearly, $\Pr[(u,v)=(a,b)] = \frac{1}{m(S)}$, so it remains to compute the probability that
$w_j = z_j$ for each $j\in [k-2]$, conditioned on $(u,v) = (a,b)$.

If  $b=v$ is a low-degree vertex, then the vertices $w_{1}, \ldots, w_{{k-2}}$ are
% sampled through random neighbor queries of $v$.
uniformly selected random neighbors of $v$.
For each $j \in [k-2]$, the probability that $w_j=z_j$ and that
the procedure did not return \emph{fail} due to rejection sampling   is
 $\frac{1}{d(v)} \cdot \frac{d(v)}{\thres} = \frac{1}{\thres}$.  %\frac{1}{(\thres)^{k-2}}$.
 Therefore, $\Pr\Big[\forall j \in [k-2] \;, w_j=z_j  \;|\; (u,v)=(a,b) \Big]
                        = 1/(\thres)^{k-2}$.

Otherwise ($b$ is a high-degree vertex), since $b\prec z_j$ for each $j \in [k-2]$,
  the vertices $z_{1}, \ldots, {z_{k-2}}$ are also high-degree vertices. In this case (conditioned on $(u,v)=(a,b)$), the procedure tries to sample $k-2$ high-degree vertices
by sampling edges originating from the vertices of $T$. We next prove that, in Step~\ref{step:sample-high}, any specific high-degree vertex is sampled with probability in $\left(1\pm \frac{\oeps}{k}\right)\frac{1}{\thres}$.

Since $T$ is degrees-typical (as defined in Definition~\ref{def:T-typical}), $m(T) \leq \frac{t}{n} \cdot 4\om$.
This implies that for every high-degree vertex $z$,
\[\frac{m(T)}{d(z)\cdot \frac{t}{n}\cdot \thres} \leq \frac{\frac{t}{n}\cdot4\om}{2\sqrt{\om}\cdot \frac{t}{n}\cdot\thres} = 1.\]
Thus, Step~\ref{step:w_i_j} is valid.
Since $T$ is degrees-typical, for every high-degree vertex $z$,
\confeqn{\dm_T(z) \in \left(1\pm \frac{\oeps}{k}\right)\cdot \frac{t}{n}\cdot d(z)\;. %\numberthis \label{eq:dmsw}
	}

%a specific high-degree vertex $z$,
For any vertex $y$, the probability of obtaining an edge in Step~\ref{step:sample_xy} with $y$
as an endpoint is $d_T(y)/m(T)$. Therefore, for
 each $j \in [k-2]$,
the probability that $w_{j}=z_j$ in Step~\ref{step:w_i_j} is
\confeqn{\frac{\dm_T(z_j)}{m(T)}\cdot \frac{m(T)}{d(z_j)\cdot \frac{t}{n}\cdot \thres}  \in  \left(1\pm \frac{\oeps}{k}\right)\cdot \frac{1}{\thres}\;.}
It follows that  the probability that the procedure returns any specific $k$-tuple in $\mC(S)$ is
$(1\pm\oeps)\cdot \frac{1}{m(S)(\thres)^{k-2}}$.

\sloppy
 It remains to bound the running time of the procedure. Given the data structure $D(S)$, it takes time $O(1)$ to sample an edge in $E(S)$,
and similarly it takes time $O(1)$ to sample an edge in $D(T)$.
The procedure samples a single edge in $E(S)$ and possibly $k-2$ edges in $E(T)$. Adding the time to perform queries on all pairs of vertices in
$\{u,v,w_1,\dots,w_{k-2}\}$ (in addition to a degree query on each of these vertices),
the total running time is $O(k^2)$.
\end{proofof}

\ifnum\conf=0

\subsection{Is-\popular}\label{subsec:Is-popular}
The procedure \link{Is-\popular} determines (with high success probability)
whether a given vertex $u$ is \popular\ or \unpopular.
For vertices that are neither \popular\ nor \unpopular, it can answer arbitrarily.
Recall that
by Definition~\ref{def:popular-v}, the distinction between a \popular\  $u$ and a \unpopular\ $u$
involves bounds on both
$d(u)$ and $c_k(u)$. These will be critical in bounding the running time of \link{Is-\popular}.
The procedure basically invokes \link{Sample-a-clique}
repeatedly to check if $c_k(u)$ is larger than the
specified threshold.
	
\confFigure{
				{\bf Is-\popular$\;(u,T,m(T),D(T),\om,\onk,n,k,\oeps,\odelta)$} \label{Is-\popular} 
				\smallskip
				\begin{compactenum}
					\item Query the degree of $u$ and if $d(u) > \mk$ then \textbf{return} \emph{\popular}.
					\item For $i=1$ to $r=\qpopular$ do: \label{step:is-popular-for}
					\begin{compactenum}
						\item Invoke \textsf{Sample-a-clique$(\{u\},T,d(u),m(T),D(\{u\}),D(T),k, \om)$}.
						\item If a $k$-tuple (corresponding to a $k$-clique) was returned, then set $\chi_i=1$. Otherwise (the procedure returned \emph{fail}), set $\chi_i=0$. \label{step:pop-chi}
					\end{compactenum}
					\item Let $\shnk(u)=\frac{d(u)\cdot \thres}{r}\cdot \sum_{i=1}^r \chi_i\;.$
					\item If $\shnk(u)\geq \frac{1}{2}(\kthres)$, then \textbf{return} \emph{\popular},
                     otherwise, \textbf{return} \emph{\unpopular}.
				\end{compactenum}
			}{A procedure for determining with high probability whether a given vertex is \popular\ or not. \label{fig:is-popular}}

%	\begin{definition}[\textsf{Is-\popular} is \textsf{correct} on $u$.]
%		We say that the procedure \textsf{Is-\popular} is correct on $u$ if it is invoked with a \popular\ vertex $u$ and it returns \emph{\popular}, or if it is invoked with a \unpopular\ vertex $u$
%		and it returns \emph{\unpopular}.
%	\end{definition}

	\begin{proofof}{Lemma~\ref{lem:is-popular}}
%		Assume that $T$ is degrees-typical and
\sloppy
Consider any fixed vertex $u$.
Let $\chi=\frac{1}{r}\sum_{i=1}^r\chi_i$, where $\chi_1, \ldots,\chi_r$ are as defined in Step~\ref{step:pop-chi} of \textsf{Is-\popular}.	
Note that $S=\{u\}$, $m(S)= m(\{u\}) = d(u)$ and $|\mC(S)| = \mC(\{u\}) =(k-2)!\cdot \sclk(u)$.
By Lemma~\ref{lem:sample-clique} and the assumption that $T$ is degrees-typical,
%		the procedure \link{Sample-a-clique} returns a $k$-clique incident to $u$ with probability in
$\EX[\chi] \in (1\pm \oeps)\cdot \frac{(k-2)! \cdot \sclk(u)}{d(u) \cdot (\thres)^{k-2}}\;$.

		First consider the case that $u$ is \popular. By Definition~\ref{def:popular-v}, either $d(u)> \mk$ or  $\sclk(u) \geq \kthres$ (or both).
		Clearly, if $d(u)> \mk$, then the procedure returns \emph{\popular}.
		Hence, assume that $\sclk(u) \geq \kthres$.  By Chernoff's inequality and the setting of \[r=\qpopular\] in
  Step~\ref{step:is-popular-for} of the procedure,
\ifnum\stoc=0
		\begin{align*}
		 \Pr\left[ \frac{1}{r}\sum_{i=1}^r \chi_i < (1-\oeps)\cdot \EX[\chi]\right] &< \exp\left(-\frac{\oeps^{\hspace{1pt}2} \cdot \EX[\chi] \cdot r}{3}\right)  \\
		 &< \exp\left(-\frac{\oeps^{\hspace{1pt}2} \cdot  \frac{(1-\oeps)\cdot (k-2)! \cdot \sclk(u)}{d(u) \cdot (\thres)^{k-2}} \cdot \qpopular}{3}\right) \\
		 &\leq \frac{\odelta}{n}\;.
		 \end{align*}
\else
\[
\Pr\left[ \frac{1}{r}\sum_{i=1}^q \chi_i < (1-\oeps)\cdot \EX[\chi]\right] \;<\; \exp\left(-\frac{\oeps^{\hspace{1pt}2} \cdot \EX[\chi] \cdot r}{3}\right) \;\leq\; \frac{\odelta}{n}\;.
\]
\fi
		It follows that if $T$ is typical and $u$ is \popular, then with probability at least $1-\odelta/n$,		
		$$\shnk(u) \geq (1 - \oeps)^2 \cdot \sclk(u) \geq  \frac{1}{2}\cdot \kthres,$$
%and therefore with probability at least $1-\odelta/n$,
causing the procedure to return \emph{\popular}.

		Now consider the case that $u$ is \unpopular. By Definition~\ref{def:popular-v}, 
		\[\sclk(u) \leq \frac{1}{4}\cdot \kthres,\] implying that \[\EX[\chi]< (1+\oeps)\cdot \frac{\frac{(k-2)!}{4}\cdot (\kthres)}{d(u)\cdot (\thres)^{k-2}}.\]
		Hence, by	Chernoff's inequality, and by the setting of $r$,
		\begin{align*}
		 \Pr\left[  \frac{1}{r}\sum\limits_{i=1}^r \chi_i > \frac{3}{2}\cdot \left(\frac{(1+\oeps)}{4}\cdot \kthres\right) \right] <
\ifnum\stoc=0
		 \exp\left(-\frac{\left( \frac{\frac{(1+\oeps)(k-2)!}{4}\cdot (\kthres)}{d(u)\cdot (\thres)^{k-2}} \right)\cdot r}{12}\right)<
	 \fi
		  \frac{\odelta}{ n}.
		 \end{align*}
		Therefore, if $T$ is degrees-typical and $u$ is \unpopular, then  with probability at least $1-\odelta/n$, the algorithm returns \emph{\unpopular}.

\sloppy
It remains to bound the running time of the procedure.
	By Lemma~\ref{lem:sample-clique}, the procedure \link{Sample-a-clique} runs in time $O(k^2)$.
% and so do the rest of the steps of the for loop of Step~\ref{step:is-popular-for}.
	Crucially, \link{Sample-a-clique} is invoked only if $d(u) \leq \mk$. Since
     \[r=\qpopular,\] the running time of the procedure is
\[O\left(r\cdot k^2\right) =O\left(\frac{\om^{k/2}}{\onk}\cdot\frac{k\cdot 2^k \cdot \log(n/\odelta)}{(k-2)!\cdot\oeps^{2-1/k}}\right)\;,\]
as claimed.
	\end{proofof}

	\subsection{Sampling a degrees-typical set}\label{subsec:T}
In this subsection we provide the (simple) procedure \link{Sample-degrees-typical} and prove
Lemma~\ref{lem:T-typical} regarding its correctness and running time.
As in the case of the choice of the sample $S$ by \link{Approximate-cliques}, here too
if the sample size $t$ is larger than $n$, then the algorithm can simply set $T=V$.
%\tcom{We also prove some properties of the set $S$ sampled in Step~\ref{step:S} of Algorithm \link{Approximate-cliques}}.
%	In this subsection we prove that, with high probability, the multisets $S$ and $T$ defined in Steps~\ref{step:S} and~\ref{step:T} of Algorithm \textsf{Approximate-cliques} are cliques-typical and degrees-typical as defined in Definitions~\ref{def:S-typical} and~\ref{def:T-typical}, respectively.
%	
%	\begin{claim}\label{clm:m(S)}
%		Let $S$ be a set of size $s=$. With probability at least $1-\odelta$ over the choice of the set,
%		\[m(S) \in (1\pm \oeps)\cdot \frac{s}{n}\cdot m \;.\]
%	\end{claim}
%	\begin{proof}
%	
%	\end{proof}

\confFigure{
				{\bf Sample-degrees-typical$\;(n,k,\om,\onk, \oeps, \odelta)$} \label{Sample-degrees-typical}
				\smallskip
               \begin{compactenum}
%					\item $\gamma=\min\{\oeps/2,\odelta/2\}$.
					\item $\gamma=\setgamma$.
					\item For $i=1$ to $\log(2/\gamma)$ do:
					\begin{compactenum}
					\item Let $T_i$ be a multiset of $t=\sett$ vertices chosen uniformly, %independently,
                      at random.
					\label{step:sets T}
					\item Query the degrees of all the vertices in $T_i$ and compute $m(T_i)$.
					\end{compactenum}
					\item Let $T$ be the first set $T_{i}$ such that  $m(T_{i})\leq \frac{t}{n}\cdot 4\om$. If no such set exists, then  \textbf{return} \emph{fail}. Else, set up a data structure $D(T)$
that supports sampling a uniform edge in $E(T)$ in constant time.
% that supports efficient sampling of each vertex  $u \in T$ with probability proportional to $d(u)/m(T)$. % in time $O(\log(m(T))$.	
\label{step:T}
                   \item \textbf{Return} $(T,m(T),D(T))$.
				\end{compactenum}
		}
{The procedure for sampling the multiset $T$.\label{fig:samp-sets}}

	\begin{proofof}{Lemma~\ref{lem:T-typical}}
For each iteration $i$, consider the selection of the multiset $T_i$.
For any fixed high-degree vertex $u\in V$
and for $j=1, \ldots, t$, let $\chi_j(u)=1$ if the $j\th$ vertex in $T_i$ is a neighbor of $u$ and let $\chi_j(u)=0$ otherwise. By the definition of $\chi_1(u), \ldots, \chi_t(u)$ and the premise of the lemma regarding $\om$, $\EX\left[\frac{1}{t}\sum_{j=1}^t \chi_j(u)\right] = \frac{\dm(u)}{n} \geq \frac{\sqrt{m}}{n}$.
		By Chernoff's inequality and the setting of $t=\sett$ in Step~\ref{step:T},
\ifnum\stoc=1
		\begin{align*}
		 &\Pr\left[\left| \frac{1}{t}\sum\limits_{j=1}^{t} \chi_j(u) -\frac{\dm(u)}{n} \right| > \frac{\oeps}{k}\cdot  \frac{\dm(u)}{n}\right]  \\&  \;\;\;\;\;\;\;
		 <
		 2\exp\left(-\frac{(\oeps/k)^2\cdot t\cdot \frac{\dm(u)}{n} }{3}\right)
\nonumber
%		\\&\leq 2\exp\left(-\frac{(\oeps/k)^2\cdot \sett\cdot \frac{\thres}{n} }{3}\right)
< \frac{\gamma^{\hspace{1pt}2}}{2n}\;.\nonumber
		\end{align*}
		\else
				\begin{align*}
		\Pr\left[\left| \frac{1}{t}\sum\limits_{j=1}^{t} \chi_j(u) -\frac{\dm(u)}{n} \right| > \frac{\oeps}{k}\cdot  \frac{\dm(u)}{n}\right]  
		<
		2\exp\left(-\frac{(\oeps/k)^2\cdot t\cdot \frac{\dm(u)}{n} }{3}\right)
		\nonumber
		%		\\&\leq 2\exp\left(-\frac{(\oeps/k)^2\cdot \sett\cdot \frac{\thres}{n} }{3}\right)
		< \frac{\gamma^{\hspace{1pt}2}}{2n}\;.
		\end{align*}
		\fi
		By taking a union bound over all high-degree vertices, it holds that with probability at least $1-\gamma^{\hspace{1pt}2}/2$, for every high-degree vertex $u \in V$,
		$ \dm_{T_i}(u) = \sum\limits_{j=1}^t \chi_j(u) \in \left(1\pm (  {\oeps}/{k})\right)\cdot \frac{t}{n}\cdot \dm(u) .$
		By taking a union bound over the $\log(2/\gamma)$ sampled multisets $T_i$, it holds that with probability at least $1-\frac{1}{2}\gamma^{\hspace{1pt}2} \log(2/\gamma)> 1-\gamma/2$, for each of the multisets $T_i$ and
		for every high-degree vertex $u\in V$,
		$d_{T_i}(u) \in \left(1\pm \frac{\oeps}{k}\right) \cdot \frac{t}{n}\cdot d(u)\;.$

We now turn to bounding the probability that none of the multiset $T_i$  satisfies
$m(T_i)\leq \frac{t}{n}\cdot 4\om $.
 By the definition of $m(\cdot)$, for every $T_i$ we have that $\EX[m(T_i)]=\frac{t}{n}\cdot m$. By the assumption that $\om \in [(1-\oeps)m,m]$ and by Markov's inequality, $\Pr[m(T_i)> \frac{t}{n}\cdot 4\om ]\leq  \Pr[m(T_i)> \frac{t}{n}\cdot 2m]< \frac{1}{2}$. Hence, the probability that $m(T_i)> \frac{t}{n}\cdot 4\om $ for
 every $i = 1,\dots,\log(2/\gamma)$ is at most $\gamma/2$.

By combining the two failure probabilities, we get that with probability at least $1-\gamma$, the algorithm returns a multiset $T$ that is degrees-typical (as defined in Definition~\ref{def:T-typical}).

Finally, by a performing a preprocessing step that takes $\Theta(t)$ time  it is possible to build a data structure $D(T)$ that allow for sampling each vertex  $u \in S$ with probability proportional to $d(u)/m(T)$ (see e.g., \cite{walker1974new, walker1977efficient, marsaglia2004fast}).
This in turn implies that, using $D(T)$, it is possible to sample a uniform edge in $E(T)$ in constant time.

% % As discussed in the proof of Theorem~\ref{thm:correct}, the
% The data structures $D(T)$ (that supports the selection of vertices in $T$ with probability proportional
% to their degree and hence support selecting edges in $E(T)$ uniformly)
% % (supporting the sampling a uniform edge in $E(T)$ in time $O(\log n)$)
%  can be constructed in time linear in $t$ as follows.
%  Let $u_1,\dots,u_{t}$ be an arbitrary ordering of the vertices in $T$. We define an array $D$ of size $t+1$ such that $D[j] = \sum_{i\leq j}d(u_i)$ for each $j \in \{0,\dots,t\}$. In order to sample a vertex $u\in T$ with probability $d(u)/m(T)$ we simply select a random number $x$ in $[m(T)]$ and perform a binary search for $j \in [t]$ such that $D[j-1] < x \leq D[j]$.
%
\sloppy
	The running time of \link{Sample-degrees-typical} is
\[O\left(t\cdot
		\log(1/\gamma)\right)=O\left(
		\frac{n}{\sqrt{\om}}\cdot \frac{k^2 \cdot \log(n/\gamma)\cdot\log(1/\gamma )}{\oeps^{\hspace{1pt}2} }\right) \;,\]
		and the proof is complete.
	\end{proofof}
\subsection{Proof of Theorem~\ref{thm:correct}}\label{subsec:thm-correct}

In this subsection we prove Theorem~\ref{thm:correct}. We first define the notion of a {\em cliques-typical}
multiset.
		\begin{definition}\label{def:S-typical}
			We say that a multiset $S$ of size $s$ is \textsf{cliques-typical} with respect to a partition $P$ if 	 \confeqn{\ap(S) \in (1\pm \oeps)^2 \cdot \frac{s}{n}\cdot \clk\;.}
		\end{definition}

We establish a simple claim regarding the
multiset $S$ selected by our algorithm (appropriate partitions are as defined in
Definition~\ref{def:appropriate}).

\begin{claim}\label{clm:S-typical}
	Consider the multiset $S$ sampled in Step~\ref{step:S} of Algorithm \link{Approximate-cliques}.
	For any fixed partition $P$ of $V$,
	$\EX[\ap(S)]\leq \frac{s}{n}\cdot \clk$.
	Furthermore, if $P$ is appropriate,  $\om \geq (1-\eps)m$ and $\onk \in \krange$,
% \mnote{D: double check premise}
then	with probability at least $1-\odelta$ (over the choice of $S$), the multiset $S$ is cliques-typical.
\end{claim}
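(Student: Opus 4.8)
The plan is to prove both parts directly from elementary properties of uniform sampling. Write $S=(v_1,\dots,v_s)$ for the $s$ independent uniform vertices sampled in Step~\ref{step:S}, so that $\ap(S)=\sum_{i=1}^s\ap(v_i)$. For the expectation bound, each $v_i$ is uniform over $V$, hence $\EX[\ap(v_i)]=\frac1n\sum_{u\in V}\ap(u)=\frac{\ap(V)}{n}$; since Corollary~\ref{clm:wt(V)} gives $\ap(V)\le\clk$ for every partition $P$, linearity of expectation yields $\EX[\ap(S)]=s\cdot\frac{\ap(V)}{n}\le\frac sn\clk$, which is the first statement.

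For the concentration statement, add the hypotheses that $P$ is appropriate, $\om\ge(1-\eps)m$, and $\onk\in\krange$ (so $\clk/4\le\onk\le\clk$). First I would bound the range of the summands: if $\ap(u)>0$ then $u$ is assigned some $k$-clique, hence $u\in V_0$ by Definition~\ref{def:assign}; since $P$ is appropriate and $V_0\cap V_1=\varnothing$, the vertex $u$ is then not \popular, so $\sclk(u)\le\kthres$ by Definition~\ref{def:popular-v}. As each $k$-clique assigned to $u$ is a $k$-clique that $u$ participates in, $\ap(u)\le\sclk(u)\le\kthres$, so every $\ap(v_i)$ lies in $[0,\kthres]$. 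Next, applying Corollary~\ref{clm:wt(V)} under these same hypotheses gives $\ap(V)\in[(1-\oeps)\clk,\clk]$, so $\mu:=\EX\!\big[\tfrac1s\sum_{i=1}^s\ap(v_i)\big]=\tfrac{\ap(V)}{n}$ satisfies $\tfrac{(1-\oeps)\clk}{n}\le\mu\le\tfrac{\clk}{n}$.

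Now I would apply the two-sided Chernoff bound of Section~\ref{sec:prel} with bound parameter $B=\kthres$ and deviation $\gamma=\oeps$; the dominant tail probability is $\exp\!\big(-\tfrac{\oeps^2\mu s}{3\kthres}\big)$. Substituting $\mu\ge\tfrac{(1-\oeps)\clk}{n}$, the value $s=\sets$, and the identity $\kthres=k\,50^{1-1/k}\onk^{1-1/k}/\oeps^{1/k}$, the factors of $n$, $k$, $\oeps^{1/k}$ and $\onk^{1/k}$ all cancel (using $\onk^{1-1/k}\onk^{1/k}=\onk$), and the exponent is at least $\tfrac{700(1-\oeps)}{3\cdot50^{1-1/k}}\cdot\tfrac{\clk}{\onk}\cdot\ln(1/\odelta)\ge 2\ln(1/\odelta)$, where I use $\onk\le\clk$, $50^{1-1/k}\le50$, and $\oeps=\eps/5\le\tfrac15$. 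Hence each of the two tail probabilities is at most $\odelta^2\le\odelta/2$ (recall $\odelta=\delta/4<\tfrac12$), so by a union bound, with probability at least $1-\odelta$ we have $\tfrac1s\sum_i\ap(v_i)\in(1\pm\oeps)\mu$, and then $\ap(S)\in(1\pm\oeps)\cdot\tfrac sn\ap(V)\subseteq[(1-\oeps)^2,(1+\oeps)]\cdot\tfrac sn\clk\subseteq(1\pm\oeps)^2\cdot\tfrac sn\clk$, i.e.\ $S$ is cliques-typical with respect to $P$ in the sense of Definition~\ref{def:S-typical}. The only delicate point is the numerical verification that the constant $700$ in $s=\sets$ suffices to push the Chernoff exponent above $2\ln(1/\odelta)$ — this hinges precisely on the cancellation $\kthres\cdot\onk^{1/k}=k\,50^{1-1/k}\onk/\oeps^{1/k}$ together with $\onk\le\clk$ — while everything else is bookkeeping with the definitions of \popular\ vertices and the clique-assignment rule.
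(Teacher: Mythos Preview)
Your proof is correct and follows essentially the same approach as the paper's own proof: establish $\EX[\ap(S)]=\tfrac{s}{n}\ap(V)\le\tfrac{s}{n}\clk$ via Corollary~\ref{clm:wt(V)}, bound each summand by $\kthres$ using that cliques are assigned only to non-\popular\ vertices, and then apply the multiplicative Chernoff bound with $B=\kthres$ and the specified value of $s$. Your version is in fact more explicit than the paper's about why $u\in V_0$ forces $u$ to be non-\popular\ and about the numerical verification of the Chernoff exponent, but the structure is identical.
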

\begin{proof}
	Recall that by Definition~\ref{def:assign}, $\ap(S)$ is the number of $k$-cliques assigned to the vertices of $S$, and that by Corollary~\ref{clm:wt(V)}, for every partition $P$, $\ap(V)  \leq \clk\;.$
	Hence, $\EX[\ap(S)] = \frac{s}{n}\cdot \ap(V) \leq \frac{s}{n}\cdot \clk$.
	
	We turn to consider the case that the partition $P$ is appropriate, $\om \geq (1-\eps)m$
and $\onk \in \krange$. By Corollary~\ref{clm:wt(V)}, $\ap(V) \in [(1-\oeps)\clk,\clk]$.
	By Definition~\ref{def:assign}, $k$-cliques are only assigned to vertices that are not \popular, implying that for every vertex $u\in V$, $\ap(u) \leq \sclk(u) \leq \kthres$.
Hence, by the multiplicative Chernoff bound and the setting of $s = \sets$,
\ifnum\stoc=1
	\begin{align*}
	&\Pr\left[ \left| \frac{1}{s}\sum_{u \in S}\ap(u) - \frac{\ap(V)}{n}\right| > \oeps \cdot \frac{\ap(V)}{n} \right] 
	\\ &  \;\;\;\;\;\;\; < 2\exp\left(-\frac{\oeps^{\hspace{1pt}2} \cdot \frac{\ap(V)}{n} \cdot s}{3 \cdot \kthres} \right)
	\\	&  \;\;\;\;\;\;\;
	\leq  2\exp\left(-\frac{\oeps^{\hspace{1pt}2+\frac{1}{k}} \cdot \frac{(1-\oeps)\clk}{n} \cdot \sets}{150 \cdot \onk^{1-1/k}} \right) <
	\odelta\;.
	\end{align*}
\else
$$
	\Pr\left[ \left| \frac{1}{s}\sum_{u \in S}\ap(u) - \frac{\ap(V)}{n}\right| > \oeps \cdot \frac{\ap(V)}{n} \right]  < 2\exp\left(-\frac{\oeps^{\hspace{1pt}2} \cdot \frac{\ap(V)}{n} \cdot s}{3 \cdot \kthres} \right)
	 < 	\odelta\;.
$$
\fi
	Therefore, if $P$ is appropriate, $\om \geq (1-\eps)m$
and $\onk  \in \krange$, with probability at least $1-\odelta$,
	\[\ap(S) \in (1\pm \oeps)\cdot \frac{s}{n} \cdot \ap(V) \in (1\pm \oeps)^2 \cdot \frac{s}{n}\cdot \clk, \]
	which implies that $S$ is cliques-typical by Definition~\ref{def:S-typical}.
\end{proof}

\subsubsection{The case {$\onk\in\krange$}}\label{subsubsec:onk-good}
We start with the first item in the theorem.
Recall that by the first premise of this item, $\om\in\mrange$.
%  and start by considering the case that $\onk\in\krange$.
\sloppy
By Lemma~\ref{lem:T-typical}, with probability at least $1-\setgamma\geq 1-\odelta$ the procedure \link{Sample-degrees-typical} returns a multiset $T$ that is degrees-typical. We henceforth condition on this event.

Since $T$ is degrees-typical, by Lemma~\ref{lem:is-popular}, an invocation of the procedure \link{Is-\popular} with a \unpopular\ vertex $u$ returns \emph{\unpopular} with probability at least $1-\odelta/n$, and similarly an invocation with a \popular\ vertex $u$ returns \emph{\popular} with probability at least $1-\odelta/n$.
Consider (as a thought experiment)
%Let $\Pip$ denote the (random) partition that is obtained by
running the procedure \link{Is-\popular} on all the vertices in the graph, and letting $V_0$ be the set of vertices for which the procedure returned \emph{\unpopular}, $V_1=V \setminus V_0$
and $\Pip=(V_0,V_1)$.
Conditioned on the event that  $T$ is degrees-typical, by Lemma~\ref{lem:is-popular} and by taking a union bound over all the vertices in $V$,
% it holds that with probability at least $1-\odelta$ if the procedure \link{Is-\popular} is invoked with a \popular\ vertex, then it returns \emph{\popular}, and if it is invoked with a \unpopular\ vertex then it returns \emph{\unpopular}. Therefore, if $T$ is degrees-typical , then
with probability at least $1-\odelta$ the partition $\Pip$ is appropriate
% Hence, by Corollary~\ref{clm:wt(V)}, $\alpha_{\Pip} \in [(1-\oeps)\clk,\clk]$.
(as defined in Definition~\ref{def:appropriate}). Suppose we fix
the random coins used by \link{Is-\popular} on all vertices to a setting that indeed induces
an appropriate partition $\Pip$, and assume that all calls made by the algorithm to \link{Is-\popular}
are answered consistently with $\Pip$. (The probability that $\Pip$ is not appropriate is taken into
account in the total failure probability of the algorithm.)
Conditioned on $\Pip$ being appropriate (and using our assumptions on $\om$ and $\onk$), by Corollary~\ref{clm:wt(V)} we have that $\alpha_{\Pip}(V) \in [(1-\oeps)\clk,\clk]$.

Now consider the multiset $S$ sampled in Step~\ref{step:S} of the algorithm.
By Claim~\ref{clm:S-typical}, with probability at least $1-\odelta$, $S$ is cliques-typical with respect to $\Pip$. That is, $\alpha_{\Pip}(S) \in (1\pm \oeps)^2\cdot \frac{s}{n} \cdot \clk$.
We condition on this event as well.
Since $T$ is degree-typical, by Lemma~\ref{lem:sample-clique}, whenever we invoke \link{Sample-a-clique} (in Step~\ref{step:sample-clique}) it returns each $k$-tuple in $\mC(S)$ with probability in
$(1\pm\oeps) \cdot \frac{1}{m(S)\cdot (\thres)^{k-2}}$.

By the description of Steps~\ref{step:is-pop} and~\ref{step:chi_i}, $\chi_i$ is set to $1$ only if the procedure \link{Sample-a-clique} returns a $k$-tuple $K=(u,v,w_1,\ldots,w_{k-2})$ in $\mC(S)$ that is assigned to $u$ according to $\Pip$.
% such that $u\in S$ and $u$ is the first declared \unpopular\ vertex in $K$ according to the \textsf{Is-\popular} procedure.
For each $k$-clique assigned to a vertex $u\in S$, the number of corresponding $k$-tuples in $\mC(S)$
is $(k-2)!$.
Therefore, for a cliques-typical multiset $S$,
\ifnum\stoc=1
\begin{align}
\EX[\chi_i] &\in (1\pm \oeps)\cdot \frac{\alpha_{\Pip}(S)\cdot (k-2)!}{m(S)\cdot (\thres)^{k-2}} \nonumber
\\& \in (1\pm \oeps)^3\cdot \frac{\frac{s}{n}\cdot \clk \cdot (k-2)!}{m(S)\cdot (\thres)^{k-2}}\;.\label{eqn:exp-xhi-assum}
\end{align}
\else
\[
\EX[\chi_i] \in (1\pm \oeps)\cdot \frac{\alpha_{\Pip}(S)\cdot (k-2)!}{m(S)\cdot (\thres)^{k-2}} \nonumber
\in (1\pm \oeps)^3\cdot \frac{\frac{s}{n}\cdot \clk \cdot (k-2)!}{m(S)\cdot (\thres)^{k-2}}\;.\label{eqn:exp-xhi-assum} \numberthis
\]
\fi

By the multiplicative Chernoff bound and by the setting of $q=\setq$ in Step~\ref{step:q_loop} of the algorithm,
% it follows that
\ifnum\stoc=0
\begin{align*}
& \Pr\left[\left| \chi-\EX[\chi] \right|> \oeps \cdot \EX[\chi] \right] <
2\exp\left(-\frac{\oeps^{\hspace{1pt}2} \cdot \EX[\chi] \cdot q}{3}\right)
\\& \;\;\;
<  2\exp\left(-\frac{\oeps^{\hspace{1pt}2} \cdot \frac{(1-\oeps)^{3}\cdot (k-2)! \cdot \clk \cdot \frac{s}{n}}{m(S) \cdot \left(\thres\right)^{k-2}} \cdot \setq}{3}\right) \
%\\ & \;\;\; 
< \odelta\;.
\end{align*}
\else
\begin{align*}
& \Pr\left[\left| \chi-\EX[\chi] \right|> \oeps \cdot \EX[\chi] \right] <
2\exp\left(-\frac{\oeps^{\hspace{1pt}2} \cdot \EX[\chi] \cdot q}{3}\right)
%\\ & \;\;\;
< \odelta\;.
\end{align*}
\fi
Therefore, if $\om \in \mrange$, $\onk\in\krange$, $T$ is degrees-typical, $\Pip$ is appropriate and $S$ is cliques-typical, then with probability at least $1-\odelta$,		
$$\chi \in (1\pm \oeps)^4\cdot \frac{\frac{s}{n}\cdot \clk \cdot (k-2)!}{m(S)\cdot (\thres)^{k-2}}\;
\implies \hnk \in (1\pm \oeps)^{4} \cdot \clk \in (1\pm\eps)\clk\;, $$
where we have used the fact that $\oeps = \eps/5$.
By taking a union bound over all bad events, %the above holds
$\clk \in (1\pm\eps)\clk$
with probability at least $1-4\odelta>1-\delta $ (since $\odelta = \delta/4$).
% , the algorithm returns $\hnk$ such that $\hnk\in(1\pm \eps)\clk$.

\subsubsection{The case {${\onk>\clk}$}}\label{subsubsec:onk-large}
We now prove the second item of the theorem. As in the first item, since $\om \in \mrange$, by Lemma~\ref{lem:T-typical} with probability at least $1-\setgamma$  the multiset $T$ is degrees-typical. Conditioned on $T$ being degrees-typical, an invocation of the procedure \link{Sample-a-clique} returns each $k$-tuple in $\mC(S)$ with probability in $(1\pm \eps)\frac{1}{m(S)\cdot (\thres)^{k-2}}$.

%Also, it still holds that $\EX[|\mC(S)|]=\frac{s}{n}\cdot (k-2)! \cdot \clk$, and therefore
%Equation~\eqref{eqn:exp_counter} and Equation~\eqref{eqn:ub_counter} hold as before. Since $\hnk > \clk$ and by Markov's inequality, this implies that  \confeqn{\Pr\left[counter > \frac{350k}{\oeps^{\hspace{1pt}3}}\right]< \oeps\;. }
%Hence, the probability that the algorithm returns \eInsmph{fail} at Step~\ref{step:counter} is at most $\oeps.$

In this case, since $\onk> \clk$,
% our algorithm does not choose enough samples for concentration. Thus, we
the % sample sizes used by our algorithm are
setting of $r$ in the procedure \link{Is-\popular} is not sufficiently large to ensure that
%\link{Is-\popular}
the procedure is accurate, and that the partition $\Pip$
is appropriate (with high probability).
% (as defined in Subsection~\ref{subsubsec:onk-good}).
% Furthermore,
Similarly, $\alpha_{\Pip}(S)$ might not be close to its expected value.
% we make use of the following property of  $S$ and of Markov's inequality.
Therefore, we only use an upper bound on the expected value of $\alpha_{\Pip}(S)$, as explained next.

Consider the following random variables. For a multiset $S$ and for each $i \in [q]$
let $\chi_i(S)$  denote the random variable $\chi_i$ conditioned on $S$ (where $q$ and $\chi_i$ are as defined in Step~\ref{step:q_loop} of the algorithm).
Let $Y_i(S) = \frac{m(S)\cdot (\thres)^{k-2} }{(k-2)!(s/n)}\cdot \chi_i(S)$,
 and let $\hnk(S)$ denote the value of the random variable $\hnk$ (which is the output of the algorithm),
 conditioned on $S$.
    We use the notation \emph{S-c} to denote the random coins of the procedure \link{Sample-a-clique}. By the above discussion and by the setting of $\hnk$ in Step~\ref{step:hnk},  if $T$ is degrees-typical, then
\begin{align*}
\EX\left[\hnk\right] &= \EX_{S}\left[  \EX_{\emph{S-c}}\left[\hnk(S)\right] \right]
\\ &=
\EX_{S}\left[ \EX_{\emph{S-c}}\left[
\frac{m(S)\cdot (\thres)^{k-2}}{(k-2)!\cdot \frac{s}{n}} \cdot \frac{1}{q}\sum\limits_{i=1}^{q}\chi_i(S)
\right] \right]
\\ &=
\EX_{S}\left[ \EX_{\emph{S-c}}\left[ \frac{1}{q} \sum_{i=1}^q Y_i(S)\right] \right]
\\&=\EX_{S}\left[ \EX_{\emph{S-c}}\left[ Y_1(S)\right] \right] \numberthis \label{eqn:exp_hnk} \;.
\end{align*}
%By the setting of the $Y_i(S)$ variables, if $T$ is degrees-typical,
The last equality holds simply because all the $Y_i(S)$ variables are equally distributed (for each fixed $S$).
We note that $q$ also depends on $S$ (since it is a function of $m(S)$), but this does not effect our analysis, and hence this dependence is not explicit.
As in the analysis of the case that $\onk\in\krange$, since for each $k$-clique assigned to a vertex $u\in S$, the number of corresponding $k$-tuples in $\mC(S)$ is $(k-2)!$, we have that if $T$ is degrees-typical, then
\[\Pr[\chi_i(S)=1] \leq (1+\oeps) \cdot \frac{\alpha_{\Pip}(S)\cdot (k-2)!}{m(S)\cdot (\thres)^{k-2}}\;.
\]
%where the last inequality is by Equation~\eqref{eqn:exp_chi_bad} and by the definition of the $Y_i(S)$ variables.
By the definition of the $Y_i(S)$ variables and Equation~\eqref{eqn:exp_hnk}, it follows that
\[ \EX\left[\hnk \right] \leq  \EX_{S}\left[(1+\oeps)\cdot \frac{\alpha_{\Pip}(S) }{s/n} \right]
= (1+\oeps)\cdot\frac{n}{s} \cdot \EX_{S}\left[\alpha_{\Pip}(S)\right] \;.\]

By Claim~\ref{clm:S-typical}, for any partition $P$, $\EX[\ap(S)]\leq \frac{s}{n}\cdot \clk$. Therefore, if $T$ is degrees-typical,  then $\EX\left[\hnk \right]\leq (1+\oeps)\clk \;.$
%for every $i=1,\ldots,q$ it holds that
%\[\EX[\chi_i] \leq (1 + \oeps) \cdot \frac{\EX[\ap(S)]\cdot (k-2)!}{m(S)\cdot (\thres)^{k-2}} \leq (1+\oeps)\cdot \frac{\frac{s}{n}\cdot \clk\cdot (k-2)!}{m(S)\cdot (\thres)^{k-2}}
%\;.\]
Finally, by Markov's inequality (and recalling that $\oeps = \eps/5$),
%\[\Pr\left[\frac{1}{q}\sum_{i=1}^q \chi_i > (1 +\eps/2)\cdot \frac{(1+\oeps)\cdot\frac{s}{n}\cdot \clk\cdot (k-2)!}{m(S)\cdot (\thres)^{k-2}}\right]< 1-\eps/4,\]
%implying that
\[  %\Pr\left[\hnk > (1+\eps)\clk\right] \leq
\Pr\left[\hnk > (1+\eps/2)(1+\oeps)\clk\right]< 1-\eps/2\;. \]
As noted  in Subsection~\ref{subsec:main-alg}, we can assume that $\oeps \geq 1/\om^{k/2}$.
Since $T$ is not degrees-typical with probability at most $\min\{1/\om^{k/2}, \odelta \}$,
by taking a union bound and by the setting of $\oeps=\eps/5$, it holds that with probability at least $\eps/2-\setgamma >\eps/4$, the algorithm returns a value $\hnk$ such that $\hnk\leq (1+\eps/2)(1+\oeps)\clk\leq(1+\eps)\clk$.

%		\frac{n\cdot \log(1/\oeps)}{\oeps^{\hspace{1pt}2+{1/k}}\cdot  \onk^{1/k}} +
\subsubsection{The expected query complexity and running time}
\sloppy
%It remains to bound the expected running time of the algorithm.
By Lemma~\ref{lem:T-typical} and the assumption that $\onk \leq \om^{k/2}$, the invocation of the procedure \link{Sample-degrees-typical} takes $O\left(
\frac{n}{\onk^{1/k}}\cdot \frac{k^2\cdot \log(n/\gamma)\cdot\log(1/\gamma)}{\oeps^2 }\right)$
 time, for $\gamma=\Theta\left(\min\{1/\om^{k/2}, 1/\delta\}\right)$.
The sampling of  $S$ and the computation of $m(S)$ and $D(S)$ take time
$O\left(s\right) =O\left(\frac{k \cdot n}{\onk^{1/k}} \cdot \frac{\ln(1/\odelta)}{\oeps^{\hspace{1pt}2+1/k}} \right)$.
By Lemma~\ref{lem:sample-clique}, each invocation of \link{Sample-a-clique} in Step~\ref{step:sample-clique} takes time $O(k^2)$, and  Step~\ref{step:chi_i} takes constant time. Therefore, excluding the invocations of the \textsf{Is-\popular} procedure,
the running time of the for loop in Step~\ref{step:q_loop} is $O\left(k^2 \cdot q\right)$.
%=O\left(\frac{m(S)\cdot (\thres)^{k-2}}{\onk\cdot ({s}/{n})}\cdot \frac{(k \log n+k^2)\ln(1/\odelta)}{(k-2)! \cdot \oeps^{\hspace{1pt}2}}\right)$.
Since $\EX[m(S)]=\frac{s}{n}\cdot m $ and by the setting of $q$, it follows that the expected running time of the for loop is $O\left(\frac{ m \cdot {\om}^{(k-2)/2}}{\onk} \cdot
\frac{k^2 \cdot \ln(1/\odelta)}{(k-2)! \cdot \oeps^{\hspace{1pt}2}} \right)$.
It remains to bound the running time resulting from the invocations of the procedure \textsf{Is-\popular}.

%%%%%%%%%%%%%%%%%%%%%%%%%%%%%%
We first bound the expected number of invocations of \link{Is-\popular}  when the multiset $T$ is degrees-typical.
Let $I$ denote the number of invocations of  \link{Is-\popular}.
Similarly to the analysis of the case $\onk> \clk$, let $z_i(S)$
be a $0/1$ random variable that is defined as follows: $z_i(S)=1$ if (and only if) the procedure \link{Sample-a-clique} returned a clique in the $i\th$ step of the for loop in Step~\ref{step:sample-clique} of the algorithm, conditioned on the set $S$.
As before, let \textsf{S-c} denote the random coins of  \link{Sample-a-clique}.
Here it is actually relevant that $q$ depends on $S$, and therefore we use the notation $q(S)$.
By the definition of $I$,
\ifnum\stoc=1
\begin{align*}
\EX[I] &=  \EX_{S}\left[  \EX_{\emph{S-c}}\left[ \sum_{i=1}^{q(S)} z_i(S) \right] \right] \\
&= \EX_{S}\left[  q(S) \cdot \EX_{\emph{S-c}}\left[  z_1(S) \right] \right] \\ &
= \EX_{S}\left[  \EX_{\emph{S-c}}\left[  q(S) \cdot z_1(S) \right] \right] \;.
\end{align*}
\else
\[\EX[I] =  \EX_{S}\left[  \EX_{\emph{S-c}}\left[ \sum_{i=1}^{q(S)} z_i(S) \right] \right] = \EX_{S}\left[  q(S) \cdot \EX_{\emph{S-c}}\left[  z_1(S) \right] \right] = \EX_{S}\left[  \EX_{\emph{S-c}}\left[  q(S) \cdot z_1(S) \right] \right] \;.
\]
\fi
Recall that if $T$ is degrees-typical, then by Lemma~\ref{lem:sample-clique}, an invocation of  \link{Sample-a-clique}  returns each $k$-tuple in $\mC(S)$ with probability in $(1\pm\oeps)\cdot \frac{1}{m(S)\cdot (\thres)^{k-2}}$.
Hence,
\[\Pr_{\;\textsf{S-c}}[z_i(S)=1] \leq (1+\oeps) \frac{|\mC(S)|}{m(S)\cdot (\thres)^{k-2}}\;.
\]
By the setting of $q(S)$,
\[  \EX_{\emph{S-c}}[q(S) \cdot z_1(S) ] \leq \frac{(1+\oeps)|\mC(S)|}{(1-\oeps)^3 \cdot (k-2)!  \cdot \onk \cdot \frac{s}{n} }\cdot \frac{10\ln(1/\odelta)}{\oeps^{\hspace{1pt}2}} \;.
\]
Since $\EX[|\mC(S)|] = \frac{s}{n}\cdot (k-2)!\cdot k \cdot \clk,$ it follows that
$\EX[I]\leq \frac{\clk}{\onk} \cdot \frac{50k\ln(1/\odelta)}{\oeps^{\hspace{1pt}2}}  \;.$
Therefore, by Lemma~\ref{lem:is-popular}, if $T$ is degrees-typical, then the expected running time resulting from all of the invocations of the \link{Is-\popular} procedure is
$O\left(\frac{\om^{k/2}}{\onk} \cdot  \frac{\clk}{\onk}\cdot\frac{k^2 \cdot 2^k\cdot \log(n/\odelta)\log(1/\odelta)}{(k-2)!\cdot\oeps^{2-1/k}}\right).$

If $T$ is not degrees-typical then we can no longer upper bound the success probability of  \link{Sample-a-clique}, implying that the number of invocations of \link{Is-\popular} can be $\Theta(q)$. However, since $T$ is not degrees typical with probability at most $\frac{1}{\om^{k/2}}$, this does not affect the expected running time resulting from the invocations of  \link{Is-\popular}.
The remaining steps take constant time, and therefore the total running time of the algorithm is
\sloppy
\ifnum\stoc=1
\begin{align*}
& O\Bigg(\frac{n}{\onk^{1/k}}\cdot \frac{k^2\cdot \log(n/\gamma)\cdot\log(1/\gamma)}{\eps^{2+1/k} }  \\ &  \;\;\;\; + \;\;\frac{\min\{m,\om\}\cdot \om^{(k-2)/2}}{\onk} \cdot \frac{\clk}{\onk}\cdot\frac{k^2\cdot 2^k\cdot \log(n/\odelta)\log(1/\odelta)}{(k-2)!\cdot\oeps^{\hspace{1pt}2-1/k}} \Bigg)
\end{align*}
\else
\[O\Bigg(\frac{n}{\onk^{1/k}}\cdot \frac{k^2\cdot \log(n/\gamma)\cdot\log(1/\gamma)}{\eps^{2+1/k} }   + \frac{\min\{m,\om\}\cdot \om^{(k-2)/2}}{\onk} \cdot \frac{\clk}{\onk}\cdot\frac{k^2\cdot 2^k\cdot \log(n/\odelta)\log(1/\odelta)}{(k-2)!\cdot\oeps^{\hspace{1pt}2-1/k}} \Bigg)\]
\fi

which is
\[
O\left(\frac{n}{\onk^{1/k}}+ \frac{\max\{m,\om\}\cdot \om^{(k-2)/2}}{\onk} \cdot \frac{\clk}{\onk}\right)
\cdot\poly(\log n,1/\eps,\log(1/\delta),k), \]
since $\gamma=\Theta\left(\min\{1/\om^k, \delta\}\right)$.

Finally, as discussed in the beginning of Subsection~\ref{subsec:main-alg},
	if $q$ is greater than $\om$ then the algorithm may query beforehand for all the edges incident to $S$ and their neighbors, and if it views more than $\om$ edges then it can abort. Hence, the number of queries is always bounded by $O\left(\frac{n}{\onk^{1/k}}\right)\cdot \poly(\log n, 1/\eps,k)+\min\{m,\om\}$	\;.

\fi 

\ifnum\conf=0

	\subsection{The search algorithm} \label{subsec:search}
	In this subsection we describe an algorithm that returns an estimate of the number of $k$-cliques in a graph $G$ without prior knowledge on the number of edges or $k$-cliques in the graph.
We prove this by establishing a more general claim:

\begin{theorem} \label{thm:search}
	Let  $\invokeA$ be an algorithm that is given parameters $\oa,\eps, \delta$ and possibly an additional set of parameters denoted $\overrightarrow V$,
	for which the following holds.
%	. Further assume that $\mA$ returns a value $\ha$ such that the following holds.
\begin{enumerate}
	\item\label{it:oa-good} If $\oa \in [\va/4,\va]$, then with probability at least $1-\delta$, $\mA$ returns a value $\ha$ such that
	$\ha\in(1\pm \eps)\va$.
	\item\label{it:oa-big} If $\oa > \va$, then $\mA$ returns a value $\ha$, such that with probability at least $\eps/4$, $\ha \leq (1+\eps)\va$.
	\item\label{it:exp-t-A} The expected running time of $\mA$, denoted $\ert{\invokeA}$,
 is monotonically non-increasing with $\oa$ and furthermore, if $\oa < \va$, then
$\ert{\invokeA} \leq \ert{\mA\left(\va,\eps,\delta, \overrightarrow{V}\right)}\cdot (\va/\oa)^\ell$
for some constant $\ell > 0$.
%  is linear in $1/\oa^{\ell}$.\mnote{D: need to phrase a bit differently}
	\item\label{it:max-t-A} The maximal running time of $\mA$ is $D$.
\end{enumerate}
Then there exists an algorithm $\mA'$ that, when given an upper bound $B$ on $\va$, a parameter $\eps$ and a set of parameters $\overrightarrow{V}$, returns a value $X$ such that the following holds.
\begin{enumerate}
	\item\label{it:Ap-good}  $\mA'(B,\eps,\overrightarrow{V})$  returns a value $X$ such that $X \in (1\pm\eps)\va$ with probability at least $4/5$.
	\item\label{it:exp-t-Ap} The expected running time of $\mA'\left(B,\eps,\overrightarrow{V}\right)$ is $\ert{\mA\left(\va,\eps,\delta, \overrightarrow{V}\right)} \cdot \poly( \log B, 1/\eps, \ell)$ for $\delta =\Theta\left(\frac{\eps}{2^{\ell}(\ell + \log \log (B))}\right)\;.$
	\item\label{it:max-t-Ap} The maximal running time of $\mA'$ is $D \cdot \poly( \log B, 1/\eps, \ell)$.
\end{enumerate}
\end{theorem}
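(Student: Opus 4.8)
The plan is to run a geometric search over guesses for $\va$, invoking $\mA$ with geometrically decreasing guesses and stopping once $\mA$'s estimate ``catches up'' with the guess. We may assume $\eps\le 1/2$, since for larger $\eps$ a $(1\pm 1/2)$-approximation of $\va$ is automatically a $(1\pm\eps)$-approximation (and is obtained by running $\mA'$ with parameter $1/2$). Fix $\delta$ as in the statement, set a repetition count $t=\Theta\!\big((\ell+\log\log B)/\eps\big)$, and for $i=0,1,2,\dots$ put $B_i:=B/2^{i}$. At level $i$, invoke $\mA(B_i,\eps,\delta,\overrightarrow{V})$ independently $t$ times to get values $\ha_i^{(1)},\dots,\ha_i^{(t)}$, and set $m_i:=\min_j\ha_i^{(j)}$ (the minimum, so as to exploit the one-sided guarantee of hypothesis~\ref{it:oa-big}). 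Let $\hat\imath$ be the first level with $m_i\ge B_i$, and output $X:=m_{\hat\imath}$. To force termination and a worst-case bound, the loop is cut off once $B_i<1$, and -- to guard against the rare event that the search fails to stop near the right level -- also once either a fixed constant number of levels have been processed after the first level at which $m_i\ge B_i/100$, or the running time of the current invocation of $\mA$ exceeds a fixed $2^{\Theta(\ell)}$ multiple of the smallest invocation running time recorded so far; each such cut-off makes $\mA'$ report failure.

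\paragraph{Correctness.} Let $i^{*}$ be the first index with $B_{i^{*}}\le\va$; since $B_{i^{*}-1}=2B_{i^{*}}>\va$ we have $B_{i^{*}}\in(\va/2,\va]$, so both $B_{i^{*}}$ and $B_{i^{*}+1}$ lie in $[\va/4,\va]$, where hypothesis~\ref{it:oa-good} applies. Consider the events $E_1=$ ``at some level $i$ with $B_i>\va$, all $t$ invocations return values exceeding $(1+\eps)\va$'' and $E_2=$ ``some invocation at a level with $B_i\in[\va/4,\va]$ returns a value outside $(1\pm\eps)\va$''. By hypothesis~\ref{it:oa-big}, $\Pr[E_1]\le(\log_2 B)(1-\eps/4)^{t}$, and by hypothesis~\ref{it:oa-good}, $\Pr[E_2]\le O(t\delta)$; with $t,\delta$ as chosen both are at most $1/10$ (in fact $2^{-\Theta(\ell)}$ times something small, which is needed for the running time). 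On $\overline{E_1}\cap\overline{E_2}$: at every $i$ with $B_i>(1+\eps)\va$ some invocation returns at most $(1+\eps)\va<B_i$, so $m_i<B_i$ and the search does not stop there; in particular it does not stop before level $i^{*}-1$. At level $i^{*}+1$ all invocations are accurate, so $m_{i^{*}+1}\ge(1-\eps)\va\ge B_{i^{*}+1}$ (using $B_{i^{*}+1}\le\va/2$ and $\eps\le 1/2$), so the search has stopped by then. Hence $\hat\imath\in\{i^{*}-1,i^{*},i^{*}+1\}$. If $\hat\imath\in\{i^{*},i^{*}+1\}$ then $B_{\hat\imath}\in[\va/4,\va]$ and $X=m_{\hat\imath}\in(1\pm\eps)\va$ by $\overline{E_2}$; if $\hat\imath=i^{*}-1$ then $X=m_{\hat\imath}\ge B_{\hat\imath}>\va$ by the stopping rule while $m_{\hat\imath}\le(1+\eps)\va$ by $\overline{E_1}$, so again $X\in(1\pm\eps)\va$. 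This gives conclusion~\ref{it:Ap-good} with probability at least $4/5$.

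\paragraph{Running time.} On $\overline{E_1}\cap\overline{E_2}$ the loop runs $\le i^{*}+2=O(\log(B/\va))=O(\log B)$ levels, each costing $t$ invocations of $\mA$ at a guess $B_i\ge B_{\hat\imath}>\va/4$; by the monotonicity of $\ert{\cdot}$ in the guess and the bound $\ert{\mA(\oa,\eps,\delta,\overrightarrow{V})}\le(\va/\oa)^{\ell}\,\ert{\mA(\va,\eps,\delta,\overrightarrow{V})}$ for $\oa<\va$ (hypothesis~\ref{it:exp-t-A}), each invocation costs at most $4^{\ell}\,\ert{\mA(\va,\eps,\delta,\overrightarrow{V})}$ in expectation, for a total of $\ert{\mA(\va,\eps,\delta,\overrightarrow{V})}\cdot 2^{O(\ell)}\cdot O(t\log B)=\ert{\mA(\va,\eps,\delta,\overrightarrow{V})}\cdot\poly(\log B,1/\eps,\ell)$ (absorbing $2^{O(\ell)}$ into $\poly(\cdot,\ell)$, or treating $\ell$ as a constant). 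On $E_1\cup E_2$, the running-time safeguard caps each invocation at $2^{\Theta(\ell)}$ times the smallest recorded invocation time, which is $O\!\big(2^{\Theta(\ell)}\,\ert{\mA(\va,\eps,\delta,\overrightarrow{V})}\big)$ with high probability (Markov over the $t$ invocations at level $0$, where $B_0\ge\va$ forces $\ert{\cdot}\le\ert{\mA(\va,\eps,\delta,\overrightarrow{V})}$, together with monotonicity); since $\Pr[E_1\cup E_2]$ is small enough to absorb this $2^{\Theta(\ell)}$ blow-up, the overall expected running time is still $\ert{\mA(\va,\eps,\delta,\overrightarrow{V})}\cdot\poly(\log B,1/\eps,\ell)$, which is conclusion~\ref{it:exp-t-Ap}. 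For conclusion~\ref{it:max-t-Ap}, each invocation runs for at most $D$ (hypothesis~\ref{it:max-t-A}) and the loop does at most $\log_2 B$ levels of $t$ invocations, so the maximal running time of $\mA'$ is $D\cdot\poly(\log B,1/\eps,\ell)$.

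\paragraph{Where the work is.} The probabilistic sandwich showing $X\in(1\pm\eps)\va$ is routine once the stopping rule is fixed; the delicate point is the running-time analysis on the (at-most-$1/5$) failure events, where, without a safeguard, the guesses $B_i$ could drop far below $\va$ and the factor $(\va/B_i)^{\ell}$ -- or even the behaviour of $\mA$, which is entirely unconstrained once $B_i<\va/4$ -- offers no a priori bound better than $D$ per level. Calibrating the cut-off so that it never triggers before the search has reached $[\va/4,\va]$ on the good event yet caps the cost at $2^{O(\ell)}\,\ert{\mA(\va,\eps,\delta,\overrightarrow{V})}$ on the bad event, while simultaneously choosing $t$ and $\delta$ small enough that the union bounds over the $O(\log B)$ levels close and the failure probability beats the $2^{\Theta(\ell)}$ blow-up, yet $\delta$ large enough that $\ert{\mA(\va,\eps,\delta,\overrightarrow{V})}$ stays within a $\poly(\log(1/\delta))=\poly(\ell,\log\log B)$ factor of its value at constant confidence, is precisely what pins down the stated value of $\delta$ and is the technical heart of the argument.
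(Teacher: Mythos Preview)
Your correctness argument is essentially right, but the expected-running-time analysis on the bad event has a real gap, and the device the paper uses to close it is exactly what you are missing. The paper does not run a single geometric descent; it runs a \emph{nested} search: an outer while loop halves a guess $\ta$, and for each $\ta$ an inner for loop re-runs \emph{all} guesses $\oa=B,B/2,\ldots,\ta$ (taking $r$ independent calls and the minimum, as you do). Thus every time $\ta$ drops one more level below the good range $[\va/4,\va]$, the inner loop again includes a good guess and gives a fresh, independent $1-\delta'$ chance to halt there. Hence $\Pr[\text{reach depth }z\text{ below the good range}]\le(\delta')^z$, and with $\delta'=1/(5\cdot 2^\ell)$ the expected overshoot cost $\sum_z(\delta')^z\cdot 2^{\ell z}\cdot\ert{\mA(\va,\ldots)}$ is a convergent geometric series. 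No running-time cut-off is needed.

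Your safeguard~2 (abort an invocation whose running time exceeds a $2^{\Theta(\ell)}$ multiple of the smallest recorded invocation time) is not sound as written. The hypotheses give no \emph{lower} bound on individual invocation times: $\mA$ at guess $B_0=B$ might return in one step, making the recorded baseline equal to $1$, after which invocations at the good levels---whose expected time can legitimately be $4^\ell\cdot\ert{\mA(\va,\ldots)}$---will trigger the cut-off and force $\mA'$ to fail even on $\overline{E_1}\cap\overline{E_2}$. Even if the baseline were reasonable, Markov only yields $\Pr[\text{one invocation}>C\cdot\text{its expectation}]\le 1/C$, so over the $\Theta(t)$ invocations at the good levels the cut-off fires with probability bounded away from~$0$. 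Safeguard~1 alone does not suffice either: on $E_2$ the search can overshoot into $B_i<\va/4$, where $\mA$ is completely unconstrained, so nothing forces $m_i\ge B_i/100$; the search may then descend to $B_i\approx 1$, incurring expected cost of order $\va^{\ell}\cdot\ert{\mA(\va,\ldots)}$ at the bottom, and $\Pr[E_2]=\Theta(2^{-\ell})$ is far too large to offset that. The paper's nested loop is precisely the mechanism that replaces these ad hoc safeguards with a clean geometric tail bound.
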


The algorithm $\mA'$ referred to by the theorem is provided in Figure~\ref{fig:search-alg}. We note that the algorithm and the proof of \dcom{Theorem~\ref{thm:search}} are a  direct  generalization of the search algorithm \textsf{Estimate} and the proof of Theorem 12 in~\cite{ELRS}. However, this generalization may be useful as a ``black box'' in future work. We assume that $\eps \leq 1/4$, since otherwise we can simply set $\eps$ to $1/4$.
 % holds for every algorithm that when given a crude estimate of the requested value $\va$  returns a $(1\pm\eps)$-estimate with high constant probability and in time linear in $1/\va^{\ell}$, with only $\poly(\log n ,1/\eps)$ multiplicative overhead in the expected running time, we bring the full details.

\confFigure{
	{\bf $\bm{\mA'}(\mA,B, \eps, \overrightarrow{V})$} 
			\smallskip
			\begin{compactenum}	
				\item Let $\ta=B$.
				\item Let $\delta'={1}/{(5 \cdot  2^\ell)}.$
				\item While $\ta \geq 1$ do: \label{step:while}
				\begin{compactenum}
					\item For $\oa=B, B/2, \ldots, \ta$ do: \label{step:search-for}
					\begin{compactenum}
						\item Let $r= (4/\eps)\cdot \ln(2\log^2(B)/\delta')$ and let $\delta=\delta'/(2r)$. \label{step:for}
%						\item For $j=1$ to $r$ do:
%						\begin{compactenum}
%							\item Invoke $\inovkeAp$ and let $X_{\oa, j}$ be the reutned value.
%						\end{compactenum}			
						\item Let $X_{\oa}$ be the minimum value returned over $r$ invocations of $\invokeAp$. \label{step:min}
						\item If $X_{\oa} \geq (1+\eps)\oa$ then \textbf{return} $\oa$.
					\end{compactenum}
					\item Let $\ta=\ta/2$.
				\end{compactenum}
				\item \textbf{Return} \emph{fail}.
			\end{compactenum}
	}
{The search algorithm. \label{fig:search-alg}}

The following definition will be useful in the proof of the theorem.

\begin{definition} \label{def:good-estimate}
We say that a value $X$ is a \textsf{good estimate} of $\va$ if $X\in(1\pm \eps)\va$.
\end{definition}

\begin{proofof}{Theorem~\ref{thm:search}}
%As can be seen in Figure~\ref{fig:search-alg}, our algorithm has both an outer while loop running over decreasing values of $\ta$, starting from the given upper bound $B$ and going to $1$,
%and an inner for loop also running over decreasing values of $\oa$, starting from $B$ and
Our search algorithm has two nested loops running with decreasing values of ``guesses'' for the value of $\va$. The outer loop runs with $\ta$, which is our current guess for the value of $\va$, and
the purpose of the inner for loop is to enhance the success probability of the algorithm when $\ta$ ''passes'' the good guess of $\va$ and runs with values smaller than $\va/4$. We provide the full details subsequently,
 and start by considering only the outer while loop. Namely, imagine for now that  instead of the for loop in Step~\ref{step:search-for}, we have the command ``Let $\oa=\ta$'' and the rest of the algorithm is as described
 in Figure~\ref{fig:search-alg}.

First consider iterations of the while loop  for which $\ta > \va $. By Item~\ref{it:oa-big} in the properties of Algorithm $\mA$, for values $\ta$ such that $\ta>\va$, the probability that $\invokeA$ returns a value such that
 $\ha>(1+\eps)\va$ is
at most $1-\eps/4$. Hence, the probability that the minimum value returned over
 $r$ invocations, that is, $X_{\ta}$, satisfies $X_{\ta} > (1+\eps)\va$, is at most $(1-\eps/4)^r = (1-\eps/4)^{(4/\eps)\cdot \ln(2\log^2(B)/\delta')}< \frac{\delta'}{2\log^2(B)}$. It follows that
 for each value $\ta>\va$,
 with probability at least $1-\frac{\delta'}{2\log^2(B)}$, %for values $\ta>\va$,
\[ X_{\ta}<(1+\eps) \va < (1+\eps)\ta. \]
This implies that for each value $\ta>\va$, with probability at least $1-\frac{\delta'}{2\log^2(B)}$, the algorithm $\mA'$ will %not return a value and will
continue to run with $\ta=\ta/2$.

Now consider values of $\ta$ such that $\ta\in[\va/4,\va]$.
By Item~\ref{it:oa-good} in the properties of Algorithm $\mA$,  if $\ta\in[\va/4,\va]$, then with probability at least $1-\delta'$, $\invokeA$ returns a value $\ha$, such that  $\ha \in (1\pm\eps)\va$. By the setting $\delta = \delta'/(2r)$, and by taking a union bound, with probability at least $1-\delta'/2$, all the $r$ invocations return a good estimate of $\va$, implying that
 $X_{\ta} \in (1\pm \eps)\va$ \dcom{(i.e., $X_{\ta}$ is a good estimate of $\va$}.
 % Therefore, if $X_{\ta}$ is returned, then with probability at least $1-\delta'/2$ it is a good estimate of $\va$. Furthermore, once we reach $\ta$ such that $\ta\in[\va/4,\va/2]$, by the above analysis, with probability at least $1-\delta'/2$,
 \dcom{Note that in such a case,}
\[X_{\ta}\geq (1-\eps)\va \geq (1-\eps)\cdot 2\ta \geq (1+\eps)\ta . \]
(Here we have used the assumption that $\eps\leq 1/4$.)
Hence, once $\mA'$ reaches a value $\ta\in[\va/4,\va/2]$, with probability at least $1-\delta'/2$,
it %will
returns a value $X_{\ta}$ that is a good estimate of $\va$.

Finally, if $\mA'$ reaches values $\oa<  \va/4$, we no longer have a guarantee on the probability that $\mA$ returns a value $\ha$ or on the quality of the estimate $\ha$. Hence, we also have an inner for loop so that whenever we halve the guess $\ta$ we first run with all the values $B, \ldots,\ta.$ This ensures that even if the algorithm did not return a value when running with a good guess $\ta \in [\va/4, \va/2]$, we can still bound the probability that it will continue to run with decreasing values of $\ta$.

We now consider the original algorithm with both the outer while loop and the inner for loop as described in Figure~\ref{fig:search-alg}.
There are at most $\log(B)$ invocations of the while loop with values  $\ta>\va$, implying that
there are at most $\log^2(B)$ invocations of the  for loop in Step~\ref{step:for} with a value $\oa>\va$. Therefore, by the above analysis, the probability that $\mA'$  returns a value that is not a good estimate of $\va$ in these invocations is at most $\delta'/2$. Hence, with probability at least $1-\delta'/2$ we will reach an invocation in which $\ta\in[\va/4,\va/2]$ and $\oa\in[\va/4,\va/2]$, for which with probability at least $1-\delta'/2$ the algorithm $\mA$ returns a value $X_{\oa}$ that  is a good estimate of $\va$. By taking a union bound and by the setting of $\delta'$, the algorithm $\mA'$ returns a value that is a good estimate of $\va$ with probability at least $1-\delta'>4/5$.

We turn to analyze the running time of $\mA'$. By Item~\ref{it:exp-t-A},
for values of $\oa$ such that $\oa\geq \va$, the expected running time of $\invokeA$ is $\ert{\mA\left(\va,\eps,\delta, \overrightarrow{V}\right)}$.
By the above analysis,
once $\mA'$ reaches a value $\ta< \va/4$, since we run with all values $\oa=B, \ldots, \ta$, %then
the probability that the algorithm will halve the guess to $\ta=\ta/2$ is at most $\delta'$. Hence, the probability of invoking $\mA$ with a value $\oa = \va/2^z$ is at most $(\delta')^z$.
By Item~\ref{it:exp-t-A} in the properties of Algorithm $\mA$, when running with values $\oa<\va$, it holds that
$\ert{\invokeA} \leq \ert{\mA\left(\va,\eps,\delta, \overrightarrow{V}\right)}\cdot (\va/\oa)^\ell$.
%Since the running time of $\mA$ is linearly dependent in $1/\va^{\ell}$,  \tcom{by the setting of $\delta<{1}/{2^{\ell}}$},
Therefore, the expected running time of $\mA'$ is upper bounded by
\ifnum\stoc=1
\begin{align*}
\log^2(B) &\cdot r \cdot \ert{\mA\left(\va,\eps,\delta, \overrightarrow{V}\right)}
\\ \;\;\;\;& +
 \sum_{z=1}^{\log B}(\delta')^z \cdot  2^{\ell \cdot z} \cdot \ert{\mA\left(\va,\eps,\delta, \overrightarrow{V}\right)}  \\
&
\;\;\;\;\;\;\;
%\;\;\;\;\;\;\;\;\;\;\;
%\;\;\;\;\;\;\;\;\;\;\;\;\;\;\;\;\;\;\;\;\;\;\;\;\; 
= \ert{\mA\left(\va,\eps,\delta, \overrightarrow{V}\right)}   \cdot \poly(\log B, 1/\eps, \ell)\;,
\end{align*}
\else
\begin{align*}
\log^2(B) &\cdot r \cdot \ert{\mA\left(\va,\eps,\delta, \overrightarrow{V}\right)} +
\sum_{z=1}^{\log B}(\delta')^z \cdot  2^{\ell \cdot z} \cdot \ert{\mA\left(\va,\eps,\delta, \overrightarrow{V}\right)} 
\\ &= \ert{\mA\left(\va,\eps,\delta, \overrightarrow{V}\right)}   \cdot \poly(\log B, 1/\eps, \ell)\;,
\end{align*}
\fi
where the first term is due to the invocations in which $\oa\geq \va$ and the second is due to invocations in which $\oa<\va$, and the equality is due to the setting of $\delta'< 1/2^{\ell}$.
Therefore, Item~\ref{it:exp-t-Ap} of the theorem holds. The proof of Item~\ref{it:max-t-Ap} is immediate.
\end{proofof}

\medskip

The following is a corollary of Theorem~\ref{thm:correct} and Theorem~\ref{thm:search}
and is a restatement of Theorem~\ref{thm:main}.
	\begin{corollary}\label{clm:markov} \sloppy
%		The algorithm $\mA'(\textsf{Approximate-cliques}, n^k , \eps, \delta)$ returns a value
There exists an algorithm that, given $n$, $k\geq 3$, and query access to a graph $G$, returns a value
		$X$ such that $X \in (1\pm\eps)\clk$ with probability at least $2/3$. The expected
		running time of the algorithm is  $O\left({\frac{n}{\clk^{1/k}} + \frac{m^{k/2}}{\clk \cdot (k-2)!}  }\right)\cdot \poly(\log n,1/\eps,k) $,
		and its expected query complexity is $O\left({\frac{n}{\clk^{1/k}} + \min\left\{\frac{m^{k/2}}{\clk \cdot (k-2)!}\dcom{,m} \right\} }\right)\cdot \poly(\log n,1/\eps,k) \;.$
		
%		, and the \dcom{expected} running time is $O\left({\frac{n}{\clk^{1/k}} + \frac{m^{k/2}}{\clk \cdot (k-2)!}  }\right)\cdot \poly(\log n,1/\eps,k) $.
	\end{corollary}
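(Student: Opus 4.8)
The plan is to derive Corollary~\ref{clm:markov} (equivalently Theorem~\ref{thm:main}) by feeding the algorithm \link{Approximate-cliques} into the generic geometric-search reduction of Theorem~\ref{thm:search}, after first replacing the assumed estimate $\om$ of $m$ by a genuine sublinear-time computation. For the latter, I would run the average-degree estimator of Goldreich and Ron~\cite{GR08} (or Feige~\cite{feige2006sums}) with accuracy parameter $\Theta(\oeps)$ and constant failure probability; after a standard rescaling this produces, in expected time $O(n/\sqrt m)\cdot\poly(\log n,1/\eps)$, a value $\om\in\mrange$ with probability at least $9/10$. We condition on this event, fold its failure probability into the final bound, and for the rest of the argument regard $\link{Approximate-cliques}(n,k,\om,\cdot,\eps,\cdot)$ as a family of algorithms $\mA(\oa,\eps,\delta,\overrightarrow V)$ indexed by the guess $\oa=\onk$, with $\overrightarrow V=(n,k,\om)$ and $\va=\clk$.

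Next I would verify the four hypotheses of Theorem~\ref{thm:search}. Hypothesis~\ref{it:oa-good} (that $\oa\in[\va/4,\va]$ yields a $(1\pm\eps)$-estimate with probability $\ge 1-\delta$) is exactly the first item of Theorem~\ref{thm:correct}, using $\om\in\mrange$; hypothesis~\ref{it:oa-big} (the one-sided guarantee for $\oa>\va$) is the second item. Hypothesis~\ref{it:max-t-A} holds because, as noted after Figure~\ref{fig:main-alg}, the algorithm caps its work — it never reads more than $\min\{m,\om\}$ edges — so its running time over all guesses $\oa\ge 1$ is bounded by a fixed polynomial $D=\poly(n,m)$. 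The substantive point is hypothesis~\ref{it:exp-t-A}: the third item of Theorem~\ref{thm:correct} gives, when $\onk\le\om^{k/2}$,
\[
\ert{\invokeA}=O\!\left(\frac{n}{\onk^{1/k}}+\frac{\max\{m,\om\}\cdot\om^{(k-2)/2}}{\onk}\cdot\frac{\clk}{\onk}\right)\cdot\poly(\log n,1/\eps,\log(1/\delta),k).
\]
Both summands are non-increasing in $\onk$ (the first scales as $\onk^{-1/k}$, the second as $\onk^{-2}$), and for $\onk<\clk$ each is at most its value at $\onk=\clk$ times $(\clk/\onk)^2$; hence hypothesis~\ref{it:exp-t-A} holds with $\ell=2$. (When $\om\in\mrange$ we have $\clk<m^{k/2}\le(\om/(1-\oeps))^{k/2}$ by Claim~\ref{clm:u.b. nk}, so the regime $\onk\le\om^{k/2}$ covers every guess that matters; and if $\eps\le1/\om^{k/2}$ we instead invoke the exact enumeration algorithm of~\cite{ChNi85} directly, as discussed in Subsection~\ref{subsec:main-alg}.)

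Finally I would apply Theorem~\ref{thm:search} with the trivial upper bound $B=n^k\ge\clk$, obtaining an algorithm $\mA'$ that returns $X\in(1\pm\eps)\clk$ with probability at least $4/5$ in expected time $\ert{\mA(\clk,\eps,\delta,\overrightarrow V)}\cdot\poly(\log B,1/\eps,\ell)$, where $\delta=\Theta(\eps/(2^\ell(\ell+\log\log B)))$. Substituting $\onk=\clk$ and $\om=\Theta(m)$ into the displayed bound — and restoring the $2^k/(k-2)!$ factor suppressed in Theorem~\ref{thm:correct} — yields expected running time $O\!\left(\frac{n}{\clk^{1/k}}+\frac{m^{k/2}}{\clk\cdot(k-2)!}\right)\cdot\poly(\log n,1/\eps,k)$, since $\log B=k\log n$ and $\ell=2$. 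For the query complexity, the bound from the end of Subsection~\ref{subsubsec:onk-large} replaces the second summand by $\min\{m^{k/2}/(\clk(k-2)!),\,m\}$, because the capped algorithm makes at most $\min\{m,\om\}$ queries, and the reduction multiplies this by the same $\poly$ factor. Combining the $4/5$ success probability of $\mA'$ with the $9/10$ success probability of the $\om$-estimation by a union bound gives overall success probability at least $2/3$, and the $O(n/\sqrt m)\cdot\poly$ cost of estimating $\om$ is dominated by $O(n/\clk^{1/k})\cdot\poly$ since $\clk\le m^{k/2}$. This is precisely the statement of Corollary~\ref{clm:markov}.

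The main obstacle is the careful check of hypothesis~\ref{it:exp-t-A}: one must confirm that the $\clk/\onk$ factors in the running-time expression of Theorem~\ref{thm:correct}(\ref{item:thm-running}) — which exceed $1$ exactly when the guess underestimates $\clk$ — amount only to the polynomial blow-up $(\clk/\onk)^\ell$ that the search reduction tolerates, and that no term carries a super-polynomial dependence on $1/\onk$. Everything else is bookkeeping with statements already established in the excerpt.
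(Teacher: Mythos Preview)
Your approach is essentially the same as the paper's: estimate $m$ first, then plug \link{Approximate-cliques} into the search reduction of Theorem~\ref{thm:search} with $\ell=2$, and verify the four hypotheses via Theorem~\ref{thm:correct}. The verification of hypotheses~\ref{it:oa-good}--\ref{it:max-t-A} and the final bookkeeping are correct.

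There is, however, one genuine gap. You estimate $\om$ with \emph{constant} failure probability ($9/10$ success), and then ``condition on this event'' for the rest of the analysis. That suffices for the correctness bound ($2/3$), but it does \emph{not} suffice for the unconditional expected running time claimed in the corollary. On the complementary event (probability $1/10$) where $\om\notin\mrange$, items~1 and~2 of Theorem~\ref{thm:correct} are no longer guaranteed, so the search $\mA'$ may fail to stop at the correct guess and instead descend to $\onk=\Theta(1)$; the running time in that regime is governed by the worst-case cap $D$, which is of order $n+m^{k/2}$ (times $\poly$). A $1/10$ chance of paying $\Theta(m^{k/2})$ contributes $\Theta(m^{k/2})$ to the expectation, which is far larger than the target $m^{k/2}/\clk$ whenever $\clk$ is large.

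The paper handles exactly this point: it boosts the $\om$-estimation to succeed with probability at least $1-\min\{1/n,1/m^{k/2}\}$ (by taking the median over $\Theta(\log(n^{2k}))$ independent runs of~\cite{GR08}), and then argues explicitly that the bad-$\om$ event, having probability at most $\min\{1/n,1/m^{k/2}\}$ while the worst-case cost is $O(n+m^{k/2})\cdot\poly$, contributes only $\poly(\log n,1/\eps,k)$ to the expectation. Replacing your ``constant failure probability'' by this boosted one (and adding that one sentence of accounting) closes the gap; the rest of your argument goes through unchanged.
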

% \mnote{D: removed $k=2$ from proof, but can add comment about this case right after the corollary}
	\begin{proof}
%		Consider the following algorithm for $k\geq 3$.
\sloppy
We start by obtaining an estimate $\om$ of $m$ such that with probability at least $1-\min\{1/n,1/m^{k/2}\}$ it holds that $\om\in[(1-\eps/4)m,m]$. This can be done by invoking the algorithm of Goldreich and Ron \cite{GR08} for estimating the number of edges $\Theta(\log(n^{2k}))$ times and taking the median value returned. Next we invoke $\mA'(\textsf{Approximate-cliques}, B=\min\{n^k, \om^{k/2}\} , \eps,  \overrightarrow{V})$ with $\overrightarrow{V}=(n,k,\om)$ and return the value $X$ returned by $\mA'$.
% If $k=2$\mnote{D: I have second thoughts about this - can discuss} then we do not estimate the number of edges at the first step, as this is actually the parameter that should be approximated by \link{Approximate-clique}. Hence we skip the above first step and the search is done over the possible values of $m$ and we invoke $\mA'(\textsf{Approximate-cliques}, B=n^2 , \eps, \overrightarrow{V})$ with $\overrightarrow{V}=(n,k=2)$.
		
		By Theorem~\ref{thm:correct},  if $\om\in[(1-\eps/5)m, m]$, then %Algorithm
\link{Approximate-cliques} satisfies the conditions required from Algorithm $\mA$ in Theorem~\ref{thm:search}, with  $\ell=2$ and $D=n+m^{k/2}$ (since this is the maximal running time when $\onk=1$). Hence, by Theorem~\ref{thm:search} and by the union bound,  $\mA'$ returns a value $X$ such that with probability at least $4/5-1/m^k>2/3$, it holds that $X \in (1\pm\eps)\clk$.
		
		By~\cite{GR08}, the first step of approximating the number of edges $\om$ with success probability at least $1-\min\{1/n,1/m^k \}$ takes time $O\left(\frac{n}{\sqrt m}\right)\cdot \poly(\log n,1/\eps,k)$ which by Claim~\ref{clm:u.b. nk} is at most $O\left(\frac{n}{\clk^{1/k}}\right)\cdot \poly(\log n,1/\eps,k)$.
		When $\om \in [1-(\eps/5)m, m]$, by Item~\ref{item:thm-running} in Theorem~\ref{thm:correct}, and by Item~\ref{it:exp-t-Ap} in the properties of Algorithm $\mA'$,	the expected running time 
% \mnote{D: need to talk both of expected QC and expected RT} o
f the algorithm is
		\[O\left({\frac{n}{\clk^{1/k}}+\frac{m^{k/2}}{\clk} }\right) \cdot \poly\left(\log n, 1/\eps,k\right) \;.\]
		Since  $\om \notin [(1-\eps/5)m,m]$ with probability at most $\min\{1/n,1/m^{k/2}\}$ and
		the maximal running time of \link{Approximate-cliques} is  at most $n+m^{k/2}$ \tcom{up to $\poly(\log n,1/\eps,k)$ factors}, this event does not effect the expected query complexity.
		
		By Item~\ref{item:thm-running} in Theorem~\ref{thm:correct},
		Algorithm \link{Approximate-cliques} never performs more than
		$O\left(\frac{n}{\onk^{1/k}}\right)\cdot\poly(\log n,1/\eps,\log(1/\delta),k) + \min\{m,\om\}$  queries. It follows that the expected query complexity is
\[O\left( \frac{n}{\clk^{1/k}}+\min\left\{\frac{m^{k/2}}{\clk} \dcom{,m}  \right\}  \right) \cdot \poly\left(\log n, 1/\eps,k\right) \;,\]
		as claimed.
	\end{proof}

\fi

\ifnum\conf=0	
	\bibliographystyle{alphaurl}
\else
	\bibliographystyle{alphaurl}
\fi
	\bibliography{k_cliques_bib}			
	
%%%%%%%%%%%%%%%%%%%%%%%%%%%%%%%%%%%%%%%%%%%%%%%%%%%%%%%%%%%%%
\def\accompanying{0}

\ifnum\accompanying=1
\ifnum\conf=1

\def\conf{0}
\setcounter{equation}{0}
\setcounter{section}{0}
\setcounter{subsection}{0}
\setcounter{theorem}{0}
\setcounter{figure}{0}
\setcounter{footnote}{0}

\newpage

\begin{titlepage}
	
	\title{On Approximating the Number of $k$-cliques in Sublinear Time
		\ifnum\conf=1
		\\ {\small (Extended Abstract)}
		\else
		\\ {\small (Full Version)}
		\fi
	}
	
	\author{
		Talya Eden
		\thanks{Tel Aviv University, \textit{ talyaa01@gmail.com}. This research was partially supported
			by a grant from the Blavatnik fund. The author is grateful to the Azrieli Foundation for the award of an Azrieli Fellowship.}
		\and  Dana Ron
		\thanks{Tel Aviv University, \textit {danaron@tau.ac.il}.
			This research was partially supported
			by the Israel Science Foundation grant No. 671/13 and by a grant from the Blavatnik fund.}
		\and  C. Seshadhri
		\thanks{University of California, Santa Cruz, \textit{ sesh@ucsc.edu}}
	}

	\date{}
	\maketitle
	\begin{abstract}
		
	\end{abstract}
	
	\thispagestyle{empty}
	
\end{titlepage}

\setcounter{page}{1}

\input{lb}

\fi

\fi %end of \if\accompanying
%%%%%%%%%%%%%%%%%%%%%%%%%%%%%%%%%%%%%%%%%%%%%%%%%	
\end{document}